  \providecommand\BibTeX{{%
    \normalfont B\kern-0.5em{\scshape i\kern-0.25em b}\kern-0.8em\TeX}}}
\newlength\myindent
\newcommand{\eat}[1]{}
\newcommand{\goal}[1]{\textcolor{blue}{Goal: #1}\PackageWarning{Goal:}{#1!}}
\newcommand{\xh}[1]{\noindent{ \textcolor{orange}{\footnotesize[{\bf Xi}: #1]}}}
\newcommand{\miti}[1]{{#1}} 
\theoremstyle{plain}
\newtheorem{thm}{Theorem}[section]
\theoremstyle{definition}
\newtheorem{defn}{Definition}[section]
\newtheorem{example}{Example}[section]
\newtheorem{prop}{Proposition}[section]
\newtheorem{lemma}{Lemma}[section]
\renewcommand\footnotetextcopyrightpermission[1]{}
\newdimen{\algindent}
\algnewcommand\LeftComment[2]{%
\hspace{#1\algindent}$\blacktriangledown$ \eqparbox{COMMENT}{#2} \hfill %
 }
\newcommand\sysname{\textit{CacheDP}\xspace}
\newcommand\Attribute{\ensuremath{\mathcal{A}}}
\newcommand\Schema{\ensuremath{\mathcal{R}}}
\newcommand\Predicate{\ensuremath{\phi}}
\newcommand\TotalPrivacyBudget{\ensuremath{\mathcal{B}}}
\newcommand\ConsumedPrivacyBudget{\ensuremath{B}_c}
\newcommand\PrivacyBudgetPrecision{\ensuremath{\epsilon}_{\bot}}
\newcommand\LaplaceMechanismNew{\ensuremath{\mathcal{L}_b}}
\newcommand\StrategyRaw{\ensuremath{\mathbb{A}}}
\newcommand\StrategyFull{\ensuremath{\mathbf{A}}}
\newcommand\WorkloadRaw{\ensuremath{\mathbb{W}}}
\newcommand\WorkloadFull{\ensuremath{\mathbf{W}}}
\newcommand\DataVectorRaw{\ensuremath{\mathbb{x}}}
\newcommand\DataVector{\ensuremath{\mathbf{x}}}
\newcommand\BucketMatrix{\ensuremath{\mathbb{T}}}
\newcommand\WorkloadResponse{\ensuremath{z}}
\newcommand\WorkloadQuery{\ensuremath{Q}}
\newcommand\WorkloadMatrix{\WorkloadRaw}
\newcommand\BList{\ensuremath{\mathbf{b}}}
\newcommand\LooseNoiseParameter{\ensuremath{b_L}}
\newcommand\TightNoiseParameter{\ensuremath{b_T}}
\newcommand\PaidNoiseParameter{\ensuremath{b_{\StrategyMatrixPaid}}}
\newcommand\PaidEpsilon{\ensuremath{\epsilon_{\StrategyMatrixPaid}}}
\newcommand\CandidatePaidNoiseParameter{\ensuremath{b_{P0}}}
\newcommand\StrategyRawPaid{\mathbb{P}}
\newcommand\StrategyRawProactive{\Delta\StrategyRawPaid}
\newcommand\StrategyMatrixFree{\ensuremath{\mathbf{F}}}
\newcommand\StrategyMatrixPaid{\ensuremath{\mathbf{P}}}
\newcommand\StrategyMatrixGlobal{\StrategyRaw^*}
\newcommand\StrategyFullGlobal{\StrategyFull^*}
\newcommand\StrategyMatrixGlobalPerAttribute[1]{\ensuremath{\StrategyRaw{#1}^*}}
\newcommand{\noisev}{\ensuremath{\bm \eta}}
\newcommand\GroundTruthResponse{\ensuremath{y}}
\newcommand\NoisyResponse{\ensuremath{\tilde{\GroundTruthResponse}}}
\newcommand\NoisyResponseList{\ensuremath{\tilde{\mathbf{\GroundTruthResponse}}}}
\newcommand\SizeOfWorkload{\ensuremath{\ell}}
\newcommand\DomainSize{\ensuremath{n}}
\newcommand\DomainVector{\DataVector}
\newcommand{\NoisyWorkloadResponse}{\ensuremath{\tilde{\mathbf{z}}}}
\newcommand\NoisyStrategyResponse{\ensuremath{\tilde{\mathbf{y}}}}
\newcommand{\HistoricalPrivacyBudgetRP}{\ensuremath{\epsilon_\Cache}}
\newcommand\NoisyProactiveResponse{\NoisyResponseList_{\StrategyRawProactive{}}}
\newcommand\Cache{\ensuremath{\mathcal{C}}}
\newcommand\StratExpanderLimit{\ensuremath{\lambda}}
\algnewcommand{\algorithmicvariables}{\textbf{Module State}}
\algnewcommand{\LineComment}[1]{\State \(\triangledown\) #1}
\tikzset{
  mynode/.style = {circle, minimum size=0.8cm, align=center, inner sep=0pt, text centered, font=\sffamily},
  normalnode/.style = {mynode, text=black},
  strategy1node/.style = {mynode, draw=black, thin, text=blue},
  strategy2node/.style = {mynode, draw=black, thick,  text=magenta},
  strategy1and2node/.style = {mynode, draw=black, very thick, text=violet}, 
  proactive2node/.style = {mynode, draw=black, dashed, text=black},
}
\newcommand{\squishlist}{
	\begin{list}{$\bullet$}
		{
			\setlength{\itemsep}{0pt}
			\setlength{\parsep}{3pt}
			\setlength{\topsep}{3pt}
			\setlength{\partopsep}{0pt}
			\setlength{\leftmargin}{1.5em}
			\setlength{\labelwidth}{1em}
			\setlength{\labelsep}{0.5em} } }
	\newcommand{\squishend}{
\end{list}  }
\renewcommand{\mkbegdispquote}[2]{\itshape}
\newcommand\vldbavailabilityurl{https://gitlab.uwaterloo.ca/m2mazmud/cachedp-public}
\begin{document}

\title{Cache Me If You Can: Accuracy-Aware Inference Engine for Differentially Private Data Exploration}
\begin{techreport}
\titlenote{This is an extended version of our paper that will appear in VLDB'23~\cite{dpcache-vldb}. Our artifact is available online~\cite{own-artifact}.}
\end{techreport}

\author{Miti Mazmudar}
\email{miti.mazmudar@uwaterloo.ca}
\affiliation{%
	\institution{University of Waterloo}
}

\author{Thomas Humphries}
\email{thomas.humphries@uwaterloo.ca}
\affiliation{%
	\institution{University of Waterloo}
}

\author{Jiaxiang Liu}
\email{j632liu@uwaterloo.ca}
\affiliation{
	\institution{University of Waterloo}
}

\author{Matthew Rafuse}
\email{matthew.rafuse@uwaterloo.ca}
\affiliation{%
	\institution{University of Waterloo}
}

\author{Xi He}
\email{xi.he@uwaterloo.ca}
\affiliation{%
	\institution{University of Waterloo}
}

\begin{abstract}
Differential privacy (DP) allows data analysts to query databases that contain users' sensitive information while providing a quantifiable privacy guarantee to users. 
Recent interactive DP systems such as APEx provide accuracy guarantees over the query responses, but fail to support a large number of queries with a limited total privacy budget, as they process incoming queries independently from past queries. 
We present an interactive, accuracy-aware DP query engine, \sysname, which 
utilizes a differentially private cache of past responses, to answer the current workload at a lower privacy budget, while meeting strict accuracy guarantees.
We integrate complex DP mechanisms with our structured cache, through novel cache-aware DP cost optimization.
Our thorough evaluation illustrates that \sysname can accurately answer various workload sequences, while lowering the privacy loss as compared to related work. %, and only incurring modest overheads. %rate storage and computation overheads. 
\end{abstract}

\settopmatter{printfolios=true}
\maketitle

% %%% do not modify the following VLDB block %%
% %%% VLDB block start %%%
% \pagestyle{\vldbpagestyle}
% \begingroup\small\noindent\raggedright\textbf{PVLDB Reference Format:}\\
% \vldbauthors. \vldbtitle. PVLDB, \vldbvolume(\vldbissue): \vldbpages, \vldbyear.\\
% \href{https://doi.org/\vldbdoi}{doi:\vldbdoi}
% \endgroup
% \begingroup
% \renewcommand\thefootnote{}\footnote{\noindent
% 	This work is licensed under the Creative Commons BY-NC-ND 4.0 International License. Visit \url{https://creativecommons.org/licenses/by-nc-nd/4.0/} to view a copy of this license. For any use beyond those covered by this license, obtain permission by emailing \href{mailto:info@vldb.org}{info@vldb.org}. Copyright is held by the owner/author(s). Publication rights licensed to the VLDB Endowment. \\
% 	\raggedright Proceedings of the VLDB Endowment, Vol. \vldbvolume, No. \vldbissue\ %
% 	ISSN 2150-8097. \\
% 	\href{https://doi.org/\vldbdoi}{doi:\vldbdoi} \\
% }\addtocounter{footnote}{-1}\endgroup
% %%% VLDB block end %%%

\begin{vldbpaper}
%%% do not modify the following VLDB block %%
%%% VLDB block start %%%
\ifdefempty{\vldbavailabilityurl}{}{
	\vspace{.3cm}
	\begingroup\small\noindent\raggedright\textbf{PVLDB Artifact Availability:}\\
	The source code, data, and/or other artifacts have been made available at \url{\vldbavailabilityurl}.
	\endgroup
}
% %%% VLDB block end %%%
\end{vldbpaper}

\section{Introduction}
Organizations often collect large datasets that contain users' sensitive data and permit
data analysts to query these datasets for aggregate statistics. 
However, a curious data analyst may use these query responses to infer a user's record. 
%responses to these queries may be used by a curious data analyst to learn an individual's record. 
Differential Privacy (DP)~\cite{dwork_differential_2006, privacy_book} allows organizations to 
provide a guarantee to %assure
their users that the presence or absence of their record in the dataset will only change the distribution of the query response by a small factor, given by the privacy budget. %measured in terms of the privacy budget. % $\epsilon$.
This guarantee is typically achieved by perturbing the query response with noise that is inversely proportional to the privacy budget. 
Thus, DP systems face an accuracy-privacy trade-off: they should provide accurate query responses, while reducing the privacy budget spent.
%Multiple query responses can be released and post-processed while satisfying this privacy guarantee. %without breaking this privacy guarantee. 
DP has been deployed at the US Census Bureau~\cite{nthemap08}, Google~\cite{googledp} and Microsoft~\cite{msr17}.
%Statistical organizations such as the US Census Bureau~\cite{nthemap08} and companies such as Google~\cite{googledp} and Microsoft~\cite{msr17} have started to adopt DP. % in their applications.

Existing DP deployments~\cite{nthemap08,prochlo17,msr17,privatesql} mainly consider a non-interactive setting, where the analyst provides all queries in advance. %\cite{hdmm}
Whereas in interactive DP systems~\cite{pinq,flex,googledp,opendp}, data analysts supply queries one at a time. 
%, possibly using the noisy response of one query to determine the next query to ask, in a differentially private data exploration. 
%Interactive DP systems~\cite{pinq,flex,googledp,opendp} 
These systems have been difficult to deploy as they often assume an analyst has DP expertise.
%Systems~\cite{pinq,flex,googledp,opendp} that support interactive settings for differentially private data exploration have been difficult to deploy as they often assume an analyst has DP expertise. 
First, data analysts need to choose an appropriate privacy budget per query. %have often been left to 
%Second, data analysts desire a level of query accuracy to be maintained, a constraint that traditional DP systems do not provide on their outputs. 
Second, data analysts require each DP noisy query response to meet a specific accuracy criterion, 
whereas DP systems only seek to minimize the expected error over multiple queries. 
%which is a constraint that traditional DP systems do not satisfy. %provide on their outputs. 
Ge et al.'s APEx~\cite{ge_apex:_2017} eliminates these two drawbacks, as data analysts need only specify accuracy bounds in the form of an error rate $\alpha$ and a probability of failure $\beta$. %native workloads and
APEx chooses an appropriate DP mechanism and calibrates the privacy budget spent on each workload, to fulfill the accuracy requirements. 
%APEx handles the application of DP by choosing the appropriate mechanisms and properly distributing the privacy budget.
However, interactive DP systems may run out of privacy budget for a large number of queries.

We observe that we can further save privacy budget on a given query, by exploiting \textit{past}, related noisy responses, and thereby, we can answer a larger number of queries interactively. 
The DP post-processing theorem allows arbitrary computations on noisy responses without affecting the DP guarantee.
Hay et al.~\cite{hay2010boosting} have applied this theorem to enforce consistency constraints among noisy responses to related range queries, thereby improving their accuracy, through \textit{constrained inference}.
Peng et al. have proposed caching noisy responses and reusing them to answer future queries in Pioneer~\cite{pioneer}.
%an interactive cache-aware DP system named Pioneer~\cite{pioneer}.
However, their cache is unstructured and only operates with simple DP mechanisms such as the Laplace mechanism. 
%Second, it is limited to single dimension range queries. 
%\miti{Do we care about this limitation here?} 
%Finally, Pioneer only specifies a target variance for its noisy responses, and these responses do not satisfy $(\alpha,\beta)$ accuracy requirements. 

%we can design an inference engine that extends DP mechanisms to exploit a cache of past noisy responses, in order to reduce the privacy budget. 

We design a usable interactive DP query engine, \sysname, with a built-in differentially private cache, to support data analysts in answering data exploration workloads accurately, without requiring them to have any knowledge of DP.
Our system is built on top of an existing non-private DBMS and interacts with it through standard SQL queries. 
\sysname meets the analysts' $(\alpha, \beta)$ accuracy requirements on each workload, while minimizing the privacy budget spent per workload. 
We note that a similar reduction in privacy budget could be obtained if an expert analyst planned their queries, however our system removes the need for such planning.
%inference -> query
%The analyst simply provides a workload and an accuracy requirement and 
%\sysname chooses the best cache-aware DP mechanism to \miti{This is a challenge we address --- extending DP mechanisms to exploit the cache.} %%, and optimal parameters 
%answers the workload using minimal privacy budget, while meeting these requirements. 

Our contributions address four main challenges in the design of our engine. 
%Thus, to fully harness the post-processing theorem for cached noisy responses, our contributions address four main challenges in the design of our engine. %, and in doing so, address the drawbacks of Pioneer.
\emph{First}, we structure our cache to maximize the possible reuse of noisy responses by DP mechanisms (Section~\ref{sec:system_design}). 
Our cache design fully harnesses the post-processing theorem in the interactive setting, for cached noisy responses. 
%Instead of caching noisy responses to the analyst-supplied queries (as in Pioneer), we cache responses to an internal  strategy matrix, following Li et al.'s Matrix Mechanism~\cite{li2015matrix}. 
\emph{Second}, we integrate existing DP mechanisms with our cache, namely Li et al.'s Matrix Mechanism~\cite{li2015matrix} (Section~\ref{sec:mmm-module}), and Koufogiannis et al.'s Relax Privacy mechanism~\cite{relaxed_privacy} (Section~\ref{sec:rp-module}). 
In doing so, we address technical challenges that arise due to the need to maintain accuracy requirements over cached responses while minimizing the privacy budget, and thus, we provide a novel privacy budget cost estimation algorithm. 
%Third, we extend our cache-aware DP mechanisms with two modules, which further reduce the privacy budget, by exploiting the DP parallel composition theorem and constrained inference. 
%Our modules either proactively fill the cache for future queries at no extra cost, by exploiting the parallel composition theorem, or, reuse accurate, related cached responses, by exploiting constrained inference. %filling or reusing the cache through novel ways

%\emph{Third}, we apply the DP parallel composition theorem to proactively fill our cache, and we apply constrained inference to increase cache reuse. 
%To this end, we extend our cache-aware DP mechanisms with two modules, which further reduce the privacy budget (Section~\ref{sec:strategy-module}). 
\emph{Third}, we extend our cache-aware DP mechanisms with two modules, which further reduce the privacy budget (Section~\ref{sec:strategy-module}). 
Specifically, we apply DP sensitivity analysis to proactively fill our cache, and we apply constrained inference to increase cache reuse. 
We note that \sysname internally chooses the DP module with the lowest privacy cost per workload, removing cognitive burden on data analysts. 
\emph{Fourth}, we develop the design of our cache %and one of our modules
to handle queries with multiple attributes efficiently (Section~\ref{sec:multi-attribute-case}). 

\emph{Finally}, we conduct a thorough evaluation of our \sysname against related work (APEx, Pioneer), in terms of privacy budget consumption and performance overheads (Section~\ref{sec:evaluation}). %storage, computation
We find that it consistently spends lower privacy budget as compared to related work, for a variety of workload sequences, while incurring modest performance overheads. 
Through an ablation study, we deduce that our standard configuration with all DP modules turned on, is optimal for the evaluated workload sequences. Thus, researchers implementing our system need not tinker with our module configurations. 
\begin{vldbpaper}
\textbf{This paper contains several theorems and lemmas; their proofs can be found in the extended version of the paper~\cite{dpcacheextended}}. 
\end{vldbpaper}

\section{Background}~\label{sec:background}
We consider a single-table relational schema \Schema~ across $d$ attributes: $\Schema(\Attribute_1,\ldots \Attribute_d)$.
The domain of an attribute $\Attribute_i$ is given by $dom(\Attribute_i)$ and the full domain of $\Schema$ is $dom(\Schema)=dom(\Attribute_1)\times \cdots \times dom(\Attribute_d)$. Each attribute $\Attribute_i$ has a finite domain size $|dom(\Attribute_i)| = n_i$. The full domain has a size of $n=\prod_i n_i$.
%We define $\DomainSize=|dom(\Schema)|$.
A database instance $D$ of relation $\Schema$ is a multiset whose elements are values in $dom(\Schema)$.
%\todo{Add notation for cache.}

%\underline{Queries.} 
A predicate $\Predicate:dom(\Schema)\rightarrow \{0,1\}$ is an indicator function  specifying which database rows we are interested in (corresponds to the \texttt{WHERE} clause in SQL). 
%A \emph{predicate counting query \xh{change ``predicate counting query'' to ``linear or  row counting query (RCQ)''} } 
A \emph{linear or row counting query (RCQ)}
takes a predicate $\Predicate$ and returns the number of tuples in $D$ that satisfy $\Predicate$, i.e., $\Predicate(D)=\sum_{t\in D} \Predicate(t)$.
This corresponds to querying \texttt{SELECT COUNT(*) FROM $D$ WHERE \Predicate} in SQL.
We focus on RCQs for this work as they are primitives
%a fundamental building block 
that can be used to express histograms, multi-attribute range queries, marginals, and data cubes.

In this work, we express  RCQs
%predicate \xh{change ``predicate'' to ``row''} 
 as a matrix. Consider $dom(\Schema)$ to be an ordered list.  We represent a database instance $D$ by a data (column) vector $\DataVectorRaw$ of length $n$, where $\DataVectorRaw[i]$ is the count of $i$th value from $dom(\Schema)$ in $D$. 
After constructing $\DataVectorRaw$, we  represent any RCQ
%linear counting query 
as a length-$\DomainSize$ vector 
$\mathbb{w}$ with $\mathbb{w}[i]\in \{0,1\}$ for $i=1,\ldots,n$.
To obtain the ground truth response for a RCQ %linear counting query 
$\mathbb{w}$, we can simply compute $\mathbb{w}\cdot \DataVectorRaw$.
Hence, we can represent a workload of $\SizeOfWorkload$ RCQs
%linear counting queries 
as an $\SizeOfWorkload \times \DomainSize$ matrix $\WorkloadRaw$ and answer this workload  by matrix multiplication, as $\WorkloadRaw \DataVectorRaw$. 

When we partition the full domain $dom(\Schema)$ into a set of $\DomainSize'$ disjoint buckets, the data vector $\DataVectorRaw$ and the workload matrix $\WorkloadRaw$ over the full domain $dom(\Schema)$ can be mapped to a vector $\DataVector$ of size $\DomainSize'$ and a matrix $\WorkloadFull$ of size $\SizeOfWorkload \times \DomainSize'$, respectively.  %\xh{Add mapping algorithm here or a pointer to the mapping algorithm in the appendix}
We also consider a workload matrix $\WorkloadFull$ as a \emph{set} of RCQs,
%row counting queries (RCQs), 
and hence applying a set operator over a workload matrix is equivalent to applying this operator over a set of RCQs. For example, $\WorkloadFull'\subseteq \WorkloadFull$ means the set of RCQs in $\WorkloadFull'$ is a subset of the RCQs in $\WorkloadFull$.
We follow a differential privacy model with a trusted data curator.

\begin{defn}[$\epsilon$-Differential Privacy (DP) \cite{dwork_differential_2006}]\label{DP}
A randomized mechanism $M: \mathcal{D}\rightarrow \mathcal{O}$ satisfies $\epsilon$-DP if
for any output sets $O\subseteq \mathcal{O}$,
and any \emph{neighboring} database pairs $(D,D')$, i.e., $|D\backslash D'\cup D'\backslash D| = 1$, 
\begin{equation}\label{def:dp}
\Pr[M(D) \in O] \leq e^{\epsilon} \Pr[M(D')\in O].
\end{equation}
%for any set of outputs $O\subseteq \mathcal{O}$, and any pair of \emph{neighboring} databases $D,D'$ such that $|D\backslash D'\cup D'\backslash D| = 1$.
\end{defn}

The privacy parameter $\epsilon$ is also known as privacy budget. A classic mechanism to achieve DP is the Laplace mechanism. We present the matrix form of Laplace mechanism here. 
%\textbf{Laplace Mechanism.} 
%\todo{Explicitly state the noises are IID (YPro in answerMMMStrategy).}
\begin{thm}[Laplace mechanism~\cite{dwork_differential_2006,li2015matrix}] Given an $l\times n$ workload matrix $\WorkloadFull$ and a data vector $\DomainVector$, the Laplace Mechanism $\LaplaceMechanismNew$ outputs  $\LaplaceMechanismNew(\WorkloadFull,\DomainVector)=\WorkloadFull\DomainVector+Lap(b)^l$ where $Lap(b)^l$ is a vector of $l$ i.i.d. samples from a Laplace distribution with scale $b$. If $b\geq \frac{\|\WorkloadFull\|_1}{\epsilon}$, where $\|\WorkloadFull\|_1$ denotes the $L_1$ norm of  $\WorkloadFull$,
%the maximum $L_1$ norm of the columns of $\WorkloadFull$,
then  $\LaplaceMechanismNew(\WorkloadFull,\DomainVector)$ satisfies $\epsilon$-DP. 
\end{thm}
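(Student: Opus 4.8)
\emph{Proof proposal.} The plan is to reduce the claim to the classical scalar Laplace-mechanism argument in two stages: first I would show that $\|\WorkloadFull\|_1$ is (an upper bound on) the $L_1$-sensitivity of the linear map $\DomainVector \mapsto \WorkloadFull\DomainVector$, and then I would bound the ratio of output densities on any pair of neighboring inputs. For the first stage, fix neighboring databases $D,D'$; by Definition~\ref{DP} their symmetric difference is a single record, so the corresponding (bucketed) data vectors satisfy $\DomainVector - \DomainVector' = \pm\,\mathbf{e}_j$ for some standard basis vector $\mathbf{e}_j$ (adding or removing one tuple changes exactly one bucket count by exactly one). Consequently $\WorkloadFull\DomainVector - \WorkloadFull\DomainVector' = \pm\,\WorkloadFull_{\cdot j}$ is plus or minus the $j$-th column of $\WorkloadFull$, so $\|\WorkloadFull\DomainVector - \WorkloadFull\DomainVector'\|_1 = \|\WorkloadFull_{\cdot j}\|_1 \le \max_k \|\WorkloadFull_{\cdot k}\|_1 = \|\WorkloadFull\|_1$, where I read $\|\cdot\|_1$ on a matrix as the maximum absolute column sum, i.e.\ the operator norm induced by the $\ell_1$ vector norm.

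For the second stage, I would write the density of $\LaplaceMechanismNew(\WorkloadFull,\DomainVector)$ at a point $\mathbf{o}\in\mathbb{R}^l$ as $p_{\DomainVector}(\mathbf{o}) = \prod_{i=1}^{l}\frac{1}{2b}\exp\!\big(-|o_i-(\WorkloadFull\DomainVector)_i|/b\big)$, which is valid because the $l$ noise coordinates are i.i.d.\ $\mathrm{Lap}(b)$. Then, applying the reverse triangle inequality $|a|-|c|\le|a-c|$ coordinatewise,
\[
\frac{p_{\DomainVector}(\mathbf{o})}{p_{\DomainVector'}(\mathbf{o})}
= \prod_{i=1}^{l}\exp\!\left(\frac{|o_i-(\WorkloadFull\DomainVector')_i| - |o_i-(\WorkloadFull\DomainVector)_i|}{b}\right)
\le \exp\!\left(\frac{\|\WorkloadFull\DomainVector - \WorkloadFull\DomainVector'\|_1}{b}\right)
\le \exp\!\left(\frac{\|\WorkloadFull\|_1}{b}\right)
\le e^{\epsilon},
\]
where the final step uses the hypothesis $b \ge \|\WorkloadFull\|_1/\epsilon$. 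Integrating this pointwise bound over any measurable output set $O\subseteq\mathbb{R}^l$ gives $\Pr[\LaplaceMechanismNew(\WorkloadFull,\DomainVector)\in O] = \int_O p_{\DomainVector} \le e^{\epsilon}\int_O p_{\DomainVector'} = e^{\epsilon}\Pr[\LaplaceMechanismNew(\WorkloadFull,\DomainVector')\in O]$, which is exactly Equation~\eqref{def:dp}, establishing $\epsilon$-DP.

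I expect the only real obstacle to be the bookkeeping in the first stage: carefully arguing that the ``symmetric difference of size one'' notion of neighboring databases forces $\DomainVector$ and $\DomainVector'$ to differ by a single unit basis vector (and that this survives the partition into buckets), and then correctly identifying the matrix $\ell_1$ norm with the maximum absolute column sum so that it upper-bounds $\|\WorkloadFull(\DomainVector-\DomainVector')\|_1$. Once the sensitivity is pinned down, the density-ratio computation is the standard textbook argument and should go through verbatim.
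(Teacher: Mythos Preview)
Your proof is correct and is precisely the standard argument for the Laplace mechanism. Note, however, that the paper does not actually prove this theorem: it is stated in Section~\ref{sec:background} as a background result cited from \cite{dwork_differential_2006,li2015matrix}, so there is no ``paper's own proof'' to compare against. Your two-stage argument (bounding the $\ell_1$-sensitivity of $\DomainVector\mapsto\WorkloadFull\DomainVector$ by the maximum absolute column sum $\|\WorkloadFull\|_1$, then bounding the Laplace density ratio via the triangle inequality) is exactly the textbook proof one finds in those references.
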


\begin{table}[t!]
    \centering
    \caption{Notation}\label{table:notation} \vspace{-3mm}
    \begin{tabular}{r p{6.5cm}}
        \textbf{Notation} & \textbf{Description}\\ \hline
        %% $\Schema$ & relational schema \\
        %$\Attribute_i, \numAttributes$ & $i$th attribute, number of attributes \\
        %% $dom(\Attribute_i)$ & domain of the $i$th attribute\\
        %% $dom(\Schema)$ & data domain \\
        %%$D$ & database instance \\ %($D\in dom(\Schema)$)\\
        
        %$dom(\Schema), n$ & full domain of a database row, domain size\\
        
        $\DataVectorRaw, \mathbb{w}, \WorkloadRaw, \StrategyRaw$ & raw data vector, query vector, query workload matrix, strategy matrix over full domain $dom(\Schema)$ \\
        
        $\DataVector, \mathbf{w}, \WorkloadFull, \StrategyFull$ & mapped data vector, query vector, query workload matrix, strategy matrix
        over a partition of $dom(\Schema)$ \\
        
        $\alpha, \beta$ & accuracy parameters for $\WorkloadRaw$ \\

        $\TotalPrivacyBudget, \ConsumedPrivacyBudget, \epsilon$ & total budget, consumed budget, workload budget \\

        $\StrategyMatrixGlobal, \Cache_{\StrategyMatrixGlobal}$
        & global strategy matrix, its cache over $dom(\Schema)$\\
        \eat{
        $\StrategyFullGlobal, \Cache_{\StrategyFullGlobal}$
        & global strategy matrix, its cache over a partition of $dom(\Schema)$ \\
        }
        %$|\cdot|$ & number of rows in a matrix \\
        
        $b, \NoisyResponse$ & a scalar noise parameter, a scalar noisy response  \\ 
        
        $\BList$ & a vector of noise parameters \\ 
        
        $\NoisyStrategyResponse, \NoisyWorkloadResponse$ & a vector of noisy responses to the strategy $\StrategyFull$~ or \WorkloadFull. \\

        ($\mathbb{a}$, $b$, $\NoisyResponse$, $t$) & a cache entry for a strategy query $\mathbb{a} \in {\StrategyMatrixGlobal}$ stored at timestamp $t$. See Definition~\ref{defn:cache}. \\
        
        $\StrategyMatrixFree,\StrategyMatrixPaid$ & free strategy matrix, paid strategy matrix \\

        \hline
    \end{tabular} 
\end{table}

%\textbf{Matrix mechanism.}
Li et al.~\cite{li2015matrix} present the matrix mechanism, which first applies a DP mechanism, $M$,  on a new strategy matrix $\StrategyFull$, and then post-processes the noisy answers to the queries in $\StrategyFull$ to estimate the queries in $\WorkloadFull$.  This mechanism aims to achieve a smaller error than directly applying the mechanism $M$ on $\WorkloadFull$. We will use the Laplace mechanism $\LaplaceMechanismNew$ to illustrate matrix mechanism. 
\begin{defn}[Matrix Mechanism (MM)~\cite{li2015matrix}] \label{def:mm}
Given an $l\times n$ workload matrix $\WorkloadFull$, a $p\times n$ strategy matrix $\StrategyFull$, and the Laplace mechanism $\LaplaceMechanismNew(\StrategyFull,\DomainVector)$ that answers $\StrategyFull$ on $\DomainVector$, the matrix mechanism $\mathcal{M}_{\StrategyFull,\LaplaceMechanismNew}$ outputs the following answer: $\mathcal{M}_{\StrategyFull,\LaplaceMechanismNew}(\WorkloadFull,\DomainVector) = \WorkloadFull\StrategyFull^+\LaplaceMechanismNew(\StrategyFull,\DomainVector)$ is the Moore-Penrose pseudoinverse of $\StrategyFull$.
\end{defn}
Intuitively, each workload query in $\WorkloadFull$ can be represented as a linear combination of strategy queries in $\StrategyFull$, i.e., $\WorkloadFull\DomainVector = \WorkloadFull\StrategyFull^+(\StrategyFull\DomainVector)$. 
We denote $\LaplaceMechanismNew(\StrategyFull,\DomainVector)$ by $\NoisyStrategyResponse$ and $\mathcal{M}_{\StrategyFull,\LaplaceMechanismNew}$ by $\NoisyWorkloadResponse$.
As the matrix mechanism post-processes the output of a DP mechanism~\cite{privacy_book}, it also satisfies the same level of privacy guarantee. 

\begin{prop}[\cite{li2015matrix}] \label{prop:mmdp}
If $b\geq \frac{\|\StrategyFull\|_1}{\epsilon}$, then $\mathcal{M}_{\StrategyFull,\LaplaceMechanismNew}$ satisfies $\epsilon$-DP.
\end{prop}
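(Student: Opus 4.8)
The plan is to obtain the claim in two moves: first show that the intermediate noisy answer to the strategy matrix is $\epsilon$-DP, and then invoke closure of differential privacy under post-processing.

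First I would apply the matrix form of the Laplace mechanism (the Laplace mechanism theorem above) with the strategy matrix $\StrategyFull$ playing the role of the workload there: since by hypothesis $b \geq \|\StrategyFull\|_1/\epsilon$, the random vector $\NoisyStrategyResponse := \LaplaceMechanismNew(\StrategyFull,\DomainVector) = \StrategyFull\DomainVector + Lap(b)^p$ is the output of an $\epsilon$-DP mechanism on the database (equivalently, on its encoding $\DomainVector$). Next I would observe that, by Definition~\ref{def:mm}, the matrix mechanism's output factors as
\[
\mathcal{M}_{\StrategyFull,\LaplaceMechanismNew}(\WorkloadFull,\DomainVector) \;=\; \WorkloadFull\StrategyFull^{+}\,\NoisyStrategyResponse \;=\; g\!\left(\LaplaceMechanismNew(\StrategyFull,\DomainVector)\right),
\]
where $g(\bm v) := \WorkloadFull\StrategyFull^{+}\bm v$ is a deterministic linear map. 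The key point is that $g$ depends only on the fixed workload matrix $\WorkloadFull$ and strategy matrix $\StrategyFull$, not on the database, so it is a legitimate post-processing function. Applying the post-processing property of differential privacy --- if $M$ is $\epsilon$-DP and $g$ is any map whose definition does not depend on $D$, then $g\circ M$ is $\epsilon$-DP --- yields that $\mathcal{M}_{\StrategyFull,\LaplaceMechanismNew}(\WorkloadFull,\cdot)$ satisfies $\epsilon$-DP.

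There is no real obstacle here, since both ingredients (the Laplace guarantee on $\StrategyFull$ and closure under post-processing) are already available; the only point that warrants a sentence of care is verifying that the map $g$ is genuinely data-independent, which holds because $\WorkloadFull$ and $\StrategyFull$ are part of the mechanism's specification rather than functions of $D$. A more pedestrian alternative, if one wished to avoid citing post-processing, would be to push the distributional inequality of Definition~\ref{DP} for $\LaplaceMechanismNew(\StrategyFull,\cdot)$ forward through $g$ and check the definition directly on output sets of the form $g^{-1}(O)$; I would nonetheless prefer the post-processing route for brevity.
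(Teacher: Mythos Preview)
Your proposal is correct and matches the paper's treatment exactly: the paper does not give a formal proof of this proposition but simply remarks that ``the matrix mechanism post-processes the output of a DP mechanism, [so] it also satisfies the same level of privacy guarantee,'' citing \cite{li2015matrix} for the result. Your two-step argument (Laplace on $\StrategyFull$ gives $\epsilon$-DP, then post-process by the data-independent linear map $\WorkloadFull\StrategyFull^{+}$) is precisely the content of that remark, just spelled out in full.
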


For data analysts who may not be able to choose an appropriate budget for a DP mechanism, we would like to allow them to specify their accuracy requirements for their queries. We consider two popular types of error specification for DP mechanisms. 

\begin{defn} Given a $l\times n$ workload matrix $\WorkloadFull$ and a DP mechanism $M$, (i) the $\alpha^2$-expected total squared error bound~\cite{li2015matrix} is 
\begin{equation}
    \mathbb{E}[\|\WorkloadFull\DomainVector - M(\WorkloadFull,\DomainVector)\|_2^2] \leq \alpha^2
\end{equation}
and 
(ii) the $(\alpha,\beta)$-worst error bound~\cite{ge_apex:_2017} is defined as  \begin{equation}
    \Pr[\|\WorkloadFull\DomainVector - M(\WorkloadFull,\DomainVector)\|_{\infty} \geq \alpha ] \leq \beta.
\end{equation}
\end{defn}

%We will focus on the $(\alpha,\beta)$-worst error bound for the rest of our paper as it is similar to the error bound for traditional approximate query processing, but our work also supports the expected total squared error. 

The error for the matrix mechanism is  $\|\WorkloadFull\StrategyFull^+ Lap(b)^l\|$, which is
independent of the data. This
allows a direct estimation of the error bound without running the algorithm on the data. For example, 
Ge et al.~\cite{ge_apex:_2017} 
%present an interactive, accuracy-aware system, namely APEx, which 
provide a loose bound for the noise parameter in the matrix mechanism to achieve an $(\alpha, \beta)$-worst error bound.

%Ge et al.~\cite{ge_apex:_2017} present an interactive, accuracy-aware system, namely APEx, which removes the burden from data analysts to determine an appropriate privacy budget.  Data analysts may instead specify the $(\alpha, \beta)$-worst error bound. APEx provides a loose bound for the noise parameter in the matrix mechanism to achieve a $(\alpha, \beta)$-worst error bound.
\begin{thm}[\cite{ge_apex:_2017}]\label{thm:loosebound}
The matrix mechanism $\mathcal{M}_{\StrategyFull,\LaplaceMechanismNew}$ satisfies the $(\alpha, \beta)$-worst error bound, if 
\begin{equation}
    \label{eq:mm-loose-noise-param-bound}
    b \leq \LooseNoiseParameter = \frac{\alpha\sqrt{\beta/2}}{\|\WorkloadFull \StrategyFull^+\|_F} 
\end{equation}
where $\| \cdot \|_F$ is the Frobenius norm. 
\end{thm}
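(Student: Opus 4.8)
The plan is to reduce the worst-case ($L_\infty$) error to a union bound over the $l$ workload queries, and then to control each individual query's error with a second-moment (Chebyshev) argument; the Frobenius norm appearing in the claimed bound will emerge naturally as the sum of the squared row norms of $\WorkloadFull\StrategyFull^+$.

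First I would use that the matrix-mechanism error is data independent. Since each workload query is a linear combination of strategy queries, $\WorkloadFull = \WorkloadFull\StrategyFull^+\StrategyFull$, so $\WorkloadFull\DomainVector - \mathcal{M}_{\StrategyFull,\LaplaceMechanismNew}(\WorkloadFull,\DomainVector) = \WorkloadFull\StrategyFull^+\StrategyFull\DomainVector - \WorkloadFull\StrategyFull^+\bigl(\StrategyFull\DomainVector + \noisev\bigr) = -\,\WorkloadFull\StrategyFull^+\noisev$, where $\noisev$ is a vector of $p$ i.i.d.\ $Lap(b)$ samples (one per strategy query). Writing $C = \WorkloadFull\StrategyFull^+$ with rows $c_1,\dots,c_l$, the $i$-th coordinate of the error vector is $\langle c_i,\noisev\rangle$, hence $\|\WorkloadFull\DomainVector - \mathcal{M}_{\StrategyFull,\LaplaceMechanismNew}(\WorkloadFull,\DomainVector)\|_\infty = \max_{1\le i\le l}|\langle c_i,\noisev\rangle|$. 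A union bound gives $\Pr[\max_i|\langle c_i,\noisev\rangle|\ge\alpha] \le \sum_{i=1}^{l}\Pr[|\langle c_i,\noisev\rangle|\ge\alpha]$. Each $\langle c_i,\noisev\rangle$ is centered (Laplace noise has mean $0$) with variance $\sum_j c_{ij}^2\,\mathrm{Var}(\eta_j) = 2b^2\|c_i\|_2^2$ by independence and $\mathrm{Var}(Lap(b)) = 2b^2$, so Chebyshev's inequality yields $\Pr[|\langle c_i,\noisev\rangle|\ge\alpha]\le 2b^2\|c_i\|_2^2/\alpha^2$. Summing over $i$ and using $\sum_i\|c_i\|_2^2 = \|C\|_F^2 = \|\WorkloadFull\StrategyFull^+\|_F^2$ gives $\Pr[\|\cdot\|_\infty\ge\alpha] \le 2b^2\|\WorkloadFull\StrategyFull^+\|_F^2/\alpha^2$. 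Finally, requiring this upper bound to be at most $\beta$ and solving for $b$ gives the condition $b \le \alpha\sqrt{\beta/2}/\|\WorkloadFull\StrategyFull^+\|_F = \LooseNoiseParameter$, which is exactly the claim.

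There is no deep obstacle: the argument is a union bound plus a variance computation. The only genuine choice — and the reason the bound is advertised as \emph{loose} — is invoking Chebyshev rather than an exponential tail inequality for sums of Laplace variables; a sharper concentration bound would replace the $1/\alpha^2$ decay by exponential decay and permit a larger noise scale (smaller privacy cost), but at the price of a less clean closed form than the Frobenius-norm expression produced by the second-moment method. The only point needing mild care is the degenerate case $\|\WorkloadFull\StrategyFull^+\|_F = 0$, where $\WorkloadFull\StrategyFull^+ = 0$, the error is identically zero, and the accuracy guarantee holds vacuously for every $b$.
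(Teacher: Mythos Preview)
Your proof is correct. The paper itself does not prove this theorem; it simply cites the APEx paper~\cite{ge_apex:_2017} and treats the bound as a black box, so there is no in-paper argument to compare against line by line. That said, your union-bound-plus-Chebyshev derivation is the standard route to this Frobenius-norm expression, and it is entirely consistent with how the paper later \emph{uses} the bound: the cache-aware generalization in Appendix~B, $b \le \sqrt{\alpha^2\beta/2 - \|\WorkloadFull\StrategyFull^+\mathrm{Diag}(\vec{b_{\StrategyMatrixFree}})\|_F^2}\big/\|\WorkloadFull\StrategyFull^+\mathrm{Diag}(I_{|\StrategyMatrixPaid|})\|_F$, has exactly the structure one obtains by splitting the second-moment sum into free and paid parts and repeating your Chebyshev argument, and it collapses to the stated $\LooseNoiseParameter$ when $\StrategyMatrixFree=\emptyset$. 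Your remark that the bound is loose precisely because Chebyshev (rather than an exponential tail bound) is invoked also matches the paper's framing and its subsequent Monte Carlo tightening step.
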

When we set $b$ to this loose bound $\LooseNoiseParameter$, the privacy budget consumed by this mechanism is $\frac{\|\StrategyFull\|_1}{\LooseNoiseParameter}$. To minimize the privacy cost, Ge et al.~\cite{ge_apex:_2017}  conduct a continuous binary search over noise parameters larger than $\LooseNoiseParameter$. The filtering condition for this search is the output of a Monte Carlo (MC) simulation for the error term $\|\WorkloadFull\StrategyFull^+ Lap(b)^l\|_{\infty}$ (i.e., if the sampled error exceeds $\alpha$ with a probability $\leq \beta$). 
%%$$\|\WorkloadFull\DomainVector - \LaplaceMechanismNew(\WorkloadFull,\DomainVector)\|_{\infty} = \|\WorkloadFull\StrategyFull^+ Lap(b)^l\|_{\infty},$$ which is independent of the data. 
%In the MC simulation, a large number of Laplace noise vectors $Lap(b)^l$ are sampled with the given noise parameter $b$, and this noise parameter is deemed sufficiently accurate if the aforementioned workload error term satisfies $(\alpha, \beta)$ requirements over a statistically significant number of runs.

% %\section{Problem Statement}
% %\input{sections/problemsetup.tex}

\section{System Design}\label{sec:system_design}
\begin{techreport}
\begin{figure}
    \centering
    \includegraphics[scale=0.45]{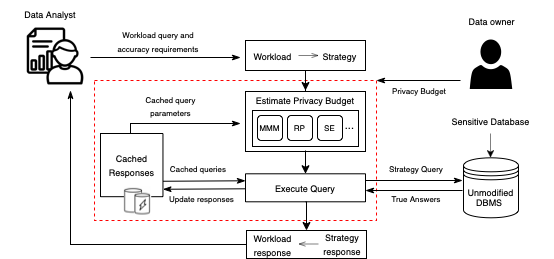}
    \caption{System diagram.}
    %\todo{Fill short caption. Edit privacy budget part in diagram. Jerry: fix notation in diagram. Make SE a box within MMM and add a Proactive box below the MMM and SE box. (Keep RP in parallel.) Remove dots next to SE if there are width problems. Enclose Estimate privacy budget and execute query in a big box, move the boxes for modules to the top, then put a dashed line underneath it. (I want to show that the estimate privacy budget and execute query interfaces are different for all modules. Currently it looks like execute query is the same to all modules. Rename 'execute query' to answer strategy.)}
    \label{fig:system-diagram}
\end{figure}
\end{techreport}
% \goal{
% \begin{itemize}
% \item High level functionality of the system. Taking a workload and alpha beta and returning answer. Reduce this cost using a cache
% \item Challenge 1: making sure the cache is reusable (needs structure). Explain that we will use MM (strategies) and make sure the domain of those strategies has a common structure that supports all possible queries (such as a tree).
% \item Challenge 2: keeping privacy cost low (or consistent with naive techniques). So we use only part of the global structure (minimal amount) to answer each workload. (mention this introduces bucketing challenge addressed later)
% \item Challenge 3:how to use the cache alongside DB to get responses to strategy (many techniques to use the cache while satisfying alpha beta). We want the one with lowest cost. so we use module interfaces(cost estimator and execute)
% \item Segues into describing the different modules in section 4 and 5.
% \end{itemize}}

%\goal{High level functionality of the system. Taking a workload and alpha beta and returning answer. Reduce this cost using a cache}
We design an interactive inference engine with a built-in cache, \sysname, that supports data analysts in answering data exploration queries with sufficient accuracy, without requiring them to have any differential privacy knowledge. 
\begin{techreport}
The system architecture is given in Figure~\ref{fig:system-diagram}.
\end{techreport}
The data owner instantiates an unmodified relational DBMS such as MySQL, with a database that includes sensitive data. 
To complete the setup stage, the data owner also provides a total privacy budget $\TotalPrivacyBudget$ to our system. 
At runtime, the data analyst inputs a workload query $\WorkloadRaw$, and an $(\alpha, \beta)$ accuracy requirement that the query should satisfy, to \sysname. 
Our system interacts with the DBMS, via an SQL interface, and a cache $\Cache$, to return a differentially private workload response $\NoisyWorkloadResponse$, 
which satisfies this accuracy requirement, to the analyst. 
Each workload response consumes a privacy budget $\epsilon$, out of \TotalPrivacyBudget, and the goal of \sysname~ is to reduce $\epsilon$ by using our cache, which stores historical noisy responses. 
We provide an overview of our system design in this section, while motivating our description through design challenges. 
Our system follows a \emph{modular design}, in order to enable DP experts to develop new cache-aware, problem-specific modules in the future. 

\subsection{Cache Structure Overview}
\label{subsec:cache-structure}
Our cache stores previously released noisy DP responses and related parameters; it does not store any sensitive ground truth data. Moreover, the cache does not interact directly with the DBMS at all. Therefore, the cache design evolves independently of the DBMS or other alternative data storage systems. We consider two design questions: (i) which queries and their noisy responses should be stored in the cache; and (ii) what other parameters are needed? 

A naive cache design simply stores all historical workloads, their accuracy requirements and noisy responses $[(\WorkloadRaw_1, \alpha_1, \beta_1, \NoisyWorkloadResponse_1),$ $\ldots, (\WorkloadRaw_t,\alpha_t, \beta_t, \NoisyWorkloadResponse_t)]$. 
When a new workload $(\WorkloadRaw_{t+1},\alpha_{t+1}, \beta_{t+1})$ comes in, the system first infers  a response $\NoisyWorkloadResponse'_{t+1}$ from the cache and its error bound $\alpha'_{t+1}$. If its error bound is worse than the accuracy requirement, i.e.,  $\alpha'_{t+1}\geq \alpha_{t+1}$, then additional privacy budget $\epsilon_{t+1}$ needs to be spent to improve $\NoisyWorkloadResponse'_{t+1}$ to $\NoisyWorkloadResponse_{t+1}$. This additional privacy cost $\epsilon_{t+1}$ should be smaller than a DP mechanism that does not use historical query answers. 

This cache design is used in Pioneer~\cite{pioneer}, but it has several drawbacks.  First, this design results in a cache size that linearly increases with the number of workload queries. 
Second, we will not be able to \emph{compose and reuse} cached past responses to overlapping workloads ($\WorkloadRaw_{t-k} \cap \WorkloadRaw_{t} \ne \emptyset$).
%What it can compose depends on 
Simply put, this design works with only simple DP mechanisms, which answer the data analyst-supplied workloads directly with noisy responses.
For instance, Pioneer~\cite{pioneer} considers only single query workloads and the Laplace mechanism. 
%\miti{I'm not sure I get the next sentence - this design will simply store all workload responses.} 
%When working with more accurate but complex DP mechanisms such as the matrix mechanism, this design needs to track a large set of noise parameters and coefficients used for each historical query in order to infer the response to a new query $\NoisyResponseList'_{t+1}$ and its error bound $\alpha'_{t+1}$.  
%\miti{One point that has been missed is this: Importantly, this design will not be able to reuse a past response to an overlapping workload $\WorkloadRaw_{t+1} \subseteq \WorkloadRaw_{t}$.} 
We seek to design a reusable cache that can work with complex DP mechanisms, and in particular, the matrix mechanism. 
%\miti{I think that this sentence should be moved to the next paragraph so the focus here is on reusability. Also it's not immediately obvious why it would be the case - it's because we translate $(\alpha_t, \beta_t)$ into noise parameters.}  
%Third, when the cache tracks only the loose error bounds, e.g., $(\alpha_t,\beta_t)$, for the historical queries, we may spend more privacy budget than the case without using the cache. 
Thus, we need to \emph{structure} our cache such that cached queries and their noisy responses can be reused efficiently, 
in terms of the additional privacy cost and run time, 
while limiting the cache size. 

\eat{
In our cache design, we consider a set of queries that are answered directly with a noise-addition DP mechanism. 
For instance, for the matrix mechanism, we only store the set of queries in the strategy matrix $\StrategyRaw$, and their noisy responses, instead of the workload matrix $\WorkloadRaw$. 
Each of these queries in the strategy matrix  $\mathbb{a}\in \StrategyRaw$ is only associated with a single noise parameter $b$. 
This design leads to a cache $\Cache=[(\StrategyRaw_1,b_1,\NoisyResponseList_1),\ldots, (\StrategyRaw_t,b_t, \NoisyResponseList_t)]$.
If all the strategy matrices share a similar structure, in other words, many similar queries, then we can track only a limited set of queries and their ``best'' noise parameters and noisy responses. 
}

%\miti{I think we should phrase this in terms of reusability - it will be more accessible to non-DP people. All good caches should be reusable.We can keep the abstraction point about how we effectively store queries directly answered by a DP mechanism, but we should restrict this discussion to the matrix mechanism for simplicity.} 
Our key insight is that the strategy matrices in Matrix Mechanism (MM) in Def~\ref{def:mm} can be chosen from a structured set. %, such that overlapping workloads would map to overlapping strategy RCQs.
So, we store noisy responses to the matrix that the mechanism answers directly (the strategy matrix),
%through the addition of DP noise without post-processing ,
%, since the strategy matrix can be structured appropriately, 
instead of storing noisy responses that are post-processed and returned to the data analyst (the workload matrix).
\eat{
That is, we store noisy responses to the matrix that the mechanism answers through the addition of DP noise (the strategy matrix in MM) \xh{add ``directly without post processing''}, since it \xh{change ``it'' to ``the strategy matrix''} can be structured appropriately, instead of storing noisy responses that are returned to the data analyst (the workload matrix). 
}
If all the strategy matrices share a similar structure, in other words, many similar queries, then we need to only track a limited set of queries in our cache. 
Relatedly, since the $(\alpha, \beta)$ accuracy requirements for different workload matrices can only be composed through a loose union bound, we instead track the noise parameters that are used to answer the associated strategy matrices. 
Thus, in our cache, we store the strategy queries, the noisy strategy query responses and the noise parameter. \eat{$\Cache=[(\StrategyRaw_1,\BList_1,\NoisyResponseList_1),\ldots, (\StrategyRaw_t,\BList_t, \NoisyResponseList_t)]$.}

This cache design motivates us to consider a global strategy matrix $\StrategyMatrixGlobal$ for the cache that can support all possible workloads.
Importantly, for a given workload matrix $\WorkloadRaw$, we present a strategy transformer (ST) module to generate an instant strategy matrix, denoted by $\StrategyRaw$, such that each instant strategy matrix is contained in the global strategy matrix,  i.e., $\StrategyRaw \subseteq \StrategyMatrixGlobal$. 
%with a special set of instant strategy matrices for the upcoming workloads, denoted by $\StrategyRaw$, such that each instant strategy matrix is contained in the full strategy matrix,  i.e., $\StrategyRaw \subseteq \StrategyMatrixGlobal$. 
\eat{In this design, the cache only tracks the latest (and hence the most accurate) noisy response to each answered query in $\StrategyMatrixGlobal$ with its noise parameter, and the timestamp for the update. }
In this design, the cache tracks each strategy entry $\mathbb{a} \in \StrategyMatrixGlobal$, with its noisy response, its noise parameter, and the timestamp.

\begin{defn}[Cache Structure]\label{defn:cache}
Given a global strategy matrix  $\StrategyMatrixGlobal$ over the full domain $dom(\Schema)$,  a cache for differentially private counting queries is defined  as 
\begin{equation}
    \Cache_{\StrategyMatrixGlobal} = \{\ldots,(\mathbb{a},b,\NoisyResponse,t),\ldots | \mathbb{a}\in \StrategyMatrixGlobal \},
\end{equation}
where $b$ and $\NoisyResponse$ are the latest noise parameter and noisy response for the strategy query $\mathbb{a}$, and $t$ is the time stamp for the latest update of $\mathbb{a}$. At beginning, all entries are initialized as $(\mathbb{a},-,-,0)$, where `$-$' denotes invalid values. We use $\Cache$ to represent the set of entries with valid noisy responses and $t>0$. 
\end{defn}

In this work, we consider a hierarchical structure, or $k$-ary tree, for $\StrategyMatrixGlobal$, which is a popular and effective strategy matrix for MM~\cite{li2015matrix} with an expected worst error of $O(\log^3n)$, where $n$ is the domain size. 
Figure~\ref{fig:binary-tree-strategy-gen-proactive} shows the global strategy matrix as a binary tree decomposition of a small integer domain $[0,8)$.

\subsection{Strategy Transformer (ST) Overview}
\label{subsec:strategy-transformer-overview}
We outline the Strategy Transformer (ST) module, which is commonly used by all of our cache-aware DP modules. 
The ST module consists of two components: a Strategy Generator (SG) and a Full-rank Transformer (FT).
Prior work~\cite{li2015matrix} uses the global strategy $\StrategyMatrixGlobal$, which has a high $\|\StrategyMatrixGlobal\|_1$.
Given an input $\WorkloadRaw$, the SG selects a basic instant strategy $\StrategyRaw \subseteq \StrategyMatrixGlobal$, with a low  $\|\StrategyMatrixGlobal\|_1$, among other criteria. 
\eat{
The guidelines for selecting an $\StrategyRaw$ out of $\StrategyMatrixGlobal$ are that: (i) $\StrategyRaw$ should support $\WorkloadRaw$, (ii) it should have a small $\|\StrategyRaw\|_1$ and hence a small privacy cost (based on Proposition~\ref{prop:mmdp}). }
%We observe that the SG operates independently of the input accuracy requirements $(\alpha, \beta)$. 
Though the cache $\Cache_{\StrategyMatrixGlobal}$ is structured based on $\StrategyMatrixGlobal$, the cache is not searched while generating $\StrategyRaw$. 
We present two example workloads and the instant strategies generated for these workloads next. %, in Example~\ref{eg:globalstrategy}. 

\begin{techreport}
\begin{figure}[t!]
    \centering
    \caption{A global strategy $\StrategyMatrixGlobal$ in a binary tree decomposition for an integer domain $[0,8)$.     Workload queries include $\WorkloadRaw_1 = \{[0,7)\}$, $\WorkloadRaw_2 = \{[2,6), [3,7)\}$.
    Nodes present only in strategy $\StrategyRaw_1$ are shown in blue text, nodes only in $\StrategyRaw_2$ are in magenta text, nodes in both $\StrategyRaw_1$ and $\StrategyRaw_2$ are shown in purple text. 
    The dashed nodes are output by the Proactive Querying (PQ) module (Section~\ref{subsec:proactive}), for $\StrategyRaw_2$; the value annotations for $r$ and $s$ refer to Algorithm~\ref{algo:proactive}. 
    }  \label{fig:binary-tree-strategy-gen-proactive}
    \resizebox{\dimexpr\linewidth}{!}{%
    \begin{tikzpicture}[->,>=stealth',level/.style={sibling distance = 5.0cm/#1,
      level distance = 1.0cm},level 3/.style={sibling distance = 1.0cm,
      level distance = 1.0cm}] 
    \node [normalnode] {
        \shortstack[c]{[0,8) \\ \tiny{$r:2$} \\ \tiny{$s:2$} }
         }
        child{ node [strategy1node,] {\shortstack[c]{[0,4) \\ \tiny{$r:2$} \\ \tiny{$s:2$} }} 
                child{ node [proactive2node] {\shortstack[c]{[0,2) \\ \tiny{$r:2$} \\ \tiny{$s:0$} }} 
                	child{ node [proactive2node] {\shortstack[c]{[0,1) \\ \tiny{$r:1$} \\ \tiny{$s:0$} }} }
    				child{ node [proactive2node] {\shortstack[c]{[1,2) \\ \tiny{$r:1$} \\ \tiny{$s:0$} }} }
                }
               child{ node [strategy2node] {\shortstack[c]{[2,4) \\ \tiny{$r:2$} \\ \tiny{$s:2$} }} 
                	child{ node [proactive2node] {\shortstack[c]{[2,3) \\ \tiny{$r:1$} \\ \tiny{$s:0$} }} }
    				child{ node [strategy2node] {\shortstack[c]{[3,4) \\ \tiny{$r:1$} \\ \tiny{$s:1$} }} }
                }                        
        }
        child{ node [proactive2node] {\shortstack[c]{[4,8) \\ \tiny{$r:2$} \\ \tiny{$s:1$} }}
                child{ node [strategy1and2node] {\shortstack[c]{[4,6) \\ \tiny{$r:1$} \\ \tiny{$s:1$} }} 
                	child{ node [normalnode] {\shortstack[c]{[4,5) \\ \tiny{$r:0$} \\ \tiny{$s:0$} }} }
    				child{ node [normalnode] {\shortstack[c]{[5,6) \\ \tiny{$r:0$} \\ \tiny{$s:0$} }} }
                }
               child{ node [normalnode] {\shortstack[c]{[6,8) \\ \tiny{$r:1$} \\ \tiny{$s:1$} }} 
                	child{ node [strategy1and2node] {\shortstack[c]{[6,7) \\ \tiny{$r:1$} \\ \tiny{$s:1$} }} }
    				child{ node [normalnode] {\shortstack[c]{[7,8) \\ \tiny{$r:1$} \\ \tiny{$s:0$} }} }
                }                        
        }
    ; 
    \end{tikzpicture}}
\end{figure}
\end{techreport}

\begin{vldbpaper}
\begin{figure}[t!]
    \centering
    \caption{A global strategy $\StrategyMatrixGlobal$ in a binary tree decomposition for an integer domain $[0,8)$.     
    Workloads include $\WorkloadRaw_1 = \{[0,7)\}$, $\WorkloadRaw_2 = \{[2,6), [3,7)\}$.
    Strategy nodes unique to $\StrategyRaw_1$ are in blue text, those unique to $\StrategyRaw_2$ are in magenta text, whereas those in both $\StrategyRaw_1$ and $\StrategyRaw_2$ are in purple text.
    %Nodes present only in strategy $\StrategyRaw_1$ are shown in blue text, nodes only in $\StrategyRaw_2$ are in magenta text, nodes in both $\StrategyRaw_1$ and $\StrategyRaw_2$ are shown in purple text. 
    The dashed nodes are output by the PQ module (Section~\ref{subsec:proactive}), for $\StrategyRaw_2$.
    }  \label{fig:binary-tree-strategy-gen-proactive}
    \resizebox{\dimexpr\linewidth}{!}{%
    \begin{tikzpicture}[->,>=stealth',level/.style={sibling distance = 5.0cm/#1,
      level distance = 1.0cm},level 3/.style={sibling distance = 1.0cm,
      level distance = 1.0cm}] 
    \node [normalnode] {
        \shortstack[c]{[0,8)  }
         }
        child{ node [strategy1node,] {\shortstack[c]{[0,4) }} 
                child{ node [proactive2node] {\shortstack[c]{[0,2)  }} 
                	child{ node [proactive2node] {\shortstack[c]{[0,1) }} }
    				child{ node [proactive2node] {\shortstack[c]{[1,2) }} }
                }
               child{ node [strategy2node] {\shortstack[c]{[2,4)  }} 
                	child{ node [proactive2node] {\shortstack[c]{[2,3)  }} }
    				child{ node [strategy2node] {\shortstack[c]{[3,4)  }} }
                }                        
        }
        child{ node [proactive2node] {\shortstack[c]{[4,8)  }}
                child{ node [strategy1and2node] {\shortstack[c]{[4,6) }} 
                	child{ node [normalnode] {\shortstack[c]{[4,5)  }} }
    				child{ node [normalnode] {\shortstack[c]{[5,6)  }} }
                }
               child{ node [normalnode] {\shortstack[c]{[6,8)  }} 
                	child{ node [strategy1and2node] {\shortstack[c]{[6,7) }} }
    				child{ node [normalnode] {\shortstack[c]{[7,8)  }} }
                }                        
        }
    ; 
    \end{tikzpicture}}
\end{figure}
\end{vldbpaper}

\begin{example} \label{eg:globalstrategy}
    In Figure~\ref{fig:binary-tree-strategy-gen-proactive}, for an integer domain $[0,8)$, we show a binary tree decomposition for its global strategy $\StrategyMatrixGlobal$. This strategy 
    consists of $(2^3+2^2+2^1+1)$
    row counting queries (RCQs), where each RCQ corresponds to the counting query with the predicate range indicated by a node in the tree.  We use $\StrategyMatrixGlobal_{[a,b)}$ to denote the RCQ with a range $[a,b)$ in the global strategy matrix. 

    The first workload $\WorkloadRaw_1$ consists of a single query with a range predicate $[0,7)$. 
    Its answer can be composed by summing over noisy responses to three RCQs in the global strategy matrix, ($\StrategyMatrixGlobal_{[0,4)}$, $\StrategyMatrixGlobal_{[4,6)}$, $\StrategyMatrixGlobal_{[6,7)}$). 
    The second workload $\WorkloadRaw_2$ has two queries with range predicates ($[2,6),[3,7)$). 
    It can be answered using $\StrategyRaw_2$ = ($\StrategyMatrixGlobal_{[2,4)}$, $\StrategyMatrixGlobal_{[4,6)}$, $\StrategyMatrixGlobal_{[3,4)}$, $\StrategyMatrixGlobal_{[6,7)}$).
    We detail the strategy generation in Example~\ref{eg:basicstrategy}. 

    We observe that the RCQs $\StrategyMatrixGlobal_{[4,6)}$ and $\StrategyMatrixGlobal_{[6,7)}$ are common to both  $\StrategyRaw_1$ and  $\StrategyRaw_2$, thus our cache-aware DP mechanisms can potentially reuse their noisy responses to answer $\StrategyRaw_2$.
\qedsymbol
\end{example}

The accuracy analysis of the matrix mechanism only holds over full rank strategy matrices, however, the instant strategy $\StrategyRaw$ may be a very sparse matrix over the full domain, and thus, may not be full rank.
\eat{
The instant strategy $\StrategyRaw$ may be a very sparse  matrix over the full domain, and it may not be full rank. The matrix mechanism requires it to be full rank to optimize the accuracy%for the accuracy analysis to hold
~\cite[Section 4]{li2015matrix}. 
}
%To address this, we generate a compact, full rank, representation for better computation time and privacy cost saving, this ST module will choose a new partition of the full domain $dom(\Schema)$ based on the instant strategy $\StrategyRaw$, and map the instant strategy $\StrategyRaw$, workload $\WorkloadRaw$, data vector $\DataVectorRaw$ onto this partition, resulting in $\StrategyFull, \WorkloadFull, \DataVector$ respectively. To address this problem, we introduce a full-rank transform that first chooses a new partition of the full domain $dom(\Schema)$ based on the instant strategy $\StrategyRaw$. 
%Using this partition, 
We address this challenge in the FRT module, %Section~\ref{subsec:strategy-transformer},
by mapping the instant strategy $\StrategyRaw$, workload $\WorkloadRaw$, data vector $\DataVectorRaw$, 
to a compact, full-rank, efficient representation, resulting in $\WorkloadFull$, $\StrategyFull$, and $\DataVector$ respectively. 
Thus for an input $\WorkloadRaw, \DataVectorRaw$, the ST module outputs $(\StrategyFull, \WorkloadFull, \DataVector)$. 
%\footnote{
Since the cache entries should be uniquely addressable, the raw data vector $\DataVectorRaw$ and strategy $\StrategyRaw$ are used to index the cache. 
\eat{
Different raw workloads may have the same full-rank matrices. The raw forms are also used in the optional modules. %SE and PQ modules.
%} 
}

\eat{
\subsection{Structuring the cache for efficient re-use} 
\label{subsec:cache-structure}
\goal{Challenge 1: making sure the cache is reusable (needs structure). }
Our cache stores previously released noisy DP responses and related parameters; it does not store any private ground truth data.
Moreover, the cache does not interact directly with the DBMS at all. 
Therefore, the cache design may evolve independently of the DBMS or other alternative data storage systems such as graph databases. 
Consider a cache design that simply stores past noisy workload responses \WorkloadResponse. 
This cache design is used in Pioneer~\cite{pioneer}, and as we illustrate in Section~\ref{sec:evaluation}, it has two major drawbacks. 
First, this design cannot answer new semantically unrelated workload queries, such as queries over a different subset of the domain of attributes. %Semantically different
Second, this design results in a cache size that linearly increases with the number of workload queries. 
Thus, we need to structure our cache such that cached noisy responses can be reused across workload queries, while limiting the cache size.

\goal{Challenge 1: Explain that we will use MM (strategies) and make sure the domain of those strategies has a common structure that supports all possible queries (such as a tree).}
Internally, our system transforms a given workload query $\WorkloadRaw$ into a strategy query $\StrategyRaw$, following Li et al.'s matrix mechanism~\cite{li2015matrix}, which we describe in Subsection~\ref{subsec:workload-to-strategy-transform}. 
We decompose the domain of an attribute, into range intervals on a $k$-ary tree, that is, each row (or a RCQ) of the strategy matrix is a node on this tree. 
Our cache stores noisy DP responses for nodes on this tree \StrategyResponse, instead of the workload response \WorkloadResponse. 
We outline how \sysname re-uses a cached strategy response across workload queries, while satisfying the accuracy requirements, in Subsection~\ref{subsec:module-overview}. 
% present our workload-to-strategy transform that results in a low sensitivity strategy matrix,
}

\subsection{Cache-aware DP Modules}
\label{subsec:module-design}
Our system supports two novel classes of cache-aware DP mechanisms:  \emph{Modified Matrix Mechanism} (MMM) and the \emph{Relax Privacy} Mechanism (RP). 
%Given a new workload matrix $(\WorkloadRaw, \alpha,\beta)$ and the cache $\Cache_{\StrategyMatrixGlobal}$, our system chooses
%\miti{I interpreted this as saying there is one matrix for MMM, one matrix for RP, but we just have one strategy matrix common to all modules? Also the cache or accuracy parameters are not involved in the strategy transformation. Can remove this sentence and just continue with next one.}\thomas{agree}
%an instant strategy matrix $\StrategyRaw\subseteq \StrategyMatrixGlobal$ for each DP mechanism such that the mechanism uses the least privacy cost to achieve the accuracy requirement of the new query.
These two cache-aware DP mechanisms commonly use the ST module to transform an input $\WorkloadRaw, \DataVectorRaw$ to $(\WorkloadFull, \StrategyFull, \DataVector)$. 
%\miti{In the new structure, I'm just moving this paragraph up.} 
\eat{
These two cache-aware DP mechanisms have a common module, named  the \emph{Strategy Transformer} (ST).
This module selects a basic instant strategy $\StrategyRaw\subseteq \StrategyMatrixGlobal$, based on the given workload matrix  $\WorkloadRaw$. 
The selection guideline is that $\StrategyRaw$ supports $\WorkloadRaw$ and it has a small $\|\StrategyRaw\|_1$ and hence a small privacy cost (based on Proposition~\ref{prop:mmdp}). 
It is possible that $\StrategyRaw$ is a very sparse  matrix over the full domain, and it may not be full rank.
The matrix mechanism requires the strategy matrix to be full rank for the accuracy analysis to hold~\cite[Section 4]{li2015matrix}. 
%To address this, we generate a compact, full rank, representation for better computation time and privacy cost saving, this ST module will choose a new partition of the full domain $dom(\Schema)$ based on the instant strategy $\StrategyRaw$, and map the instant strategy $\StrategyRaw$, workload $\WorkloadRaw$, data vector $\DataVectorRaw$ onto this partition, resulting in $\StrategyFull, \WorkloadFull, \DataVector$ respectively. 
To address this problem, we introduce a full-rank transform that first chooses a new partition of the full domain $dom(\Schema)$ based on the instant strategy $\StrategyRaw$. 
Using this partition, we map the instant strategy $\StrategyRaw$, workload $\WorkloadRaw$, data vector $\DataVectorRaw$, 
to a compact, full-rank, efficient representation, resulting in $\StrategyFull, \WorkloadFull$, and $\DataVector$ respectively. 
}
%We will first present the common interfaces in each of the cache-aware DP mechanisms when given a temporary strategy module $\StrategyRaw_t$, and then present the modules that select the optimal temporary strategy matrix $\StrategyRaw_t$ for minimizing the privacy cost. 
Each cache-aware DP mechanism implements two interfaces (similar to APEx~\cite{ge_apex:_2017}) using the mapped representations $\StrategyFull, \WorkloadFull, \DataVector$, as well as the cache $\Cache_{\StrategyMatrixGlobal}$:

\squishlist 
    \item {
    The \emph{\textsc{answerWorkload}} interface answers a workload $\WorkloadFull$ using the cache $\Cache_{\StrategyMatrixGlobal}$ and an instant strategy $\StrategyFull$ to derive fresh noisy strategy responses, using the ground truth from the DB. 
    Each implementation of this interface also updates the cache $\Cache_{\StrategyMatrixGlobal}$.  %$\Cache_{\StrategyFullGlobal}$ and hence $\Cache_{\StrategyMatrixGlobal}$.
    }
    \item {
    The \emph{\textsc{estimatePrivacyBudget}} interface estimates the minimum privacy budget $\epsilon$ required by the \textsc{answerWorkload} interface to achieve 
    the $(\alpha, \beta)$ accuracy requirement. 
    %In particular, this interface also outputs a set of cached strategy responses that the former interface consumes. %%Xi's TODO: I don't quite understand this line.
    }
\squishend

%Both cache-aware DP mechanisms have a basic module, named  the Strategy Transformer (ST), to select a temporary strategy $\StrategyRaw_t\subseteq \StrategyMatrixGlobal$, based on the given workload $\WorkloadFull_t$. The selection guideline is that $\StrategyRaw_t$ supports $\WorkloadFull_t$ and has a small $\|\StrategyRaw_t\|_1$ and hence a small privacy cost (based on Proposition~\ref{prop:mmdp}). 
For the first cache-aware DP mechanism, MMM, we have two additional optional modules, namely \emph{Strategy Expander} (SE) and \emph{Proactive Querying} (PQ), which modify the instant strategy $\StrategyRaw$ output by the basic ST module, for different purposes.
The SE module expands the basic  $\StrategyRaw$
with related, cached, accurate strategy rows in $\Cache_{\StrategyMatrixGlobal}$ to exploit constrained inference as discussed by Hay et al.~\cite{hay2010boosting}. 
The goal of this module is to further reduce the privacy cost of the basic instant strategy to answer the given workload $\WorkloadRaw$. 
\eat{
On the other hand, the PQ module is designed to fill the cache, without incurring any additional privacy budget over the MMM module. 
It expands $\StrategyRaw$ with disjoint strategy queries that are absent from $\Cache_{\StrategyMatrixGlobal}$. 
Therefore, it reduces the privacy cost of future queries.}
On the other hand, the PQ module is designed to fill the cache proactively, for later use by the MMM, MMM+SE, and RP mechanisms. 
It expands $\StrategyRaw$ with strategy queries that are absent from $\Cache_{\StrategyMatrixGlobal}$, without incurring any additional privacy budget over the MMM module. 
Therefore, it reduces the privacy cost of future workload queries.
% On the other hand, the PQ module expands $\StrategyRaw$ with disjoint strategy queries that are \emph{absent} from $\Cache_{\StrategyRaw}$. 
% with disjoint strategy queries that are \emph{absent} from $\Cache_{\StrategyRaw}$ that can be answered without incurring any additional privacy budget over the MMM module. This module aims to reduce the privacy cost of future queries. 

%Putting all together, for an input workload $(\WorkloadFull, \alpha, \beta)$, our system first \miti{uses the SG module to generate a $\StrategyRaw$} and then executes the \textsc{estimatePrivacyBudget} interface for MMM and RP with strategy \miti{other optional} modules (SG, SE, PQ) \miti{SG should not be at the same level as SE and PQ}, and we choose the combination that returns the lowest privacy cost $\epsilon$. 

 \begin{algorithm}[t]
     \caption{\sysname Overview} 
     \label{algo:end-to-end}
    \begin{algorithmic}[1]
         \Require  Dataset $D$, Total privacy budget \TotalPrivacyBudget.
         \State Initialize privacy loss $\ConsumedPrivacyBudget=0$,
         cache $\Cache_{\StrategyMatrixGlobal} = \{(\mathbb{w},-,-,0) |\mathbb{w}\in \StrategyMatrixGlobal \}$
         \Repeat
             \State Receive $(\WorkloadQuery, \alpha, \beta)$ from analyst
             \State $\WorkloadMatrix$ $\gets$ \Call{getMatrixForm}{$\WorkloadQuery, \DomainVector$}
             \State $\StrategyRaw, \StrategyFull, \WorkloadFull$ $\gets $ \Call{generateStrategy}{$\WorkloadMatrix, \StrategyMatrixGlobal$}\label{line:main_ST}
             
             \State $(b, \epsilon_1) \gets$ \Call{MMM.estimatePrivacyBudget}{$\Cache, \StrategyFull, \WorkloadFull, \alpha,\beta$}\label{line:main_est_mmm}

            \State $\epsilon_2 \gets$ \Call{RP.estimatePrivacyBudget}{$\Cache,\StrategyFull, \WorkloadFull,\alpha,\beta$}

            \State $\StrategyRaw_e, \StrategyFull_e, \gets$ 
            \Call{SE.generateExpandedStrategy}{$\StrategyRaw,\Cache,b$} 
            
              \State $\epsilon_3 \gets$ \Call{MMM.estimatePrivacyBudget}{$\Cache, \StrategyFull_e, \WorkloadFull, \alpha,\beta$}\label{line:main_est_se}
            
            % \State $(-,-,-,\epsilon_4) \gets$ \Call{RP.estimatePrivacyBudget}{$\Cache,\StrategyFull_e, \WorkloadRaw,\alpha,\beta$}
            
             \State Pick $(\hat{M}, \hat{\StrategyFull})$ from (\textsc{MMM/RP}, $\StrategyFull/\StrategyFull_e$) that has smallest $\epsilon_i$\label{line:main_choose}

             \If{$\epsilon_i + \ConsumedPrivacyBudget \ge \TotalPrivacyBudget$}
                \State Answering \WorkloadQuery~ satisfying ($\alpha, \beta$) will exceed \TotalPrivacyBudget. Reject \WorkloadQuery.
             \EndIf

            \State  $z \gets$ \Call{$\hat{M}$.answerWorkload}{$\Cache,\hat{\StrategyFull},\WorkloadFull, \epsilon_i, \DomainVector$}\label{line:main_answer} % \alpha,\beta, \hat{\BucketMatrix}, \DataVectorRaw$}
            
            \State \Return $z$ to data analyst.
             \State $\ConsumedPrivacyBudget \gets \ConsumedPrivacyBudget + \epsilon_i$\label{line:main_budget_update} 
         \Until{no more $\WorkloadQuery$ from the analysts}
     \end{algorithmic}
 \end{algorithm}

\underline{Putting it all together}, we state the end-to-end algorithm in Algorithm~\ref{algo:end-to-end}. First, for an input workload $(\WorkloadFull, \alpha, \beta)$, our system first uses the ST module to generate a full-rank instant strategy matrix $\StrategyFull$ (line~\ref{line:main_ST}), and then executes the \textsc{estimatePrivacyBudget} interface, with the input tuple $(\WorkloadFull, \StrategyFull, \alpha, \beta)$, for the MMM, MMM+SE, and RP mechanisms (line~\ref{line:main_est_mmm}-\ref{line:main_est_se}). 
We choose the mechanism that returns the lowest privacy cost $\epsilon_i$ (line~\ref{line:main_choose}).
If the sum of this privacy cost with the consumed privacy budget is 
smaller than the total privacy budget, then the system executes the \textsc{answerWorkload} interface for the chosen mechanism, with the input tuple $(\WorkloadFull, \hat{\StrategyFull}, \epsilon_i)$ (line~\ref{line:main_answer}). 
The consumed privacy budget will increase by $\epsilon_i$ (line~\ref{line:main_budget_update}). 
(The PQ module does not impact the cost estimation for MMM, it only extends the strategy matrix $\StrategyFull$ to be answered.)
We present the MMM in Section~\ref{sec:mmm-module}, the common ST module and the MMM optional modules (SE, PQ) in Section~\ref{sec:strategy-module}, and the RP mechanism in Section~\ref{sec:rp-module}.
%\miti{First time the phrase ``strategy modules'' is used - I find it confusing with SE, ST. May be ``optional'' modules instead? No such optional modules for RP.} 
%For the rest of the paper, we will present the first cache-DP mechanism, MMM, in Section~\ref{sec:mmm-module} and its strategy modules in Section~\ref{sec:strategy-module}. 
%Then we will present the second DP mechanism, RP, and its strategy module in Section~\ref{sec:rp-module}.  
%\begin{techreport}
\begin{theorem}\label{thm:end-to-end_privacy}
\sysname, as defined in Algorithm~\ref{algo:end-to-end}, satisfies \TotalPrivacyBudget-DP.
\end{theorem}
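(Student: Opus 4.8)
The plan is to establish $\TotalPrivacyBudget$-DP by adaptive sequential composition over the rounds of the \textbf{Repeat} loop of Algorithm~\ref{algo:end-to-end}, reducing the global claim to a per-round privacy statement for each cache-aware module together with the budget guard. First I would fix a neighboring pair $D,D'$ and, without loss of generality, fix the analyst's internal coins, so that the round-$j$ query $\WorkloadQuery_j$ and its accuracy parameters are deterministic functions of the transcript of noisy answers released in rounds $1,\dots,j-1$. The key structural observation is that the only access to the sensitive instance (through $\DomainVector$) occurs inside the \textsc{answerWorkload} call on line~\ref{line:main_answer}: \textsc{getMatrixForm}, \textsc{generateStrategy} (line~\ref{line:main_ST}), \textsc{SE.generateExpandedStrategy}, all three \textsc{estimatePrivacyBudget} calls, the selection on line~\ref{line:main_choose}, and the accounting on line~\ref{line:main_budget_update} depend only on public schema/domain information, the current query and $(\alpha,\beta)$, the current contents of $\Cache_{\StrategyMatrixGlobal}$, and data-independent Monte Carlo coins. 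It is convenient to analyze the more informative mechanism that releases, at each round, the full vector $\NoisyStrategyResponse$ of fresh noisy strategy responses computed inside \textsc{answerWorkload}: the value $z$ handed to the analyst and every cache update $(\mathbb{a},b,\NoisyResponse,t)$ are post-processings of these vectors, so the cache at the start of round $j$ is a deterministic function of $\NoisyStrategyResponse_1,\dots,\NoisyStrategyResponse_{j-1}$ and public information, and may be treated as a constant in the round-$j$ analysis.

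The second step is the per-round claim: conditioned on everything released before round $j$, the round-$j$ release is $\epsilon_i$-DP in $D$, where $\epsilon_i\in\{\epsilon_1,\epsilon_2,\epsilon_3\}$ is the cost of the selected mechanism, and $\epsilon_i=0$ on the rejection branch (which reads no data). For MMM this follows from Proposition~\ref{prop:mmdp} applied to the freshly sampled strategy rows, composed with the DP post-processing theorem for the cached rows that are reused; I would cite the privacy analyses of MMM (Section~\ref{sec:mmm-module}), of MMM+SE (Section~\ref{sec:strategy-module}), and of RP (Section~\ref{sec:rp-module}) as black boxes, the RP case additionally requiring that refining a previously released Laplace sample with correlated noise costs only the reported increment. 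I would also use the property, to be shown for the PQ module (Section~\ref{sec:strategy-module}), that extending $\StrategyFull$ with proactive queries leaves the privacy cost of \textsc{answerWorkload} unchanged, so the PQ branch does not alter the accounting.

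The third step assembles these. Factoring the likelihood ratio of the entire transcript round by round, each conditional factor is bounded by $e^{\epsilon_{i_j}}$ with $\epsilon_{i_j}$ the realized round-$j$ cost (itself a function only of the earlier transcript), which is precisely adaptive sequential composition for pure DP and yields a $\big(\sum_j \epsilon_{i_j}\big)$-DP bound along every transcript. Finally, $\ConsumedPrivacyBudget$ is the running sum of the costs of the rounds that were answered and is monotone non-decreasing; the guard immediately before line~\ref{line:main_answer} answers a round only when $\epsilon_i + \ConsumedPrivacyBudget < \TotalPrivacyBudget$, so after each update $\ConsumedPrivacyBudget$ stays strictly below $\TotalPrivacyBudget$, whence $\sum_j \epsilon_{i_j} = \ConsumedPrivacyBudget \le \TotalPrivacyBudget$ on every transcript. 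Thus the full interaction is $\TotalPrivacyBudget$-DP, and post-processing down to the analyst's view $z$ preserves the guarantee.

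I expect the main obstacle to lie entirely in the per-round claim of the second step, namely that reusing cached responses (a post-processing of already released randomness) and, for RP, refining them with correlated noise, costs only the incremental budget reported by \textsc{estimatePrivacyBudget}; these are the substantive results of Sections~\ref{sec:mmm-module}--\ref{sec:rp-module} on which this theorem rests. The remaining care is purely bookkeeping: verifying that the cache is a deterministic function of the released transcript (handled by the ``release the strategy responses'' reduction above), after which the composition and budget-cap steps are routine.
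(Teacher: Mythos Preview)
Your proposal is correct and follows essentially the same approach as the paper's proof sketch: data-independence of the cost-estimation and strategy-selection steps, per-round $\epsilon_i$-DP of the chosen module (citing Proposition~\ref{prop:privacy_mmm} for MMM and the Relax Privacy result for RP), and sequential composition under the budget guard. Your version is more explicit than the paper about the adaptive nature of the composition and about treating the cache as a deterministic post-processing of the released transcript, but the underlying argument is the same.
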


\section{Modified Matrix Mechanism (MMM)}\label{sec:mmm-module}
%We've outlined the components of our system, including the cache, our modules and module interfaces in Section~\ref{sec:system_design}. 
In this section, we focus on our core cache-aware DP mechanism, namely the Modified Matrix Mechanism (MMM). We would like to 
answer a workload $\WorkloadFull$ with an $(\alpha,\beta)$-accuracy requirement using a given cache $\Cache_{\StrategyFullGlobal}$ and an instant strategy $\StrategyFull \subseteq \StrategyFullGlobal$, while minimizing the privacy cost. We will first provide intuition for the design of this mechanism. Then, we will describe the first interface \textsc{answerWorkload}  that answers a workload $\WorkloadFull$ using the instant strategy $\StrategyFull$ with the best set of parameters derived from the second interface \textsc{EstimatePrivacyBudget}. We then present how the the second interface arrives at an optimal privacy budget.
%\todo{Thomas, go through a notation pass.}

\subsection{MMM Overview}

%\textbf{Problem statement:} 
%Given a cache $\Cache_{\StrategyFullGlobal}$, we would like to answer a workload $\WorkloadFull$ with an $(\alpha,\beta)$-accuracy requirement using an instant strategy $\StrategyFull \subseteq \StrategyFullGlobal$, while minimizing the privacy cost. 

%Without using the cache, 
The cacheless matrix mechanism  (Definition~\ref{def:mm}) perturbs the ground truth response to the strategy, that is $\StrategyFull \DomainVector$, with the noise vector freshly drawn from  $Lap(b)^{|\StrategyFull|}$ to obtain $\NoisyResponseList = \StrategyFull\DomainVector+Lap(b)^{ |\StrategyFull|}$. 
An input workload is then answered using $\WorkloadFull \StrategyFull^+ \NoisyResponseList$. 
As we discussed in the background, in an accuracy-aware DP system such as APEx~\cite{ge_apex:_2017}, the noise parameter $b$ is calibrated, first through a loose bound $\LooseNoiseParameter$ and then to a tighter noise parameter $\TightNoiseParameter$, such that the workload response above meets the $(\alpha,\beta)$-accuracy requirement.
This spends a privacy budget $\frac{\|\StrategyFull\|_1}{\TightNoiseParameter}$ (Proposition~\ref{prop:mmdp}).
%This privacy budget can be reduced by reusing noisy responses from our cache.

In MMM, we seek to reduce the privacy budget spent by using the cache $\Cache$. %$\Cache_{\StrategyFullGlobal}$
Given an instant strategy matrix $\StrategyFull \subseteq \StrategyFullGlobal$, we first lookup the cache for any rows in the strategy matrix $\StrategyFull$. 
Note that not all rows in $\StrategyFull$ have their noisy responses in the cache. 
%\eat{We use $\Cache$ to denote the set of all rows in $\Cache_{\StrategyMatrixGlobal}$ with valid noisy responses. } %Already mentioned at end of 3.1. 
The cache may contain noisy responses for some rows of $\StrategyFull$, given by $\Cache \cap \StrategyFull$, whereas other rows in $\StrategyFull$ may not have cached responses.  
% If our cache contains multiple cached entries for each row; we choose the most accurate one based on the noise parameter. 
A preliminary approach would be to simply reuse all cached strategy responses, and obtain noisy responses for non-cached strategy rows by expending some privacy budget through naive MM. 
However, some cached responses may be too noisy and thus including them will lead to a higher privacy cost than the cacheless MM.
%As we describe later, we may need to compensate for inaccurate cached responses by increasing the privacy budget we spend for other rows, so that the final workload response $z$ can satisfy the accuracy requirements. 

Our key insight is that by reusing noisy responses for \emph{accurately cached} strategy rows, MMM can ultimately use a smaller privacy budget for all other strategy rows as compared to MM without cache 
%other interactive DP systems such as APEx, 
while satisfying the accuracy requirements.
Thus, out of all cached strategy rows $\Cache\cap \StrategyFull$, MMM identifies a subset of accurately cached strategy rows $\StrategyMatrixFree\subseteq \Cache\cap \StrategyFull$ that can be directly answered using their cached noisy responses, without spending any privacy budget. 
MMM only spends privacy budget on the remaining strategy rows, namely on $\StrategyMatrixPaid = \StrategyFull-\StrategyMatrixFree$. 
We refer to $\StrategyMatrixFree$ and $\StrategyMatrixPaid$ as the \emph{free strategy matrix} and the \emph{paid strategy matrix} respectively. 
MMM consists of two interfaces as indicated by Algorithm~\ref{algo:mmm}: (i) \textsc{answerWorkload}  and (ii) \textsc{estimatePrivacyBudget}. 
The second interface seeks the best pair of free and paid strategy matrices $(\StrategyMatrixFree,\StrategyMatrixPaid)$ that use the smallest privacy budget $\epsilon$ to achieve $(\alpha,\beta)$-accuracy requirement. 
The first interface will make use of this parameter configuration $(\StrategyMatrixFree,\StrategyMatrixPaid, \epsilon)$ to generate noisy responses to the workload.

\begin{algorithm}[t]
    \caption{MMM main interfaces and supporting functions}
    \label{algo:mmm}
    {
    \small \begin{algorithmic}[1]
        
        \Function{answerWorkload}{
        $\Cache, \StrategyFull, 
        \WorkloadFull, \epsilon, \DataVector$}  
        
        \State 
        $(\StrategyMatrixFree,\StrategyMatrixPaid,b_{\StrategyMatrixPaid}, \epsilon)$ from pre-run  \Call{EstimatePrivacyBudget}{$\Cache, \StrategyFull, 
        \WorkloadFull, \alpha,\beta$}  
        \label{line:answer-mmm-getparameter}

        \State (Optional) Expand $\StrategyMatrixPaid$ with PQ module (Section~\ref{subsec:proactive}) %(Algorithm~\ref{algo:proactive})
        
         %\If{isProactive}
          %      \State $\StrategyMatrixPaid' \gets $\Call{generateProactiveMatrix}{$\StrategyMatrixPaid, \Cache$} \Comment{Algorithm~\ref{algo:proactive} ensures that$\|\StrategyMatrixPaid'\|_{1} = \|\StrategyMatrixPaid\|_{1}$ and $\StrategyMatrixPaid\subseteq \StrategyMatrixPaid'$} \label{line:answer-mmm-proactive}\EndIf

        \State $\NoisyResponseList_{\StrategyMatrixPaid} \gets \StrategyMatrixPaid \DomainVector + Lap(b_{\StrategyMatrixPaid})^{|\StrategyMatrixPaid|}$ \Comment{we have $b_{\StrategyMatrixPaid} = \frac{\|\StrategyMatrixPaid\|_1}{\epsilon}$} \label{line:answer-mmm-get-paid-responses}
        
        \State Update cache $\Cache_{\StrategyMatrixGlobal}$ with $(\StrategyMatrixPaid, b_{\StrategyMatrixPaid}, \NoisyResponseList_{\StrategyMatrixPaid},t=\text{current time})$
        \label{line:answer-mmm-insert-to-cache}

		 \State $\NoisyResponseList_{\StrategyMatrixFree} \gets  [(\mathbf{w},b,\NoisyResponse,t) \in \Cache | \mathbf{w} \in \StrategyMatrixFree]$   \Comment{free cached responses for $\StrategyMatrixFree$}  \label{line:answer-mmm-get-cached-responses}
        
		\State $\NoisyResponseList \gets \NoisyResponseList_{\StrategyMatrixFree} \| \NoisyResponseList_{\StrategyMatrixPaid}$ \Comment{concatenate noisy responses for $\StrategyFull$.} \label{line:answer-mmm-return-full-response}
            \State \Return $\WorkloadFull\StrategyFull^+\NoisyResponseList$, $\epsilon$  
        \EndFunction
        \Statex

        %\LineComment{Used by MMM and SE modules.}
        \Function{EstimatePrivacyBudget}{$\Cache, \StrategyFull, 
        \WorkloadFull, \alpha,\beta$}
            \State Set upper bound 
            $b_{\top} = \frac{\|\StrategyFull\|_1}{\PrivacyBudgetPrecision}$ \Comment{$\PrivacyBudgetPrecision$ is the budget precision }
 \label{line:answer-mmm-upperbound}
 
            \State Set loose bound $\LooseNoiseParameter = \frac{\alpha\sqrt{\beta/2}}{\|\WorkloadFull \StrategyFull^+\|_F} $ \Comment{Theorem~\ref{thm:loosebound} (without cache)}
            \label{line:answer-mmm-loosebound}

            \State $\BList \gets 
            [(\mathbf{w},b,\NoisyResponse,t) \in \Cache  ~|~ \mathbf{w}\in \StrategyFull \cap \Cache, b>\LooseNoiseParameter] 
            \cup [b_L]$ 
            \label{line:answer-mmm-discretespace}
            
            \State $b_D \gets$ 
            \textsc{binarySearch}(sort($\BList$),
             \Call{checkAccuracy}{$\cdot, \Cache, \StrategyFull, \WorkloadFull, \alpha,\beta$}) 
             \Comment{Search $b_D$ in the discrete space}
        \label{line:answer-mmm-discretesearch}
        
        \State $\StrategyMatrixFree \gets [c.\mathbf{a} \in \Cache  ~|~ c.\mathbf{a} \in \StrategyFull \cap \Cache, c.b<b_{\StrategyMatrixPaid}]$ 
        and 
        $\StrategyMatrixPaid \gets \StrategyFull- \StrategyMatrixFree$ 
        
            \State $b_{\StrategyMatrixPaid} \gets$ 
            \textsc{binarySearch}($[b_D, b_{\top}]$,
             \Call{checkAccuracy}{$\cdot, \Cache, \StrategyFull, \WorkloadFull, \alpha,\beta$}) 
             \Comment{Search $b_{\StrategyMatrixPaid}$ in a continuous space}
      \label{line:answer-mmm-contsearch}

            \State \Return ( $\StrategyMatrixFree, \StrategyMatrixPaid, b_{\StrategyMatrixPaid}, \frac{\|P\|_1}{b_{\StrategyMatrixPaid}}$)
        \EndFunction
                
\begin{techreport}
         \Statex
        %\LineComment{Used by MMM and SE modules.}
        \Function{checkAccuracy}{$b_{\StrategyMatrixPaid}, \Cache, \StrategyFull, \WorkloadFull, \alpha,\beta$}
        \State $(\StrategyMatrixFree, \BList_{\StrategyMatrixFree}) \gets [(\mathbf{w},b,\NoisyResponse,t) \in \Cache  ~|~ \mathbf{w} \in \StrategyFull \cap \Cache,  b<b_{\StrategyMatrixPaid}]$ 
        and 
        $\StrategyMatrixPaid \gets \StrategyFull- \StrategyMatrixFree$ 
       \label{line:answer-mmm-setF}
        
        \State Sample size $N=10000$ and failure counter $n_f = 0$  
        \label{line:answer-mmm-mcstart}

        \For{$i=1,\ldots,N$}
            \State $n_f$++ if $\|\WorkloadFull\StrategyFull^+ Lap(\BList_{\StrategyMatrixFree} || \BList_{\StrategyMatrixPaid})\|_{\infty} >\alpha$
            
           % \If{$\|\WorkloadFull\StrategyFull^+ Lap(\BList_{\StrategyMatrixFree} || \BList_{\StrategyMatrixPaid})\|_{\infty} >\alpha$}
            %    \State $n_f$++ 
            %\EndIf
    		%\State Sample noise $\mathbf{\eta}_{\StrategyMatrixPaid} \sim Lap(b_p)^{|\StrategyMatrixPaid|}$ and $\mathbf{\eta}_{\StrategyMatrixFree}\sim Lap(\BList_{\StrategyMatrixFree})$
    		%\State $n_f$++ if $\|\WorkloadFull\StrategyFull^+(\mathbf{\eta}_{\StrategyMatrixFree }|| \mathbf{\eta}_{\StrategyMatrixPaid})\|_{\infty} >\alpha$
        \EndFor
        \State $\beta_e = n_f/N$, $p=\beta/100$
        \State $\delta\beta = z_{1-p/2}\sqrt{\beta_e(1-\beta_e)/N}$ 
        %p=0.000002, 4.75342 , with 1-p chance, beta_e +/- \delta\beta
        %http://web.mst.edu/~dux/repository/me360/ch8.pdf
        \State \Return $(\beta_e+\delta\beta+p/2)<\beta$
        \label{line:answer-mmm-mcend}
 
        \EndFunction
        \end{techreport}
        \end{algorithmic}}
\end{algorithm}

%\subsection{Answer Strategy}
\subsection{Answer Workload Interface}
We present the first interface \textsc{answerWorkload} for the MMM. We recall that this interface is always called after the \textsc{estimatePrivacyBudget} interface which computes the best combination of free and paid strategy matrices and their corresponding privacy budget $(\StrategyMatrixFree,\StrategyMatrixPaid, b_{\StrategyMatrixPaid}, \epsilon)$. As shown in Algorithm~\ref{algo:mmm}, the \textsc{answerWorkload} interface first calls 
%the second interface \textsc{estimatePrivacyBudget} \todo{we want to remove this call} to obtain the best combination of free and paid strategy matrices and their corresponding privacy budget $(\StrategyMatrixFree,\StrategyMatrixPaid, b_{\StrategyMatrixPaid}, \epsilon)$ (Line~\ref{line:answer-mmm-getparameter}).  
the proactive module (Section~\ref{subsec:proactive}). If this module is turned on, $\StrategyMatrixPaid$ will be expanded for the remaining operations. Then this interface will answer the paid strategy matrix $\StrategyMatrixPaid$
using Laplace mechanism with the noise parameter $\PaidNoiseParameter$. We have $\PaidNoiseParameter=\frac{\|\StrategyMatrixPaid\|_1}{\epsilon}$, to ensure $\epsilon$-DP (Line~\ref{line:answer-mmm-get-paid-responses}).   Then, it updates the corresponding entries in the cache $\Cache_{\StrategyMatrixGlobal}$ (Line~\ref{line:answer-mmm-insert-to-cache}). In particular, for each query $\mathbf{w}\in \StrategyMatrixPaid$, we update its corresponding noisy parameter, noisy response, and timestamp in $\Cache_{\StrategyMatrixGlobal}$ to $b_{\StrategyMatrixPaid}, \NoisyResponse$, and the current time.   
After obtaining the fresh noisy responses $\NoisyResponseList_{\StrategyMatrixPaid}$ for the paid strategy matrix, this interface pulls the cached responses $\NoisyResponseList_{\StrategyMatrixFree}$ for the free strategy matrix from the cache and concatenate them into $\NoisyResponseList$ according to their order in the instant strategy $\StrategyFull$  (Lines~\ref{line:answer-mmm-get-cached-responses}-\ref{line:answer-mmm-return-full-response}). Finally, this interface returns a noisy response to the workload $\WorkloadFull\StrategyFull^+\NoisyResponseList$, and its privacy cost $\epsilon$. 

\begin{prop}\label{prop:privacy_mmm}
The \textsc{AnswerWorkload} interface of MMM (Algorithm~\ref{algo:mmm}) satisfies $\epsilon$-DP, where $\epsilon$ is the output of this interface. %, where $\epsilon$ is the budget returned by the second interface \textsc{EstimatePrivacyBudget}.
\end{prop}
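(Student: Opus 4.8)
The plan is to reduce the entire \textsc{answerWorkload} interface to a single invocation of the (matrix-form) Laplace mechanism on the paid strategy matrix, and to argue that everything else is either data-independent bookkeeping or post-processing. Throughout, I would regard the cache $\Cache$ at the moment of the call as fixed auxiliary input; the fact that $\Cache$ was itself produced by earlier differentially private calls is immaterial for this single-call claim and is handled by the sequential-composition argument of Theorem~\ref{thm:end-to-end_privacy}.

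First I would establish that the configuration $(\StrategyMatrixFree, \StrategyMatrixPaid, \PaidNoiseParameter, \epsilon)$ used by \textsc{answerWorkload} is a function of $(\Cache, \StrategyFull, \WorkloadFull, \alpha, \beta)$ alone, and therefore carries no information about the sensitive data vector $\DataVector$. It is obtained from the pre-run of \textsc{EstimatePrivacyBudget} (Algorithm~\ref{algo:mmm}), whose arguments are exactly $\Cache, \StrategyFull, \WorkloadFull, \alpha, \beta$; inside it, the two binary searches consult the oracle \textsc{checkAccuracy}, which only Monte-Carlo samples the quantity $\|\WorkloadFull\StrategyFull^+ Lap(\cdot)\|_{\infty}$ — this is precisely the data-obliviousness of the matrix-mechanism error noted after Theorem~\ref{thm:loosebound} — and the split of $\StrategyFull\cap\Cache$ into the free rows $\StrategyMatrixFree$ and the paid rows $\StrategyMatrixPaid = \StrategyFull - \StrategyMatrixFree$ is decided by comparing the frozen cached noise parameters $c.b$ against $\PaidNoiseParameter$. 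The optional PQ expansion of $\StrategyMatrixPaid$ on the following line likewise reads only $\Cache$ and the global tree $\StrategyMatrixGlobal$ (Section~\ref{subsec:proactive}), and, for whatever $\StrategyMatrixPaid$ it produces, the invariant $\PaidNoiseParameter = \|\StrategyMatrixPaid\|_1/\epsilon$ is maintained (equivalently $\epsilon = \|\StrategyMatrixPaid\|_1/\PaidNoiseParameter$, the returned budget). Hence, conditioned on $\Cache$, I may treat $\StrategyMatrixFree, \StrategyMatrixPaid, \PaidNoiseParameter, \epsilon$ as fixed, data-independent objects.

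Second I would isolate the unique access to $\DataVector$, which happens on Line~\ref{line:answer-mmm-get-paid-responses}: $\NoisyResponseList_{\StrategyMatrixPaid} = \StrategyMatrixPaid\DataVector + Lap(\PaidNoiseParameter)^{|\StrategyMatrixPaid|}$ with $\PaidNoiseParameter = \|\StrategyMatrixPaid\|_1/\epsilon$. This is exactly the Laplace mechanism (Section~\ref{sec:background}) applied to the matrix $\StrategyMatrixPaid$ with noise scale $\PaidNoiseParameter = \|\StrategyMatrixPaid\|_1/\epsilon$, which satisfies the condition $b \ge \|\StrategyMatrixPaid\|_1/\epsilon$, so the release of $\NoisyResponseList_{\StrategyMatrixPaid}$ satisfies $\epsilon$-DP. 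Every remaining step is $\DataVector$-free: the cache update on Line~\ref{line:answer-mmm-insert-to-cache} writes only $\NoisyResponseList_{\StrategyMatrixPaid}$ and the fixed parameters; the free responses fetched on Line~\ref{line:answer-mmm-get-cached-responses} are read entirely from the fixed $\Cache$ and address rows of $\StrategyMatrixFree$, which is disjoint from $\StrategyMatrixPaid$, so the update just performed cannot touch them; and the concatenation on Line~\ref{line:answer-mmm-return-full-response} followed by the product $\WorkloadFull\StrategyFull^+\NoisyResponseList$ is deterministic linear algebra.

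Finally I would conclude by post-processing. The complete output of \textsc{answerWorkload} — the returned answer $\WorkloadFull\StrategyFull^+\NoisyResponseList$, the reported budget $\epsilon$, and the resulting modification of $\Cache$ — is a deterministic function of $\NoisyResponseList_{\StrategyMatrixPaid}$ and of the data-independent quantities $(\Cache, \StrategyFull, \StrategyMatrixFree, \StrategyMatrixPaid, \PaidNoiseParameter, \WorkloadFull, \epsilon)$. Since the release of $\NoisyResponseList_{\StrategyMatrixPaid}$ is $\epsilon$-DP and post-processing does not degrade the guarantee~\cite{privacy_book}, the interface is $\epsilon$-DP for the announced $\epsilon$. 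I expect the first step to be the main obstacle: one must verify that \emph{neither} the free/paid partition \emph{nor} the binary-searched noise parameter leaks anything about $\DataVector$ — which rests on the cached entries being frozen during the current call and on the matrix-mechanism error being independent of the data — and that, when PQ is enabled, the relation $\PaidNoiseParameter = \|\StrategyMatrixPaid\|_1/\epsilon$ is preserved so that the Laplace bound still applies to the expanded $\StrategyMatrixPaid$.
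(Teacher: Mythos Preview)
Your proposal is correct and follows essentially the same approach as the paper's proof sketch: both argue that the Laplace mechanism on the paid strategy $\StrategyMatrixPaid$ with scale $\PaidNoiseParameter = \|\StrategyMatrixPaid\|_1/\epsilon$ is $\epsilon$-DP, and that the free responses together with the final concatenation and linear transform are post-processing. Your version is considerably more careful than the paper's in making explicit that the free/paid split and the searched noise parameter are data-independent (resting on the data-obliviousness of the matrix-mechanism error and on the cache being frozen during the call), and in checking that the PQ expansion preserves the relation $\PaidNoiseParameter = \|\StrategyMatrixPaid\|_1/\epsilon$; the paper's sketch leaves these points implicit.
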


As the final noisy response vector $\NoisyResponseList$ to the strategy $\StrategyFull$ is concatenated from 
$\NoisyResponseList_{\StrategyMatrixFree}$ and $\NoisyResponse_{\StrategyMatrixPaid}$,  its distribution is equivalent to a response vector perturbed by a vector of Laplace noise with parameters: $\BList = \BList_{\StrategyMatrixFree} || \BList_{\StrategyMatrixPaid}$, where 
$\BList_{\StrategyMatrixFree}$ is a vector of noise parameters for the cached entries in $\StrategyMatrixFree$ with length $|\StrategyMatrixFree|$ and $\BList_{\StrategyMatrixPaid}$ is a vector of the same value $b_{\StrategyMatrixPaid}$  with length $|\StrategyMatrixPaid|$.
%$\BList_{\StrategyMatrixFree} =[b_{\StrategyMatrixFree} ~|~ \mathbf{w} \in \StrategyMatrixFree]$. 
%\miti{You have already defined above $\PaidNoiseParameter=\frac{\|\StrategyMatrixPaid\|_1}{\epsilon}$. May be you mean to say: $\BList_{\StrategyMatrixFree} =[b_{\StrategyMatrixFree} ~|~ \mathbf{w} \in \StrategyMatrixFree]$
%} 
This differs from the standard matrix mechanism with a single scalar noise parameter. 
We derive its error term next.

\begin{prop} \label{prop:mmm-error}
Given an instant strategy $\StrategyFull=(\StrategyMatrixFree||\StrategyMatrixPaid)$ with a vector of 
$k$ noise parameters $\BList=\BList_{\StrategyMatrixFree} || \BList_{\StrategyMatrixPaid}$, the error to a workload $\WorkloadFull$ using the \textsc{AnswerWorkload} interface of MMM (Algorithm~\ref{algo:mmm}) is 
\begin{equation}
\|\WorkloadFull\StrategyFull^+ Lap(\BList)\|     \end{equation}
where $Lap(\BList)$ draws independent noise from $Lap(\BList[1])$, $\ldots, Lap(\BList[k])$ respectively.
We can simplify its expected total square error as
\begin{equation}
    \|\WorkloadFull\StrategyFull^+ diag(\BList)\|_F^2
\end{equation} where $diag(\BList)$ is a diagonal matrix with $diag(\BList)[i,i] = \BList[i]$. 
\eat{
    The $\alpha^2$-expected total square error bound is 
    \begin{equation}
        \|\WorkloadFull\StrategyFull^+ diag(\BList)\|_2^2 \leq \alpha^2
    \end{equation} where $diag(\BList)$ is a diagonal matrix with $diag(\BList)[i,i] = \BList[i]$. 
    The $(\alpha,\beta)$-worst error bound is 
    \begin{equation}
        \Pr[\|\WorkloadFull\StrategyFull^+ Lap(\BList)\|_{\infty} \geq \alpha ] \leq \beta
    \end{equation}
}
\end{prop}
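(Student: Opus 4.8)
The plan is to unwind the \textsc{AnswerWorkload} interface and show that the vector $\NoisyResponseList$ it multiplies by $\WorkloadFull\StrategyFull^+$ equals $\StrategyFull\DataVector$ plus a vector $\noisev$ of \emph{independent} Laplace variables whose $i$-th coordinate has scale $\BList[i]$; once that is pinned down, both claims follow from a rank identity and a second-moment computation. By Algorithm~\ref{algo:mmm} (Line~\ref{line:answer-mmm-return-full-response}), $\NoisyResponseList$ is the concatenation $\NoisyResponseList_{\StrategyMatrixFree}\,\|\,\NoisyResponseList_{\StrategyMatrixPaid}$ ordered to match $\StrategyFull=\StrategyMatrixFree\,\|\,\StrategyMatrixPaid$. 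For the paid block, Line~\ref{line:answer-mmm-get-paid-responses} sets $\NoisyResponseList_{\StrategyMatrixPaid}=\StrategyMatrixPaid\DataVector+\LaplaceMechanism(b_{\StrategyMatrixPaid})^{|\StrategyMatrixPaid|}$, and every coordinate of $\BList_{\StrategyMatrixPaid}$ is $b_{\StrategyMatrixPaid}$, so this block is $\StrategyMatrixPaid\DataVector$ plus a fresh Laplace vector of the right scales. For the free block, each entry $(\mathbf{w},b,\NoisyResponse,t)\in\Cache$ with $\mathbf{w}\in\StrategyMatrixFree$ was written by some earlier \textsc{AnswerWorkload} call, where $\NoisyResponse$ was set to the true answer of $\mathbf{w}$ plus one $\LaplaceMechanism(b)$ sample; since the database is static and the Full-rank Transformer only re-partitions the domain (preserving RCQ answers), that true answer is exactly the corresponding coordinate of $\StrategyMatrixFree\DataVector$.

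The independence we need — among the rows of $\StrategyMatrixFree$, among the various past workloads that populated them, and between all of these and the fresh draws for $\StrategyMatrixPaid$ — holds because each invocation of the Laplace mechanism draws i.i.d. noise and because $\StrategyMatrixFree\cap\StrategyMatrixPaid=\emptyset$, so the cache write on Line~\ref{line:answer-mmm-insert-to-cache} leaves the $\StrategyMatrixFree$ entries untouched. Therefore $\NoisyResponseList=\StrategyFull\DataVector+\noisev$ with $\noisev$ distributed as $\LaplaceMechanism(\BList)$. Because the ST module makes $\StrategyFull$ full column rank, $\StrategyFull^+\StrategyFull=I$, hence $\WorkloadFull\DataVector-\WorkloadFull\StrategyFull^+\NoisyResponseList=-\WorkloadFull\StrategyFull^+\noisev$, so the error norm is distributed as $\|\WorkloadFull\StrategyFull^+\LaplaceMechanism(\BList)\|$ — the first claim. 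For the expected total square error, set $B=\WorkloadFull\StrategyFull^+$; then $\mathbb{E}[\|B\noisev\|_2^2]=\sum_{j}\sum_{i}B_{ji}^2\,\mathbb{E}[\noisev[i]^2]$, since the cross terms $\mathbb{E}[\noisev[i]\noisev[i']]$ with $i\ne i'$ vanish by independence and zero mean; as $\mathbb{E}[\noisev[i]^2]$ is proportional to $\BList[i]^2$, this equals $\|B\,diag(\BList)\|_F^2=\|\WorkloadFull\StrategyFull^+diag(\BList)\|_F^2$ up to the fixed Laplace second-moment constant.

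The routine parts are the rank identity and the second-moment sum; the step that requires care is the independence claim, namely that the noise carried by the cached responses reused for $\StrategyMatrixFree$ is independent of the noise freshly sampled for $\StrategyMatrixPaid$ and of the noise in the other cached rows. This rests on the database being fixed (so a cache entry really is ``true answer $+$ independent Laplace'') together with the invariant that each row of $\StrategyMatrixGlobal$ is refreshed by at most one mechanism invocation per workload, each with its own i.i.d. draws; I would record this as a short invariant satisfied by every \textsc{AnswerWorkload} implementation and invoke it here (it is also what is needed for the companion privacy claim, Proposition~\ref{prop:privacy_mmm}).
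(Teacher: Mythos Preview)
Your proof is correct and follows essentially the same approach as the paper's: establish that $\NoisyResponseList=\StrategyFull\DataVector+Lap(\BList)$, use $\StrategyFull^+\StrategyFull=I$ to get the error $\WorkloadFull\StrategyFull^+Lap(\BList)$, then expand the expected squared norm using independence and zero mean to kill the cross terms. The paper's proof is terser---it asserts the noise model in the text preceding the proposition rather than justifying the independence of cached and fresh draws as you do---and it silently drops the Laplace second-moment constant (writing $\mathbb{E}[Lap(b)^2]=b^2$), which you correctly flag.
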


\subsection{Estimate Privacy Budget Interface}
The second interface \textsc{EstimatePrivacyBudget} chooses the free and paid strategy matrices and the privacy budget to run the first interface for MMM. This corresponds to the following questions:
%We have yet to determine \StrategyMatrixFree~and \PaidNoiseParameter, we thus need to answer the following questions: 
\squishlist
    \item[(1)] Which cached strategy rows out of $\Cache\cap \StrategyFull$ should be included in the free strategy matrix $\StrategyMatrixFree$? The choice of $\StrategyMatrixFree$ directly determines the paid strategy matrix $\StrategyMatrixPaid$ as $\StrategyFull-\StrategyMatrixFree$. 
    \item[(2)]  Given $\StrategyMatrixPaid$ and $\PaidNoiseParameter$, the privacy budget paid by MMM is given by $\epsilon = \|\StrategyMatrixPaid\|_1 / \PaidNoiseParameter =\|\StrategyFull-\StrategyMatrixFree\|_1 / \PaidNoiseParameter$. To minimize this privacy budget, what is the maximum noise parameter value $\PaidNoiseParameter$ that can be used to answer $\StrategyMatrixPaid$ while meeting the accuracy requirement? 
\squishend 
%\xh{Then say ``Next, we will describe how to select sufficiently accurately answered queries from the cache and how to determine $\PaidNoiseParameter$.''}
%Intuitively, by selecting accurate cached rows for \StrategyMatrixFree, we can tolerate a larger value for \PaidNoiseParameter. 
%Through our use of the cache, we expect the paid noise parameter \PaidNoiseParameter to be greater than the loose noise parameter \LooseNoiseParameter (\PaidNoiseParameter > \LooseNoiseParameter).

A baseline approach to the first question is to simply set $\StrategyMatrixFree = \Cache\cap\StrategyFull$, that is, we reuse all cached strategy responses. 
This approach may reuse inaccurate cached responses with large noise parameters, which results in a larger $\epsilon$ (or a smaller $\PaidNoiseParameter$) to achieve the given accuracy requirement than answering the entire $\StrategyFull$ by resampling new noisy responses without using the cache. 
%We compare the privacy budget \PaidEpsilon~ paid in the baseline approach where $\StrategyMatrixFree = \Cache \cap \StrategyFull$ with the budget paid by a cacheless MM ($\StrategyMatrixFree = \emptyset$). 

\begin{example}
Continuing with Example~\ref{eg:globalstrategy}, we have an instant strategy $\StrategyFull$ for the workload $\WorkloadRaw_1$ with range predicate $[0,7)$ mapped to a partitioned domain $\{[0,4), [4,6), [6,7)\}$. The mapped workload and instant strategy are shown in Figure~\ref{fig:mmm-use-inaccurate-cache-entries_simple}. 
For simplicity, we use the expected square error to illustrate the drawback of the baseline approach, but the same reasoning applies to $(\alpha,\beta)$-worst error bound.  Without using the cache, when we set $\BList = [10,10,10]$, we achieve an expected error  $\|\WorkloadFull\StrategyFull^+ diag(\BList)\|^2_F$ = 300 for the workload $\WorkloadFull$. 
Suppose the cache has an entry for the first RCQ $[0,4)$ of the strategy and a noise parameter $b_c=15$. Using this cached entry, the noise vector becomes $\BList = [15, b_{\StrategyMatrixPaid}, b_{\StrategyMatrixPaid}]$, and the expected square error is $\|\WorkloadFull\StrategyFull^+ diag(\BList)\|^2_F=15^2+2b_{\StrategyMatrixPaid}^2$.  
To achieve the same or a smaller error than the cacheless MM, we need to set  $b_{\StrategyMatrixPaid}\leq \sqrt{(300-15^2)/2} \approx 6.12$ for the remaining entries in the strategy. This tighter noise parameter $b_{\StrategyMatrixPaid}$ corresponds to a larger privacy budget. 
\qedsymbol
\end{example}
\begin{figure}
    \centering
    \[
    \WorkloadFull = 
    \begin{bmatrix}
        1 & 1  & 1
    \end{bmatrix}
    ,\ \StrategyFull = 
    \left[
    \begin{array}{cccc}
        1 & 0  & 0 \\ 
        \hline 
        0 & 1 & 0\\
        0 & 0 & 1 
    \end{array}
    \right]
     ,\ \BList = 
     \left[
    \begin{array}{c}
     b_c \\
      \hline 
      b \\
      b 
    \end{array}
    \right]
     ,\ \DomainVector_{1} = 
    \left[
    \begin{array}{c}
       \DataVectorRaw{\left[0,4\right)} \\
       \hline
        \DataVectorRaw{\left[4,6\right)} \\
         \DataVectorRaw{\left[6,7\right)} \\
    \end{array}
    \right]
    \] \vspace{-4mm}
    \caption{
    Consider $\WorkloadRaw_1 = \{[0,7)\}$ with its corresponding mapped workload matrix, instant strategy, noise vector, and data vector. Reusing a cached response for the first row with noise parameter $b_c$ requires a smaller noise parameter $b$ (and hence a bigger privacy budget) for the other rows than the cacheless MM to achieve the same accuracy level. 
    }
    \label{fig:mmm-use-inaccurate-cache-entries_simple}
\end{figure}

% Using all the strategy rows in the cache may require a higher privacy budget than the basic MM. For example, in Figure~\ref{fig:mmm-use-inaccurate-cache-entries_simple}, we present an example sequence of two workloads, where the second workload query is asked at a higher accuracy than the first workload. 
% Irrespective of the privacy budget spent on the uncached strategy rows of $A_2$, the last two queries of $W_2$ will never reach their accuracy requirement $3\alpha/4$. \todo{Fix this sentence to refer to the formula.}
% No matter how much privacy budget spent on the remaining strategy rows, we cannot achieve the $(3\alpha/4,\beta)$-accuracy bound for $W$.

%We address these questions in our cache-aware privacy cost estimation, while ensuring that the final workload response satisfies the accuracy requirements. 

\subsubsection{Privacy Cost Optimizer}
We formalize the two aforementioned questions as an optimization problem, subject to the accuracy requirements, as follows. \\
\begin{mdframed}[style=MyFrame]
    \textbf{Cost estimation (CE) problem:} 
    Given a cache $\Cache$ and an instant strategy matrix $\StrategyFull$, 
    determine $\StrategyMatrixFree \subseteq (\StrategyFull \cap \Cache)$ (and $\StrategyMatrixPaid=\StrategyFull-\StrategyMatrixFree)$ and $b_{\StrategyMatrixPaid} \in [\LooseNoiseParameter, b_{\top}]$
    that minimizes the paid privacy budget 
    $\epsilon =\frac{ \|\StrategyMatrixPaid\|_1}{\PaidNoiseParameter}$     subject to accuracy requirement: 
    \begin{center}
    $\|\WorkloadFull\StrategyFull^+ diag(\BList_{\StrategyMatrixFree}||\BList_{\StrategyMatrixPaid})\|_F^2 \leq \alpha^2$
or
            $\Pr[\|\WorkloadFull \StrategyFull^+ Lap(\BList_{\StrategyMatrixFree}||\BList_{\StrategyMatrixPaid}) \|_{\infty} \geq \alpha ] \leq \beta$. 
    \end{center}
    \eat{
    where
     $\BList_{\StrategyMatrixFree}= [c.b \in \Cache  ~|~ c.\mathbf{a} \in \StrategyMatrixFree]$ 
     \xh{This notation $c.b$ and $c.a$ for $\BList_{\StrategyMatrixFree}$ is not used or explained anywhere else. It may be good to introduce it in Def. 3.1 or here when it is first time used.}
     and $\BList_{\StrategyMatrixPaid}=[b_{\StrategyMatrixPaid}~|~ \mathbf{a} \in \StrategyMatrixPaid]$.}
    %$b_{\top}= \frac{\|\StrategyFull\|_1}{\RemainingPrivacyBudget}$, and $\RemainingPrivacyBudget$ is the remaining privacy budget. %\xh{but $b_{\top}$ should be a global constant, independent of $\StrategyMatrixPaid$?}
\end{mdframed}

In this optimization problem, the lower bound for $\PaidNoiseParameter$ is the loose bound for the cacheless MM (Equation~\eqref{eq:mm-loose-noise-param-bound}), and the upper bound $b_{\top}$ is $\frac{\|A\|_1}{\PrivacyBudgetPrecision}$, where $\PrivacyBudgetPrecision$ is the smallest possible privacy budget. %\footnote{{The default value for $\PrivacyBudgetPrecision$ is the precision for the cache-aware MC simulation presented in Section~\ref{subsub:disc-cont-searches}}}.

%we can bound $\PaidNoiseParameter$ as follows. The cacheless MM uses the noise parameter $\LooseNoiseParameter$ (Equation~\ref{eq:mm-loose-noise-param-bound}), thus, our cache-aware MMM mechanism should use a $\PaidNoiseParameter$ at least be as large as $\LooseNoiseParameter$ ($\PaidNoiseParameter \geq \LooseNoiseParameter$).The upper-bound $b_{\top}$ is derived from the remaining privacy budget $\RemainingPrivacyBudget$ that can be used to answer the full strategy $\StrategyFull$. \xh{why is this a meaning upper bound? B can be very large, ended with a very small $b_{\top}$}

%A naive solver is to check whether every possible pair of $F \subseteq (\StrategyFull \cap \Cache)$ and $b_P \in [b_L,\infty)$ can lead to $(\alpha,\beta)$-accurate response to the workload $W$. This checking can be done by using a cache-aware MC simulation.

%\miti{This is a nice segue overall; the problem I have is that the brute force solution - or any solution - assumes you have a way to tell whether the accuracy requirements are met or not. The Cache-aware MC is now a little paragraph in Section 4.3.2. May be highlight it here that we need a way to tell this?} 
In a brute-force solution to this problem, we can search over all possible pairs of $\StrategyMatrixFree \subseteq (\StrategyFull \cap \Cache)$ and $\PaidNoiseParameter \in [\LooseNoiseParameter,b_{\top}]$, and check whether every possible pair of $(\StrategyMatrixFree, \PaidNoiseParameter)$ can lead to an accurate response.
%an $(\alpha,\beta)$-accurate response to the workload $\WorkloadFull$. To assess whether the accuracy requirements are met, we present a cache-aware MC simulation in Section~\ref{subsub:disc-cont-searches}. 
In this solution, the search space for $\StrategyMatrixFree$ will be $O(2^{|\StrategyFull \cap \Cache|})$ and thus the total search space will be $O\left(2^{|\StrategyFull \cap \Cache|}\cdot \log_2(|[\LooseNoiseParameter,b_{\top}]|)\right)$ if we apply binary search within $[\LooseNoiseParameter,b_{\top}]$. 
Hence, we need another way to efficiently determine optimal values for $(\StrategyMatrixFree, \PaidNoiseParameter)$.

%Explain Cache-aware MC simulation here. Then say "The search space is $O(2^{|\StrategyFull \cap \Cache|}\cdot log_2(|[b_L,b_{\top})|)$. Hence, we need more efficient solver.

\subsubsection{Simplified Privacy Cost Optimizer}
\label{subsubsec:simplified-priv-optimizer}
%In this section, we present a much smaller search space for  $\StrategyMatrixFree$ and $\PaidNoiseParameter$ that can output solutions improves the noise parameter of the cacheless MM. 
We present a simplification to arrive at a much smaller search space for  ($\StrategyMatrixFree$, $\PaidNoiseParameter$), while ensuring that $\PaidNoiseParameter$ improves over the noise parameter of the cacheless MM.  We observe that, if we perturb the paid strategy matrix with noise parameter $b_{\StrategyMatrixPaid}$ and choose cached entries with noise parameters smaller than $b_{\StrategyMatrixPaid}$, we will have a smaller error than a cacheless MM with a noise parameter $b=b_{\StrategyMatrixPaid}$ for all the queries in the strategy matrix. This motivates us to consider the following search space for  $\StrategyMatrixFree$. 
When given  $\PaidNoiseParameter$, we choose a free strategy matrix fully determined by this noise parameter: 
\begin{equation} \label{eq:Fbp}
\StrategyMatrixFree_{\PaidNoiseParameter} = \{c.\mathbf{a} \in \Cache ~|~ c.\mathbf{a} \in \Cache \cap  \StrategyFull, c.b \leq \PaidNoiseParameter\},
\end{equation}
and formalize a simplified optimization problem. 
%This allows us to search for the optimal $\PaidNoiseParameter$ in the optimization problem. 

%We reduce the search space for $\StrategyMatrixFree$ and $\PaidNoiseParameter$, while ensuring that the output $\PaidNoiseParameter$ improves over $\LooseNoiseParameter$ for basic MM. 

\begin{mdframed}[style=MyFrame]
    \textbf{Simplified CE problem:} 
       Given a cache $\Cache$ and an instant strategy matrix $\StrategyFull$, 
    determine $\PaidNoiseParameter\in [b_L,b_{\bot}]$ (and $\StrategyMatrixFree=\StrategyMatrixFree_{\PaidNoiseParameter}$,
    $\StrategyMatrixPaid=\StrategyFull-\StrategyMatrixFree$)
    %\ge \LooseNoiseParameter$ \miti{We haven't motivated \LooseNoiseParameter}
    that minimizes the paid privacy budget 
    $\epsilon = \frac{\|\StrategyMatrixPaid\|_1}{\PaidNoiseParameter}$ 
    subject to: 
    \begin{center}
    $\|\WorkloadFull\StrategyFull^+ diag(\BList_{\StrategyMatrixFree}||\BList_{\StrategyMatrixPaid})\|_F^2 \leq \alpha^2$
or
            $\Pr[\|\WorkloadFull \StrategyFull^+ Lap(\BList_{\StrategyMatrixFree}||\BList_{\StrategyMatrixPaid}) \|_{\infty} \geq \alpha ] \leq \beta$. 
    \end{center}
    \eat{
    where 
     $\BList_{\StrategyMatrixFree}= [c.b \in \Cache  ~|~ c.\mathbf{a} \in \Cache \cap \StrategyFull, c.b \leq \PaidNoiseParameter]$ and $\BList_{\StrategyMatrixPaid}=[b_{\StrategyMatrixPaid}~|~ \mathbf{a} \in \StrategyMatrixPaid]$.}
     %$b_{\top}= \frac{\|\StrategyMatrixPaid\|_1}{\RemainingPrivacyBudget}$ and $\RemainingPrivacyBudget$ is the remaining privacy budget. 
\end{mdframed}

\begin{thm} \label{thm:simplifiedce}
The optimal solution to simplified CE problem incurs a smaller privacy cost $\epsilon$ than the privacy cost $\epsilon_{\StrategyMatrixFree = \emptyset}$ of the matrix mechanism without cache, i.e., MMM with $\StrategyMatrixFree=\emptyset$. 
%For each $\PaidNoiseParameter\in [\LooseNoiseParameter,b_{\top}]$, we consider $F=\{r_i\in A\cap C \mid b_i\leq b_P\}$ for MMM, that leads to accuracy $(\alpha,\beta)$-accuracy bound where $\alpha\leq \alpha'$ the accuracy guarantee for the cacheless/basic MM using noise parameter $b_P$ for the full strategy.  
\end{thm}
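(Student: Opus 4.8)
The plan is to exhibit a single explicit feasible point of the simplified CE problem whose privacy cost already equals that of cacheless MMM, and then invoke optimality. Let $\TightNoiseParameter$ denote the noise parameter that MMM with $\StrategyMatrixFree=\emptyset$ (i.e.\ cacheless MM) applies uniformly to all rows of $\StrategyFull$: for the $(\alpha,\beta)$-worst-error criterion this is the tightened parameter returned by the continuous search above $\LooseNoiseParameter$, and for the $\alpha^2$-expected-error criterion it is the exact calibrated value. In either case $\TightNoiseParameter$ lies in the search range $[\LooseNoiseParameter,b_{\top}]$ of $\PaidNoiseParameter$, and $\epsilon_{\StrategyMatrixFree=\emptyset}=\|\StrategyFull\|_1/\TightNoiseParameter$. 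I will plug $\PaidNoiseParameter:=\TightNoiseParameter$ into the simplified problem; this determines $\StrategyMatrixFree=\StrategyMatrixFree_{\TightNoiseParameter}$, $\StrategyMatrixPaid=\StrategyFull-\StrategyMatrixFree$, and a concatenated noise vector $\BList=\BList_{\StrategyMatrixFree}\|\BList_{\StrategyMatrixPaid}$ with $\BList[j]\le\TightNoiseParameter$ for every $j$ --- the free coordinates because $\StrategyMatrixFree_{\TightNoiseParameter}$ collects exactly the cached strategy rows with $c.b\le\TightNoiseParameter$, and the paid coordinates with equality.

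The first task is to check this point is feasible, i.e.\ that replacing the uniform noise vector $\TightNoiseParameter\mathbf 1$ by the coordinatewise-smaller $\BList$ preserves accuracy. For the expected-error criterion this is immediate: by Proposition~\ref{prop:mmm-error} the expected total square error equals $\|\WorkloadFull\StrategyFull^+ diag(\BList)\|_F^2=\sum_j \BList[j]^2\,\|(\WorkloadFull\StrategyFull^+)_{\cdot j}\|_2^2$, which is nondecreasing in each $\BList[j]$; hence it is at most its value at $\BList=\TightNoiseParameter\mathbf 1$, which is the cacheless error and is $\le\alpha^2$ by the choice of $\TightNoiseParameter$. For the $(\alpha,\beta)$-worst-error criterion, feasibility is equivalent to $\Pr[\|\WorkloadFull\StrategyFull^+ Lap(\BList)\|_\infty\ge\alpha]\le\Pr[\|\WorkloadFull\StrategyFull^+ Lap(\TightNoiseParameter\mathbf 1)\|_\infty\ge\alpha]$ (the right-hand side being $\le\beta$ by the choice of $\TightNoiseParameter$), and this stochastic-dominance inequality is the one genuinely non-routine step: it fails for general symmetric noise (e.g.\ $\pm 1$-valued), so it must use the unimodality of the Laplace density.

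To prove it I would use the decomposition of a Laplace as a scaled Laplace plus independent symmetric noise. Writing $\lambda_j=\BList[j]/\TightNoiseParameter\in(0,1]$, the identity $\tfrac{1+\lambda_j^2\TightNoiseParameter^2 t^2}{1+\TightNoiseParameter^2 t^2}=\lambda_j^2+(1-\lambda_j^2)\tfrac{1}{1+\TightNoiseParameter^2 t^2}$ on characteristic functions shows $Lap(\TightNoiseParameter)\stackrel{d}{=}Lap(\BList[j])+Z_j$, with $Z_j$ independent and symmetric (it is $0$ with probability $\lambda_j^2$ and $Lap(\TightNoiseParameter)$ otherwise). Applying this per coordinate, $\WorkloadFull\StrategyFull^+ Lap(\TightNoiseParameter\mathbf 1)\stackrel{d}{=}V+\WorkloadFull\StrategyFull^+ Z$ where $V:=\WorkloadFull\StrategyFull^+ Lap(\BList)$ and $\WorkloadFull\StrategyFull^+ Z$ is an independent symmetric vector. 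It then suffices to show that adding an independent symmetric vector to $V$ cannot decrease $\Pr[\|V\|_\infty\ge\alpha]$. But $V$ has a symmetric log-concave density (a linear image of a product of Laplace densities; when $\WorkloadFull\StrategyFull^+$ is rank-deficient, run the argument inside the affine span of $V$), so Anderson's inequality gives $\Pr[V+w\in B]\le\Pr[V\in B]$ for the symmetric convex body $B=\{x:\|x\|_\infty<\alpha\}$ and every shift $w$; integrating over $w\sim\WorkloadFull\StrategyFull^+ Z$ yields $\Pr[\|V+\WorkloadFull\StrategyFull^+ Z\|_\infty<\alpha]\le\Pr[\|V\|_\infty<\alpha]$, which is exactly the needed inequality.

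Given feasibility, the privacy cost of this point is $\epsilon=\|\StrategyMatrixPaid\|_1/\TightNoiseParameter\le\|\StrategyFull\|_1/\TightNoiseParameter=\epsilon_{\StrategyMatrixFree=\emptyset}$, since deleting rows never increases the $L_1$ norm $\|\cdot\|_1$ (the maximum absolute column sum). As the simplified CE problem minimizes $\epsilon$ over its feasible set, its optimum is $\le\epsilon_{\StrategyMatrixFree=\emptyset}$, strictly so whenever the cache supplies at least one row that is accurate enough ($c.b\le\TightNoiseParameter$) and lies in a maximizing column of $\StrategyFull$. I expect the only real work to be the worst-error monotonicity of the third paragraph --- that componentwise-smaller Laplace scales never inflate the $\ell_\infty$ tail --- since everything else reduces to monotonicity of a Frobenius norm and a one-line bound on column sums.
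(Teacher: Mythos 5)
Your proposal follows the same route as the paper's own proof: exhibit the cacheless tight noise parameter $b^{*}=\TightNoiseParameter$ as a feasible point of the simplified CE problem, observe that the induced noise vector is coordinatewise at most $\TightNoiseParameter$, and conclude via $\|\StrategyMatrixPaid\|_1\le\|\StrategyFull\|_1$ that the optimum costs no more than $\|\StrategyFull\|_1/\TightNoiseParameter$. The one place you go beyond the paper is the feasibility step for the $(\alpha,\beta)$-worst-error criterion, which the paper merely asserts (``we can show that \ldots satisfies the accuracy requirement''); your decomposition $Lap(\TightNoiseParameter)\stackrel{d}{=}Lap(\BList[j])+Z_j$ with $Z_j$ an independent symmetric mixture, combined with Anderson's inequality for the symmetric log-concave law of $\WorkloadFull\StrategyFull^{+}Lap(\BList)$ and the convex symmetric body $\{\|x\|_\infty<\alpha\}$, is a correct and genuinely needed justification of that monotonicity, which (as your $\pm1$ counterexample observes) is not automatic for arbitrary symmetric noise.
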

%\begin{proof}
%The full proof is presented in Appendix~\ref{app:mmm-simplified-ce-proof}.
%\end{proof}

\begin{techreport}
\begin{figure}
    \centering
    \includegraphics[scale=0.15]{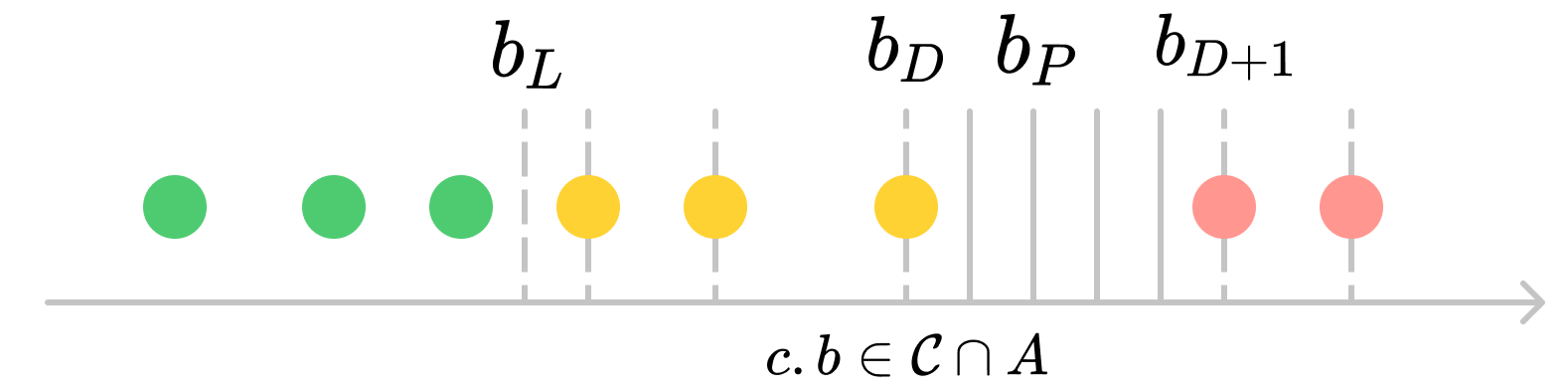}
    \caption{An example of cached noise parameters $c.b \in \StrategyFull\cap \Cache$ (dots) and the discrete (dashed lines) and continuous (full lines) binary searches through these parameters. 
    Parameters in green are accurate enough. The discrete search scans over $c.b \geq \LooseNoiseParameter$ and outputs $b_D$; the free matrix includes all cache entries in green and yellow. 
    The continuous search scans over the interval $[b_D,b_{D+1}]$ and identifies an optimal $\PaidNoiseParameter$. 
    % The lower bound ($\bot_D$) for the discrete search (dashed lines) is $\LooseNoiseParameter$ and its upper bound ($\top_D$) is the largest cached noise parameter. 
    % This search outputs the discrete bound $b_D$, which serves as the lower bound ($\bot_C$) for the continuous search (full lines), and the upper bound ($\top_C$) is the \emph{next} largest cached noise parameter. 
    %     The continuous search outputs the tight bound $\TightNoiseParameter$. 
    %     Cached entries for parameters in green and yellow are included in $\StrategyMatrixFree$ whereas the ones in red are in $\StrategyMatrixPaid$.
        %An example of cached noise parameters compared with the bounds for and outputs of the discrete and continuous binary searches. The lower bound ($\bot_D$) for the discrete search (dashed lines) is \LooseNoiseParameter, whereas its upper bound ($\top_D$) is the largest cached noise parameter.  This search outputs the discrete bound $b_D$, which serves as the lower bound ($\bot_C$) for the continuous search (full lines), and the upper bound ($\top_C$) is the \emph{next} largest cached noise parameter. The continuous search outputs the tight bound \TightNoiseParameter. Cached entries for parameters in green and yellow are to be reused, whereas the ones in red are re-sampled.
    }
    \label{fig:noise-params}
\end{figure}
\end{techreport}

\subsubsection{Algorithm for Simplified CE Problem}
\label{subsub:disc-cont-searches}
We present our search algorithm to find the best solution to the simplified CE problem, shown in the \textsc{estimatePrivacyBudget} function of Algorithm~\ref{algo:mmm}.  
\begin{vldbpaper}
    In our extended paper, we visualize our searches through the cached noise parameters.   
\end{vldbpaper}
\begin{techreport}
We visualize our searches through the cached noise parameters in Figure~\ref{fig:noise-params}. 
\end{techreport}
First, we setup the upper and lower bounds for the noise parameter $\PaidNoiseParameter$ for the simplified CE problem (Lines~\ref{line:answer-mmm-upperbound}-\ref{line:answer-mmm-loosebound}). 

{\bf Step 1: Discrete search for $\PaidNoiseParameter$.} We first search $\PaidNoiseParameter$ from the existing noise parameters in the cached strategy rows $\StrategyFull \cap \Cache$  that are greater than $\LooseNoiseParameter$ (Line~\ref{line:answer-mmm-discretespace}). We also include $\LooseNoiseParameter$ in this noise parameter list $\BList$.  Next, we sort the noise parameter list $\BList$ and conduct a binary search in this sorted list to find the largest possible $b_D\in \BList$ that meets the accuracy requirement (Line~\ref{line:answer-mmm-discretesearch}). 
%$(\alpha,\beta)$-accuracy requirement 
During this binary search,  to check if a given $\PaidNoiseParameter$ achieves $(\alpha,\beta)$-accuracy requirement, we run the function \textsc{checkAccuracy}\begin{techreport}
, defined in Algorithm~\ref{algo:mmm}
\end{techreport}. This function first places all the cached entries with noise parameter smaller than $\PaidNoiseParameter$ into $\StrategyMatrixFree$ and the remaining entries of the strategy into $\StrategyMatrixPaid$ \begin{techreport}
(Line~\ref{line:answer-mmm-setF}) 
\end{techreport}. Then it runs an MC simulation \begin{techreport}(Lines~\ref{line:answer-mmm-mcstart}-\ref{line:answer-mmm-mcend})\end{techreport} of the error $\WorkloadFull\StrategyFull^+Lap(\BList_{\StrategyMatrixFree}|| \BList_{\StrategyMatrixPaid})$ 
%using the \textsc{AnswerWorkload} interface of MMM %\miti{The proposition is relevant, but we should not attach the MC simulation to the answer workload interface; it should be a part of the estimate privacy budget interface, since it needs to be done before answering the workload.}
(Proposition~\ref{prop:mmm-error}). If a small number of the simulated error vectors have a norm bigger than $\alpha$, then this paid noise vector $\PaidNoiseParameter$ achieves $(\alpha,\beta)$-accuracy guarantee.  
This MC simulation differs from a traditional one~\cite{ge_apex:_2017} which makes no use of the cache and has only a single scalar noise value for all entries of the strategy. 
On the other hand, if the accuracy requirement is $\alpha^2$-expected total square error, we simply check if $\|\WorkloadFull\StrategyFull^+ diag(\BList_{\StrategyMatrixFree}||\BList_{\StrategyMatrixPaid})\|_2^2 \leq \alpha^2$. 
%\xh{Comment on expected square error.}

\textbf{Step 2: Refining $\PaidNoiseParameter$ in a continuous space.}
We observe that we may further increase $\PaidNoiseParameter$, by examining the interval between $b_D$, which is the output from the discrete search, and the next largest cached noise parameter, denoted by $\top_C=b_{D+1}$. %denoted by $\top_C$. 
If $\top_C$ does not exist, then we set $\top_C=b_{\top}$. 
We conduct a binary search in a continuous domain $[b_D,\top_C]$ (Line~\ref{line:answer-mmm-contsearch}).  
This continuous search does not impact the free strategy matrix $\StrategyMatrixFree$ obtained from the discrete search, as the chosen noise parameter will be strictly smaller than $b_{D+1}$. 
%\miti{If $\top_C=b_T$ then the RHS of the range might get picked.}
\begin{techreport}
The continuous search is depicted through full lines in Figure~\ref{fig:noise-params}. 
\end{techreport}
This search outputs a noise parameter $\PaidNoiseParameter$.
%Finally, this function returns the corresponding free and paid strategy matrices for $\PaidNoiseParameter$ and the privacy budget $\epsilon=\frac{\|\StrategyMatrixPaid\|_1}{\PaidNoiseParameter}$. 
Finally, this function returns $\PaidNoiseParameter$, the privacy budget $\epsilon=\frac{\|\StrategyMatrixPaid\|_1}{\PaidNoiseParameter}$, as well as the free and paid strategy matrices outputted from the discrete search. 

The search space for this simplified CE problem is $O(\log_2(|[\LooseNoiseParameter,$ $ b_{\top}]|))$. We only need to sort the cached matrix once, which costs  
$O(n_c \cdot \log(n_c))$, where $n_c=|\StrategyFull \cap \Cache|$. Hence, this approach significantly improves the brute-force search solution for the CE problem.

\begin{vldbpaper}
    \vspace{-1em}
\end{vldbpaper}

%\section{Extension modules} 
\section{Strategy Modules}\label{sec:strategy-module}
In this section, we first present the strategy transformer (ST), which is used by all of our cache-aware DP mechanisms.  We then present two optional modules for MMM: the Strategy Expander (SE) and Proactive Querying (PQ). 
\begin{vldbpaper}
Due to space constraints, all detailed algorithms for this section are included in the full paper~\cite{dpcacheextended}.
\end{vldbpaper}

\subsection{Strategy Transformer} \label{subsec:strategy-transformer}
The ST module selects an instant strategy from the given global strategy $\StrategyRaw \subseteq \StrategyMatrixGlobal$ based on the workload $\WorkloadRaw$. 
Since our cache-aware MMM and RP modules build on the matrix mechanism, we require a few basic properties for this instant strategy $\StrategyRaw$ to run the former mechanisms, with good utility. 
First, the strategy $\StrategyRaw$ should be \emph{a support} to the workload $\WorkloadRaw$~\cite{li2015matrix}, that is, it must be possible to represent each query in $\WorkloadRaw$ as a linear combination of strategy queries in $\StrategyRaw$. 
In other words, there exists a solution matrix $\mathbb{X}$ to the linear system $\WorkloadRaw = \mathbb{X}\StrategyRaw$. 
%Otherwise, we cannot run the matrix mechanism to answer $\WorkloadRaw$. 
Second, $\StrategyRaw$ should have a \emph{low $l_1$ norm}, such that the privacy cost $\epsilon=\frac{\|\StrategyRaw\|_1}{b}$ for running MM is small, for a given a noise parameter $b$ (Proposition~\ref{prop:mmdp}).  
Third, using noisy responses to $\StrategyRaw$ to answer $\WorkloadRaw$ should incur minimal \emph{noise compounding}~\cite{hay2010boosting}. 
We thus present the strategy generator (SG) component, to address all of these requirements. 
The strategy generator only uses the global strategy $\StrategyMatrixGlobal$, %in the cache
and does not use the cached responses, %without using the cached responses 
to generate an instant strategy $\StrategyRaw$ for the workload $\WorkloadRaw$. 

Last, we require that $\StrategyRaw$ must be mapped to \emph{a full rank} matrix $\StrategyFull$, such that $\StrategyFull^+\NoisyResponseList$ is the estimate of the mapped data vector $\DataVector$ that minimizes the total squared error given the noisy observations $\NoisyResponseList$ of the strategy queries $\StrategyFull$~\cite[Section 4]{li2015matrix}. 
We present a full-rank transform (FRT) component to address this last requirement. 
Thus the ST module consists of two components: the strategy generator, and the full-rank transform, run sequentially. 

\subsubsection{Strategy Generator.} 
\label{subsubsec:strategy-generator}
\begin{vldbpaper}
Consider using the global strategy $\StrategyMatrixGlobal$ as follows: to answer the first workload, we obtain the noisy strategy responses for all nodes on the tree, thereby fully populating the cache. Cached noisy responses can be reused for future workloads. 
Though $\StrategyMatrixGlobal$ supports all possible counting queries over $dom(\Schema)$, it 
 has a very high norm $\|\StrategyRaw^*\|$, equal to the tree height $\log_k(n)+1$, where $n$ is the full domain size. 
Thus, answering the first workload would require spending a high upfront privacy budget, which may not be amortized across future workloads, as they may focus on a small part of the domain with higher accuracy requirements. 
\end{vldbpaper}

\begin{techreport}
Our global strategy $\StrategyMatrixGlobal$ is a $k$-ary tree over the full domain $dom(\Schema)$, hence, it supports all possible counting queries on the full domain. 
A baseline instant strategy $\StrategyRaw$ just uses the full global strategy matrix ($\StrategyRaw=\StrategyMatrixGlobal$), thus satisfying the first requirement. 
To answer the first workload, we obtain the noisy strategy responses for all nodes on the tree, thereby fully populating the cache and reusing the cached noisy responses for future workloads. 
%Just felt that it wasn't super clear that this was just for the first workload, until the "Thus" sentence.
%The cache would be populated with noisy strategy responses for all nodes on the tree, and re-used for future workloads.  
However, this instant strategy has a very high norm $\|\StrategyRaw^*\|$, equal to the tree height $\log_k(n)+1$, where $n$ is the full domain size. 
Thus, answering the first workload would require spending a high upfront privacy budget. 
Moreover, this high upfront cost may not be amortized across future workload queries, for example, if the future queries do not require many nodes on this tree. 
Future workload queries may also have higher accuracy requirements, and we would thus need to re-sample noisy responses to the entire tree again, with a lower noise parameter. 
\end{techreport} 

To obtain a low norm strategy matrix, we only choose those strategy queries from $\StrategyMatrixGlobal$ that support the workload $\WorkloadRaw$. 
Intuitively, we wish to fill the cache with noisy responses to as many strategy queries as possible, thus we should bias our strategy generation algorithm towards the leaf nodes of the strategy tree. 
However, the DP noisy responses for the strategy nodes would be added up to answer the workload, and summing up responses to a large number of strategy leaf nodes compounds the DP noise in the workload response~\cite{hay2010boosting}. 
Thus, for each query in the workload $\WorkloadRaw$, we apply a top-down tree traversal to fetch the \emph{minimum} number of nodes in the strategy tree (and the corresponding queries in $\StrategyMatrixGlobal$) required to answer this workload query. 
Then we include all these queries into the instant strategy $\StrategyRaw$ for this workload $\WorkloadRaw$. 
%\xh{It might be good to form the line below as a lemma so that it can be used for Sec 5.3 as well.}\miti{I thought of doing this but we would need to introduce the subtree norm, and so I left it in PQ module.}
The $L_1$ norm of the output strategy matrix is then simply the maximum number of nodes in any path of the strategy tree, and it is upper-bounded by the tree height. 
We present an example strategy generation below.

\begin{example}
\label{eg:basicstrategy}
    % \eat{\xh{if Algo 3 is moved to appendix, change 'workloaddecompose' to `ST' simply.}}
    We continue with Example~\ref{eg:globalstrategy} shown in  Figure~\ref{fig:binary-tree-strategy-gen-proactive}, for an integer domain $[0,8)$.
    For the single workload query $\WorkloadRaw_1=\mathbb{w}=[0,7)$, the first iteration of our SG workload decomposition algorithm computes the overlap of $\mathbb{w}$ with its left child $c_1=\StrategyMatrixGlobal_{[0,4)}$ as $\mathbb{w}_{c1}=[0,4)$ and the overlap with its right child
    $c_2=\StrategyMatrixGlobal_{[4,8)}$ as $\mathbb{w}_{c2}=[4,7)$.
    The function only iterates once for the left child $c_1$, directly outputs that child's range $\StrategyMatrixGlobal_{[0,4)}$, as the base condition is satisfied\begin{techreport} (Line~\ref{line:decompose-workload-base-condition})\end{techreport}.
    In the next iteration for the right child $c_2$, the overlaps with both of its children are non-null ($[4,6)$ with $\StrategyMatrixGlobal_{[4,6)}$ and $[6,7)$ with $\StrategyMatrixGlobal_{[6,8)}$), and the corresponding strategy nodes are returned in subsequent iterations. 
    Since $\StrategyRaw_1$ has no overlapping intervals, $\|\StrategyRaw_1\|_1= 1 < \|\StrategyMatrixGlobal\|_1$. 
    %\miti{The next sentence seems to be repeating the same content, with emphasis on sufficiency? Seems redundant given any $\StrategyRaw$ should support its corresponding $\WorkloadRaw$ in MM.}
    %Including these three RCQs in the instant strategy $\StrategyRaw_1$ is sufficient to answer $\WorkloadRaw_1$. 
    \begin{techreport}
    
    The second workload $\WorkloadRaw_2$ has two queries with range predicates ($[2,6),[3,7)$). 
    The first workload query predicate requires the strategy nodes 
    $\StrategyMatrixGlobal_{[2,4)}$ and $\StrategyMatrixGlobal_{[4,6)}$, whereas the second query requires the following three nodes:  $\StrategyMatrixGlobal_{[3,4)}$,$\StrategyMatrixGlobal_{[4,6)}$ and $\StrategyMatrixGlobal_{[6,7)}$. 
    Hence, the second instant strategy $\StrategyRaw_2$ is a set of all of these strategy nodes.
    
    %Including ($\StrategyMatrixGlobal_{[2,4)}, \StrategyMatrixGlobal_{[4,6)}, \StrategyMatrixGlobal_{[3,4)}, \StrategyMatrixGlobal_{[6,7)}$) in the second instant strategy $\StrategyRaw_2$ can answer $\WorkloadRaw_2$, and $\StrategyRaw_2$ has common RCQs as the previous instant strategy. 

    The global strategy has an $L_1$ norm $\|\StrategyMatrixGlobal\|_1=4$.
    The matrix forms of $\StrategyRaw_1$ and $\StrategyRaw_2$ can be generated as shown in Example~\ref{ex:frt}. Both strategy matrices improve over the global strategy in terms of their $L_1$ norms: $\StrategyRaw_1= 1 < \|\StrategyMatrixGlobal\|_1$ and $\StrategyRaw_2= 2 < \|\StrategyMatrixGlobal\|_1$. 
    \end{techreport}
    We observe that though $\StrategyMatrixGlobal$ is full-rank, due to the removal of strategy queries that do not support the workloads, both $\StrategyRaw_1$ and $\StrategyRaw_2$ are not full rank. 
\qedsymbol
\end{example}

\begin{techreport}
We formalize our strategy generation algorithm in the recursive function \textsc{workloadDecompose} given in Algorithm~\ref{algo:strategy-transformer-functions}.
This function takes as input a single workload query range interval $\mathbb{w}$ 
and a node $v$ on the tree $\mathcal{T}$. %the root of the $k$-ary tree $\mathcal{T}$.
% Intuitively, it chooses strategy nodes from the top of the tree that support the workload query $\mathbb{w}$.  
It first checks if the input predicate matches the range interval for the node $v$. 
If it does, it returns that range interval (Line~\ref{line:decompose-workload-base-condition}). 
Otherwise, for each child $c$ of node $v$, it computes the \emph{overlap} $\mathbb{w}_c$ of the range interval of that child with the interval $\mathbb{w}$ (Line~\ref{line:decompose-workload-compute-overlap}). 
For instance, the overlap of the range interval $[2,6)$ with $[0,4)$ is given by $[\text(max)\{(0,2)\}, \text{min}\{(4,6)\}) = [2,4)$.
For each child with a non-null range interval overlap, the function is called recursively with that overlap $\mathbb{w}_c$ (Line~\ref{line:decompose-workload-iterative-condition}). 
This function is called with the root of the tree $\mathcal{T}$, as the second argument, and returns with the decomposition of $\mathbb{w}$ over all child nodes in the tree ($\mathbb{a}$). 
It is run for each workload RCQ $\mathbb{w} \in \WorkloadRaw$, and $\StrategyRaw$ simply includes the union of each workload decomposition.

\begin{algorithm}[t]
\caption{Strategy Transformer (ST)} 
    \label{algo:strategy-transformer-functions}
    {
    \small
    \begin{algorithmic}[1]

     \Function{decomposeWorkload}{$\mathbb{w}$, node $v$}
        \If{$v$.query == $\mathbb{w}$ }
            \Return $v$ \label{line:decompose-workload-base-condition}
        \EndIf
        \State $\mathbb{a} \gets \emptyset$
        \If{$v$ has children}
            \For{child $c$ of node $v$}
                \State $\mathbb{w}_{c}$ $\gets$ \Call{OverlappingRCQ}{$\mathbb{w}$, node $c$.query} %\LineComment{Given $\mathbb{w}=[ws, we)$, node $c$.query $=[cs, ce)$, this function returns a tuple $[\text{max}(ws, cs), \text{min}(we, ce))$ or $\emptyset$ if it does not exist.} 
                \label{line:decompose-workload-compute-overlap}
                \If{$\mathbb{w}_{c} \ne \emptyset$}
                    \State $\mathbb{a} \gets \mathbb{a} $ $\cup$ \Call{decomposeStrategy}{$\mathbb{w}_{c}$, \text{node }$c$} 
                    \label{line:decompose-workload-iterative-condition}
                \EndIf
            \EndFor
        \EndIf
        \State \Return $\mathbb{a}$
     \EndFunction
    \end{algorithmic}
    }
\end{algorithm}
\end{techreport}

\subsubsection{Full Rank Transformer (FRT)} \label{sec:fullranktransform}
We transform an instant strategy matrix $\StrategyRaw$ to a full rank matrix $\StrategyFull$ by mapping the full domain $dom(\Schema)$ of size $n$ to a new partition of the full domain of $n'\leq n$ non-overlapping counting queries or buckets.
The resulting partition should still support all the queries in the instant raw strategy $\StrategyRaw$ output by our SG.
%selected from the previous section
For efficiency, the partition should have the smallest possible number of buckets such that the transformed strategy $\StrategyFull$ will be full rank. First, we define a domain transformation matrix $\BucketMatrix$ of size $n'\times n$ that transforms the data vector $\DataVectorRaw$ over the full domain to the partitioned data vector $\DataVector$, such that $\DataVector= \BucketMatrix\DataVectorRaw$. Using $\BucketMatrix$, we can then transform a raw $\StrategyRaw$ to a full-rank $\StrategyFull$. 

\begin{defn}[Transformation Matrix]
    Given a partition of $n'$ non-overlapping buckets over the full domain $dom(\Schema)$, if the $i$th value in $dom(\Schema)$ is in the $j$th bucket, $\BucketMatrix[j,i]=1$; else, $\BucketMatrix[j,i]=0$.
\end{defn}
%as presented in Appendix~\ref{app:frt-algorithms}. 
%We provide an example transformation below. 
\begin{techreport}
 We elaborate exactly how we create a transformation matrix $\BucketMatrix$ to support strategy $\StrategyRaw$, as presented in \textsc{getTransformationMatrix} in Algorithm~\ref{algo:strategy-transformer-functions}. 
    To create $\BucketMatrix$, \textsc{getTransformationMatrix} starts with the first row of $\StrategyRaw$ (line~\ref{line:tm_init}), then iterates through the remaining rows (line~\ref{line:tm_loop}) updating the transformation matrix $\BucketMatrix$ as needed.
    If row $i$ is disjoint from all buckets, we simply add this row as a new bucket (line~\ref{line:tm_if_disjoint}).
    Otherwise, we construct a new bucket matrix $\BucketMatrix'$. 
    To do this, we first copy all rows of $\BucketMatrix$ that do not intersect with the current row of $\StrategyRaw$ (line~\ref{line:tm_copy_buckets}).
    Then, for the buckets that do intersect, we remove the intersection from that bucket and add a new bucket containing the intersection (line~\ref{line:tm_add_intersect}).
    Finally, if the row of $\StrategyRaw$ is a super-set of some buckets, we add the part of the row that is not covered by the buckets as a new bucket (line~\ref{line:tm_add_remain}).
    
    The \textsc{transformStrategy} function in  Algorithm~\ref{algo:strategy-transformer-functions}
    transforms $\StrategyRaw$ to $\StrategyFull$, using the transformation matrix $\BucketMatrix$ to determine which buckets are used for each query. 
    We first initialize $\StrategyFull$ to be the zero matrix (line~\ref{line:st_zero_init}).
    For row $i$ of $\StrategyRaw$ and row $j$ of $\BucketMatrix$, we compute whether bucket $j$ is needed to answer row $i$ by checking if $\BucketMatrix[j] \subseteq \StrategyRaw[i]$ (line~\ref{line:st_need_bucket}). 
    If bucket $j$ is needed, we set the corresponding entry of $\StrategyFull$ to $1$ (line~\ref{line:st_mark_bucket}).

\begin{algorithm}[t]
\caption{Full-rank transformer (Section~\ref{sec:fullranktransform})} 
\label{algo:strategy-transformer-functions}
    {
    \small
    \begin{algorithmic}[1]
    \Function{getTransformationMatrix}{$\StrategyRaw$}  
    \State $\BucketMatrix = \{\StrategyRaw[0]\}$\label{line:tm_init}
    \For{$i$ in range(1,$|\StrategyRaw|$) and $\StrategyRaw[i]\notin \BucketMatrix$
    }\label{line:tm_loop}
        \If{$\mathbb{t} \cdot \StrategyRaw[i]=0$ for all $\mathbb{t} \in \BucketMatrix $
        }\label{line:tm_if_disjoint}
        \State $\BucketMatrix \gets \BucketMatrix \cup \{\StrategyRaw[i]\} $ \Comment{Add a disjoint bucket $\StrategyRaw[i]$}
    
        \Else 
        \State $\BucketMatrix' \gets \{  \mathbb{t} \in \BucketMatrix ~|~
            \mathbb{t} \cdot \StrategyRaw[i] = 0
        \}$\label{line:tm_copy_buckets}
        
        \For{$\mathbb{t}$ in $\BucketMatrix$ and 
        $\mathbb{t} \cdot \StrategyRaw[i] \neq 0$
        }
            \State $\BucketMatrix' \gets \BucketMatrix' \cup (\mathbb{t}\cap  \StrategyRaw[i])
            \cup (\mathbb{t} - \StrategyRaw[i])
            $ \label{line:tm_add_intersect}
            
            %\State $\BucketMatrix' \gets \BucketMatrix' \cup (\mathbb{t} - \StrategyRaw[i])$ %if              $\mathbb{t} - \StrategyRaw[i] \neq  \emptyset$
            %\Comment{if $\BucketMatrix[j] - \StrategyRaw[i] = \emptyset$, nothing is added}
            \label{line:add_diff}
        \EndFor
        \State $\BucketMatrix' \gets \BucketMatrix' \cup (\StrategyRaw[i] - \sum_{\mathbb{t}\in \BucketMatrix' \wedge \mathbb{t}\cdot \StrategyRaw[i]\neq 0} \mathbb{t}) $\label{line:tm_add_remain}
        \State $\BucketMatrix = \BucketMatrix'$
        \EndIf
    \EndFor
    \State \Return{$\BucketMatrix$}
    \EndFunction
    \Statex
    
    \Function{transformStrategy}{$\StrategyRaw$}  
     \State $\BucketMatrix \gets $ \textsc{getTransformationMatrix}($\StrategyRaw$) \label{line:st-call-to-tm}
         \State Initialize  
        $\StrategyFull$ as a $|\StrategyRaw|\times |\BucketMatrix|$ zero-valued matrix\label{line:st_zero_init}
    \For{$i$ in range($|\StrategyRaw|$)
    and $j$ in range($|\BucketMatrix|$)}\label{line:st_transform_loop}
            \If {$\BucketMatrix[j] \subseteq \StrategyRaw[i]$}\label{line:st_need_bucket}
            \State Set $\StrategyFull[i,j] = 1$ \Comment{Bucket $j$ is contained in query $j$}\label{line:st_mark_bucket}
            \EndIf
    \EndFor
    \State \Return{$\StrategyFull, \BucketMatrix$}
    \EndFunction
    \end{algorithmic}
    }
\end{algorithm}
\end{techreport}
\begin{techreport}
\begin{example}\label{ex:frt}
We consider $\StrategyRaw_2$ from example~\ref{eg:globalstrategy}. The domain vector \DataVectorRaw~consists of the leaves of the tree depicted in Figure~\ref{fig:binary-tree-strategy-gen-proactive}. We get the following raw matrix form for $\StrategyRaw_2$.%, and the full-rank form $\StrategyFull_2$:
\[
    \StrategyRaw_2 = \begin{bmatrix}
0 & 0 & 1 & 1 & 0 & 0 & 0 & 0 \\
0 & 0 & 0 & 0 & 1 & 1 & 0 & 0 \\
0 & 0 & 0 & 0 & 0 & 1 & 0 & 0 \\
0 & 0 & 0 & 0 & 0 & 0 & 1 & 0 
\end{bmatrix},  
\StrategyFull_2 =  \begin{bmatrix}
1 & 0 & 0 & 0 \\
0 & 1 & 1 & 0 \\
0 & 0 & 1 & 0 \\
0 & 0 & 0 & 1 
\end{bmatrix}  
\]
% \begin{techreport}
    We generate the full-rank form $\StrategyFull_2$ above using $\BucketMatrix$:
    \[
    \BucketMatrix =  \begin{bmatrix}
    0 & 0 & 1 & 1 & 0 & 0 & 0 & 0 \\
    0 & 0 & 0 & 0 & 1 & 0 & 0 & 0 \\
    0 & 0 & 0 & 0 & 0 & 1 & 0 & 0 \\
    0 & 0 & 0 & 0 & 0 & 0 & 1 & 0 
    \end{bmatrix}  
    \]
% \end{techreport}
% \begin{vldbpaper}
%     $\BucketMatrix$ is used to generate $\StrategyFull_2$ and it only differs from $\StrategyRaw_2$ in row 2: \[\BucketMatrix[2]=\begin{bmatrix}0 & 0 & 0 & 0 & 1 & 0 & 0 & 0\end{bmatrix}\]
% \end{vldbpaper}

\end{example}
\end{techreport}
\begin{thm} \label{thm:fullrank}
    Given a global strategy $\StrategyMatrixGlobal$ in a $k$-ary tree structure, and an instant strategy $\StrategyRaw \subseteq \StrategyMatrixGlobal$,  \textsc{transformStrategy} outputs a strategy $\StrategyFull$ that is  full rank  and supports $\StrategyRaw$. 
\end{thm}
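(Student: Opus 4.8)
The plan is to prove the two assertions separately: (1) \emph{support}, i.e.\ $\StrategyRaw = \StrategyFull\,\BucketMatrix$ (equivalently $\StrategyFull\DataVector = \StrategyRaw\DataVectorRaw$ whenever $\DataVector = \BucketMatrix\DataVectorRaw$, so that $\StrategyFull$ over the partitioned domain answers every query of $\StrategyRaw$); and (2) \emph{full rank}, i.e.\ $\mathrm{rank}(\StrategyFull) = |\BucketMatrix|$ (full column rank, the property required by the matrix-mechanism accuracy analysis~\cite[Section~4]{li2015matrix}). Both rest on a loop invariant for \textsc{getTransformationMatrix}.

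\textbf{Loop invariant.} First I would prove, by induction on the iterations of the \textbf{for} loop in \textsc{getTransformationMatrix}, that after processing $\StrategyRaw[0],\dots,\StrategyRaw[m]$ the current bucket set $\BucketMatrix$ satisfies: (a) its rows are pairwise disjoint nonzero $\{0,1\}$-vectors whose union is $\bigcup_{k\le m}\StrategyRaw[k]$; and (b) \emph{refinement}: for every $\mathbb{t}\in\BucketMatrix$ and every $k\le m$, either $\mathbb{t}$ is entrywise $\le\StrategyRaw[k]$ or $\mathbb{t}\cdot\StrategyRaw[k]=0$; consequently $\StrategyRaw[k] = \sum_{\mathbb{t}\le\StrategyRaw[k]}\mathbb{t}$. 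The three branches of the loop body (a disjoint row, an intersecting row, and a row already present as a bucket) each preserve (a)--(b); the only point needing a short set-theoretic check is the intersecting branch, where $\mathbb{t}\cap\StrategyRaw[i]$ and $\mathbb{t}-\StrategyRaw[i]$ partition $\mathbb{t}$ while the residual $\StrategyRaw[i]-\sum_{\mathbb{t}}\mathbb{t}$ is disjoint from every retained bucket. Assertion (1) is then immediate: \textsc{transformStrategy} sets $\StrategyFull[i,j]=1$ exactly when $\BucketMatrix[j]\le\StrategyRaw[i]$, so row $i$ of $\StrategyFull\BucketMatrix$ is $\sum_{\mathbb{t}\le\StrategyRaw[i]}\mathbb{t} = \StrategyRaw[i]$.

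\textbf{Full rank --- where the $k$-ary tree enters.} Since $\StrategyRaw\subseteq\StrategyMatrixGlobal$ and $\StrategyMatrixGlobal$ is a $k$-ary tree, the ranges appearing as rows of $\StrategyRaw$ form a \emph{laminar family} $\mathcal{F}$: any two are nested or disjoint. For a bucket $\mathbb{t}_j$, its owners $\{\StrategyRaw[i]\mid \mathbb{t}_j\le\StrategyRaw[i]\}$ pairwise intersect (all contain $\mathbb{t}_j$), so by laminarity they form a chain; let $S_j\in\mathcal{F}$ be its $\subseteq$-minimum. The crux is the structural claim that $j\mapsto S_j$ is \emph{injective}: every range $R\in\mathcal{F}$ is the minimal owner of at most one bucket. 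Here I use that \textsc{getTransformationMatrix} always keeps the ``uncovered part'' of $R$, namely $R-\sum\{\StrategyRaw[i]\mid\StrategyRaw[i]\subsetneq R\}$, as a \emph{single} vector in $\BucketMatrix$: whenever a later row $\StrategyRaw[i]$ splits $R$'s current uncovered bucket, laminarity forces $\StrategyRaw[i]\subsetneq R$, so the piece $\mathbb{t}\cap\StrategyRaw[i]$ migrates to a bucket with a strictly smaller owner while $\mathbb{t}-\StrategyRaw[i]$ remains the (single) uncovered bucket of $R$. Given injectivity, I form the $|\BucketMatrix|\times|\BucketMatrix|$ submatrix of $\StrategyFull$ whose rows are $S_1,\dots,S_{|\BucketMatrix|}$ (distinct by injectivity) and order the columns (buckets) by decreasing tree-depth of $S_j$. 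Its diagonal entries are $[\mathbb{t}_j\le S_j]=1$, and whenever $\mathbb{t}_{j'}\le S_j$ with $j'\neq j$ one has $S_{j'}\subsetneq S_j$ (owner relation plus injectivity), hence $S_{j'}$ is strictly deeper than $S_j$; so the submatrix is triangular with unit diagonal, hence invertible, giving $\mathrm{rank}(\StrategyFull) = |\BucketMatrix|$.

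\textbf{Main obstacle.} The support claim and the basic loop invariant are routine bookkeeping over the three branches. The real work is the full-rank argument, and within it the injectivity of $j\mapsto S_j$: laminarity is genuinely needed --- for an overlapping family such as $\{[0,2),[1,3)\}$ the algorithm produces three buckets while $\mathrm{rank}(\StrategyRaw)=2$, so $\StrategyFull$ would be $2\times 3$ and column-rank-deficient --- and the delicate part is arguing that the uncovered region of each range is never split along a boundary interior to that range, which again follows from laminarity of $\mathcal{F}$ but requires tracking precisely how the ``$-\StrategyRaw[i]$'' step and the residual step act on existing buckets. I expect that to be the step demanding the most care.
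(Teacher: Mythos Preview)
Your proposal is correct and takes a genuinely different route from the paper's proof. The paper first isolates a counting lemma---each new row of $\StrategyRaw$ increases the number of buckets by at most one (using the tree structure to argue that a new row is either contained in a single existing bucket or contains several of them)---and from this obtains $|\BucketMatrix|\le |\StrategyRaw|$. It then runs a second induction on the rows of $\StrategyRaw$: a new row that is a combination of the current buckets leaves both the bucket count and the rank unchanged, while a linearly independent new row forces exactly one new bucket (by the lemma) and raises the rank by one, so $\mathrm{rank}(\StrategyFull)=|\BucketMatrix|$ is maintained throughout.

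Your argument reaches the same conclusion but structurally: you exploit laminarity to assign to every bucket its minimal owner in $\StrategyRaw$, prove that map is injective (which, incidentally, also yields $|\BucketMatrix|\le|\StrategyRaw|$), and then exhibit an explicit $|\BucketMatrix|\times|\BucketMatrix|$ triangular submatrix with unit diagonal. The paper's approach is lighter in that it never names the owner map or builds a submatrix, but its inductive step (``$\bar{\StrategyFull}$ has one additional linearly independent row'') silently compares matrices over \emph{different} bucket sets, which takes some care to make fully rigorous. Your approach is more constructive---it names exactly which rows of $\StrategyFull$ witness the rank---and makes the role of laminarity explicit; the injectivity argument you flag as the ``main obstacle'' is indeed where the work lies, and your sketch (Type-B buckets can only be split by a strictly smaller row, and a Type-B split precludes a nonempty residual) is the right way to close it.
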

%\begin{proof} 
%    The full proof is at Appendix~\ref{app:fullrankproof}.
%\end{proof}
\begin{vldbpaper}
    We present an example FRT in our full paper.
\end{vldbpaper}
The ST module finally outputs $\StrategyRaw$, $\StrategyFull$, as well as the transformation matrix, as it can be used to transform $\WorkloadRaw$.
We use the full-rank versions $\WorkloadFull$, $\StrategyFull$ for all invocations of the matrix mechanism (i.e. computing $\WorkloadFull\StrategyFull^+$). % for the matrix mechanism). 

\subsection{Strategy Expander} \label{subsec:strategy-expander}
We recall that our goal with \sysname is to use cached strategy responses, in order to save privacy budget on new strategy queries. 
 Section~\ref{sec:mmm-module} shows that MMM achieves this goal by directly reusing accurate  strategy responses from the cache for the basic instant matrix, i.e., by selecting $\mathbb{F} \subseteq \Cache\cap \StrategyRaw$.
%We ask the question: can other cache entries in $\Cache-\StrategyRaw$ also help to answer a given $\StrategyRaw$? In the SE module, we answer this question by providing efficient heuristics for selecting \emph{additional} cached strategy entries out of $\Cache-\StrategyRaw$, to add to \StrategyRaw. 
In this strategy expander (SE) module, we provide efficient heuristics to include
\emph{additional} cached strategy entries out of $\Cache-\StrategyRaw$, to $\StrategyRaw$ to save more privacy budget.
%\miti{We need to consistently use either the full-rank form or the raw form of either the original strategy or free matrix. I'm only using the free matrix here for the accuracy constraint.}
%, through the {Strategy Expander module}. %\emph{related, accurately cached responses} 

\begin{techreport}
\begin{example}\label{se-example-setting}
Consider the cache structure $\Cache_{\StrategyMatrixGlobal}$ in Figure~\ref{fig:binary-tree-strategy-gen-proactive} and a new workload  
$\WorkloadRaw_1=\{ [0,1),[0,2),[2,4) \}$
and so $\StrategyRaw_1= \{\StrategyMatrixGlobal_{[0,1)}$ $,\StrategyMatrixGlobal_{[1,2)},$ $\StrategyMatrixGlobal_{[2,4)} \}$. 
The cache includes entries for $\StrategyMatrixGlobal_{[0,1)},\StrategyMatrixGlobal_{[1,2)}$ at noise parameter $b$, as well as $\StrategyMatrixGlobal_{[0,4)}$ at $4b$. 
The MMM module decides to reuse the first two cache entries, and pay for $\StrategyMatrixGlobal_{[2,4)}$ at $5b$, resulting in the noise parameter vector $\BList_{1}$, as depicted in Figure~\ref{fig:strat_expander_coutner_ex}.
The SE problem is deciding which cached responses (such as  $\StrategyMatrixGlobal_{[0,4)}$) can be added to the strategy to reduce it's cost.
\end{example}

Consider a strawman solution to choosing cache entries: we simply add all strategy queries from $\Cache - \StrategyRaw$ to $\StrategyRaw$, in order to obtain an expanded strategy $\StrategyRaw_{e}$. 
Prior work by Li et al.~\cite[Theorem 6]{li2015matrix} suggests that 
%given a strategy matrix that supports a workload matrix ($\WorkloadFull=\mathbf{X}\StrategyFull$), 
adding more queries to \StrategyFull~ always reduces the error of the matrix mechanism.
However, their result hinges on the assumption that all strategy queries are answered using i.i.d draws from the \emph{same} Laplace distribution~\cite{li2015matrix}.
Our cached strategy noisy responses can be drawn at different noise parameters in the past, and thus Li et al's result does not hold. 
% \footnote{
% \miti{I don't think this footnote is necessary - it's essentially the same logic why we don't use all cached entries in MMM. we can include it if the reviewers ask us?} 
% One could adjust MM to incorporate different noise parameters and thus utilize all cached queries. However, this would result in a prohibitively large strategy matrix, and computing $\StrategyFull^+$, which is used to compute workload responses, scales quadratically with $|\StrategyFull|$.
% }
In our case, the error term for the expanded strategy is given by: $\WorkloadFull\StrategyFull_{e}^+\text{diag}(\BList_{e})$ (Proposition~\ref{prop:mmm-error}). 
In Figure~\ref{fig:strat_expander_coutner_ex}, we present a counterexample for Li et al.'s result. 

\begin{example}\label{se-example-error-term}
%Consider a workload $\WorkloadFull_{1}$, and its corresponding strategy $\StrategyFull_{1}$, given in Figure~\ref{fig:strat_expander_coutner_ex}. 
%We assume that each query in $\StrategyFull_{1}$ is obtained using the noise parameters listed in $\BList_{1}$. 
Continuing Example~\ref{se-example-setting}, we expand $\StrategyFull_{1}$ to $\StrategyFull_{1e}$ by adding a row $\StrategyMatrixGlobal_{[0,4)}$.
%which corresponds to a parent strategy node of all strategy nodes in $\StrategyFull_{1}$,
%and has a cached noise parameter that is more accurate than the paid noise parameter  in $\BList_{1}$ ($5b$).
%Miti - "all ones" - might make sense to focus on what it represents - a parent node. Removed the concrete noise parameter, emphasizing again why we chose it. 
% We compare the $\alpha^2$-expected error terms for the original strategy $\StrategyFull$ and cached parameters $\BList_{1}$
%from $\Cache \cap \StrategyFull$ % from $\Cache - \StrategyFull
% , versus, the expanded strategy $\StrategyFull_{1e}$ and $\BList_{1e}$. 
% We find that $\StrategyFull_{1e}$ has a larger error term.
% \eat{
%     Now, consider that we expand this strategy to $\StrategyFull_{1e}$ by adding an additional row, 
%     of all ones with a noise parameter of $4a$ (an accuracy better than the worst entry of $\BList_{1}$).
%     We compute the $\alpha^2$-expected error using both $\StrategyFull_{1}$ and $\StrategyFull_{1e}$ and find that although $\StrategyFull_{1e}$ includes a further constraint , it has a larger error term.
% }
In Figure~\ref{fig:strat_expander_coutner_ex}, we compute the $\alpha^2$-expected error using both $\StrategyFull_{1}$ and $\StrategyFull_{1e}$ and find that $\StrategyFull_{1e}$ has a larger error term.
\end{example}

We can see that the strawman solution can lead to a strategy with an increased error term. 
Importantly, this figure shows that adding a strategy query results in changed coefficients in $\WorkloadFull\StrategyFull_{e}^+$, that is, this added query changes the weight with which noisy responses to the original strategy queries are used to form the workload response. 
The added strategy query response must also be accurate, since adding a large, cached noise parameter to $\BList_{e}$ will also likely increase the magnitude of the error term (recall the example in Figure~\ref{fig:mmm-use-inaccurate-cache-entries_simple}).
\end{techreport}

\begin{vldbpaper}
Consider a strawman solution to choosing cache entries: we simply add all strategy queries from $\Cache - \StrategyRaw$ to $\StrategyRaw$, in order to obtain an expanded strategy $\StrategyRaw_{e}$. 
The error term for the expanded strategy is given by: $\WorkloadFull\StrategyFull_{e}^+\text{diag}(\BList_{e})$ (Proposition~\ref{prop:mmm-error}). 
In our full version of the paper~\cite{dpcacheextended}, we discuss related work hypothesizing this strawman solution~\cite{li2015matrix}, and we present an example wherein the strawman solution can lead to a strategy with an increased error term.
Intuitively, adding a strategy query results in changed coefficients in $\WorkloadFull\StrategyFull_{e}^+$, that is, this added query changes the weight with which noisy responses to the original strategy queries are used to form the workload response. 
The added strategy query response must also be accurate, since adding a large, cached noise parameter to $\BList_{e}$ will also likely increase the magnitude of the error term (recall the example in Figure~\ref{fig:mmm-use-inaccurate-cache-entries_simple}).

A brute force approach to find the optimal $\StrategyRaw_{e}$ would consider all possible subsets of cache entries from $\Cache-\StrategyRaw$ and check if the error is better than the original strategy. This induces an exponentially large search space of $O(2^{|\Cache|})$ possible solutions for $\StrategyRaw_{e}$. We propose a series of efficient heuristics to obtain a greedy solution. % to this problem.

First, we search only the strategy queries from $\Cache - \StrategyRaw$ that are accurate enough. Recall that the \textsc{MMM.estimatePrivacyBudget} interface outputs the noise parameter $\PaidNoiseParameter$. 
Just as we used $\PaidNoiseParameter$ to compute $\StrategyMatrixFree$, we can also use it to select cache entries for $\StrategyRaw_{e}$ that are at least as accurate as other entries in \StrategyMatrixFree.  These accurate cached responses will likely improve the accuracy of the workload response. We first sort the cache entries in increasing order of the noise parameters and add each entry to $\StrategyRaw_{e}$ one by one until its noise parameter is greater than $\PaidNoiseParameter$, or, we reach a maximum bound on the cached strategy size. 
This approach reduces the search space from $O(2^{|\Cache|})$ to $O(|\Cache|)$ and ensures that the additional strategy rows do not significantly increase the run-time of \sysname.

Second, we ensure that each query $\mathbb{a}$ added to $\StrategyRaw_{e}$ is a parent or a child of an existing query $\mathbb{a}' \in \StrategyRaw$. Our heirarchical global strategy $\StrategyMatrixGlobal$ structures cache entries, and induces relations between the cached noisy responses. The constrained inference problem focuses on minimizing the error term for multiple noisy responses, while following consistency constraints among them, as described by Hay et al.~\cite{hay2010boosting}. 
For example, if we add the strategy queries corresponding to the siblings and parent nodes of an existing query in $\StrategyRaw$, we obtain an additional consistency constraint which tends to reduce error.  However, if we only added the sibling node, we would not have seen as significant (if any) improvement. This heuristic selects strategy rows that are more likely to reduce the privacy budget compared to MMM (\PaidEpsilon).

The privacy budget for $\StrategyFull_e$ is estimated using the \textsc{MMM.estimate\-Privacy\-Budget} interface. 
We encapsulate SE as a module rather than integrate it with MMM, since our heuristics might fail and $\StrategyFull_{e}$ might cost a higher privacy budget than the \StrategyFull~used by MMM\begin{techreport} (as illustrated in Example~\ref{se-example-error-term})\end{techreport}. 
Since Algorithm~\ref{algo:end-to-end} chooses the \textsc{answerWorkload} interface for the module and strategy with the lowest privacy cost, in the above case, $\StrategyFull_{e}$ is simply not used. 
\miti{We analyze the conditions under which our heuristics result in SE module being selected, in our full paper~\cite{dpcacheextended}.} 
\end{vldbpaper}
\begin{techreport}
In an optimal solution to this problem, one would have to consider adding each combination of cache entries from $\Cache-\StrategyRaw$.
This induces an exponentially large search space of $O(2^{|\Cache|})$ possible solutions for $\StrategyRaw_{e}$. 
Then for each candidate $\StrategyRaw_{e}$ we need to evaluate this error term and compare it to the error for the original strategy.
We provide an example in Figure~\ref{fig:strat_expander_coutner_ex}.
Instead of navigating this large search space for $\StrategyRaw_{e}$, we propose a series of efficient heuristics to obtain a greedy solution to this problem, as presented in Algorithm~\ref{algo:strat_expander}. 
In designing our algorithm, we have three goals: 
\squishlist
    \item[(1)] \textit{Search space:} Reduce the search space from $O(2^{|\Cache|})$ to $O(|\Cache|)$.
    \item[(2)] \textit{Efficiency:} Ensure that the additional strategy rows do not significantly increase the run-time of \sysname.
    \item[(3)] \textit{Greediness:} Select strategy rows that are most likely to reduce the privacy budget from that for MMM (\PaidEpsilon). 
\squishend
We achieve the first goal above by conducting a single lookup over cache entries in $\Cache - \StrategyRaw$ (Line~\ref{line:se-loop}), which would only incur a worst-case complexity of $O(|\Cache|)$. 
%\eat{\miti{It is implied that $O(|\Cache-\StrategyRaw|)=O(|\Cache|)$.}}
We limit the number of selected cached strategy rows to $\StratExpanderLimit$ (Line~\ref{line:acc_threshold}), thereby achieving our second goal of efficiency. 
%\eat{The default value is chosen empirically to be $\StratExpanderLimit=500$.} 
%We now motivate two greedy heuristics used to achieve our third goal. 
Our greediness heuristics to select a strategy query are based on the two aforementioned factors that impact the workload error term, namely, the \emph{accuracy} of its cached noisy response, and how the noisy response is \emph{related} to noisy responses to the original strategy.

First, we must ensure that the strategy queries selected from $\Cache - \StrategyRaw$ are accurate enough.
Before conducting our cache lookup, we sort our cache entries in increasing order of the noise parameter, therefore our algorithm greedily prefers more accurate cache entries.
%\thomas{this also is the greedy component, using the best first}
Recall that the \textsc{MMM.estimatePrivacyBudget} interface outputs the noise parameter $\PaidNoiseParameter$. 
Just as we used $\PaidNoiseParameter$ to compute \StrategyMatrixFree, we can also use it to select cache entries for $\StrategyRaw_{e}$ that are at least as accurate as other entries in \StrategyMatrixFree. 
These accurate cached responses will likely improve the accuracy of the workload response.
Thus, out of cache entries in $\Cache - \StrategyRaw$, we only consider cache entries whose noise parameter is lower than $\PaidNoiseParameter$ (Line~\ref{line:acc_threshold}). 

\begin{figure}
    \centering
    \[
    \WorkloadFull_{1} = 
    \begin{bmatrix}
        1 & 0 & 0 \\ 
        1 & 1 & 0 \\
        0 & 0 & 1 
    \end{bmatrix}
    ,\ \StrategyFull_{1} = I_3 = 
    \begin{bmatrix}
        1 & 0 & 0 \\ 
        0 & 1 & 0 \\
        0 & 0 & 1 
    \end{bmatrix}
    ,\ \BList_{1} = 
    \begin{bmatrix}
        b \\
        b \\
        5b
    \end{bmatrix}
    \]

    \[
    \StrategyFull_{1e} = 
    \left[
    \begin{array}{ccc}
        1 & 0 & 0 \\ 
        0 & 1 & 0 \\
        0 & 0 & 1 \\
        \hline
        1 & 1 & 1
    \end{array}
    \right]
    ,\ \BList_{1e} = 
    \left[
    \begin{array}{c}
        b \\
        b \\
        5b\\
        \hline
        4b
    \end{array}
    \right]
    \]
    \[
    \WorkloadFull_{1}\StrategyFull_{1}^+ diag(\BList_{1}) = 
    \left[ 
    \begin{array}{rrr}
1.00 \, b & 0 & 0 \\
1.00 \, b & 1.00 \, b & 0 \\
0 & 0 & 5.00 \, b
\end{array}
\right]
    \]
    \[
    \WorkloadFull_{1}\StrategyFull_{1e}^+ diag(\BList_{1e}) = 
    \left[ 
    \begin{array}{rrrr}
0.750 \, b & -0.250 \, b & -1.25 \, b & 1.00 \, b \\
0.500 \, b & 0.500 \, b & -2.50 \, b & 2.00 \, b \\
-0.250 \, b & -0.250 \, b & 3.75 \, b & 1.00 \,b
\end{array}
\right]
    \]
    \[
    \|\WorkloadFull_{1}\StrategyFull_{1}^+ diag(\BList_{1})\| = 28b^2 ,\
    \|\WorkloadFull_{1}\StrategyFull_{1e}^+ diag(\BList_{1e})\| = 29.1b^2
    \]
    \caption{
        An input instant strategy $\StrategyFull_1$ is expanded to a strategy $\StrategyFull_{1e}$. 
        Using their noise parameter vectors ($\BList_{1}, \BList_{1e}$), we can see that $\StrategyFull_{1e}$ has an error term of larger magnitude. 
        %\miti{Notation changes I made to Fig 5: To avoid confusion with A1 and A2 from Example 3.1 onwards, and for consistency with the notation for the expanded strategy in Algorithm 4, I made $\StrategyFull_{2}$ to $\StrategyFull_{1e}$. Updated $b$ to $\BList$ for consistency with the notation table in the background and in section 4.}
    }
    \label{fig:strat_expander_coutner_ex}
\end{figure}

Second, our heirarchical global strategy $\StrategyMatrixGlobal$ structures cache entries, and induces relations between the cached noisy responses. 
% Consider a node $n$ on this tree and its children nodes; the noisy response to the parent node counting query should be equal to the \emph{sum} of noisy responses to the children nodes' queries. 
The constrained inference problem focuses on minimizing the error term for multiple noisy responses, while following consistency constraints among them, as described by Hay et al.~\cite{hay2010boosting}. 
For example, if we add the strategy queries corresponding to the siblings and parent nodes of an existing query in $\StrategyRaw$, we obtain an additional consistency constraint which tends to reduce error. 
However, if we only added the sibling node, we would not have seen as significant (if any) improvement. 
Thus, our second greedy heuristic, in line~\ref{line:parent_child}, ensures that each query $\mathbb{a}$ added to $\StrategyRaw_{e}$ is a parent or a child of an existing query $\mathbb{a}' \in \StrategyRaw$.

%\subsubsection{Integration.}
The SE algorithm generates an expanded strategy $\StrategyRaw_{e}$ and transforms it to its full-rank form  $\StrategyFull_{e}$ (Line~\ref{line:se-frt}). 
The privacy budget for $\StrategyFull_e$ is estimated using the \textsc{MMM.estimate\-Privacy\-Budget} interface. 
We encapsulate SE as a module rather than integrate it with MMM, since our heuristics might fail and $\StrategyFull_{e}$ might cost a higher privacy budget than the \StrategyFull~used by MMM (as illustrated in Example~\ref{se-example-error-term}). %, that is, $\epsilon_{e} > \PaidEpsilon$. 
Since Algorithm~\ref{algo:end-to-end} chooses to run the \textsc{answerWorkload} interface for the module and strategy with the lowest privacy cost, in the above case, $\StrategyFull_{e}$ is simply not used. 
We evaluate the success of our heuristics both experimentally and theoretically in Appendix~\ref{sec:se_heuristic_analysis}.
%Due to our modular design, in case our heuristics arrive at a $\StrategyFull_{e}$ that costs a higher privacy budget than the \StrategyFull~used by MMM, that is, $\epsilon_{e} > \PaidEpsilon$, then $\StrategyFull_{e}$ is simply not used. 

\begin{algorithm}[t]
    \caption{Strategy Expander (SE) (Section~\ref{subsec:strategy-expander})} 
    \label{algo:strat_expander}
    \begin{algorithmic}[1]
        \Function{generateExpandedStrategy}{$\StrategyRaw$, $\Cache$, $\PaidNoiseParameter$}
        \State $\StrategyRaw_{e} \gets \StrategyRaw$
        \For{$(\mathbb{a}, b, \NoisyResponse, t) \in (\Cache-\StrategyRaw)$ in an ascending order of $b$} \label{line:se-loop} 
            \If{$|\StrategyRaw_{e}|\geq |\StrategyRaw| + \StratExpanderLimit$ or $b>\PaidNoiseParameter$ \label{line:acc_threshold}  %\Comment{\miti{The first criteria says that the overall expanded strategy can have no more than lambda rows. The text says that the number of \emph{added} rows is less than lambda.}}
            }
             \State Break
            \EndIf 
             \If{$\mathbb{a}' \cdot \mathbb{a} \neq 0$ for some $\mathbb{a}'\in \StrategyRaw$}\label{line:parent_child} 
        \State $\StrategyRaw_{e} \gets \StrategyRaw_{e} \cup \{\mathbb{a}\}$
             \EndIf
        \EndFor
    \State $(\StrategyFull_{e},\BucketMatrix_{e}) \gets$  \textsc{ST.transformStrategy}($\StrategyRaw_e$) 
    \Comment{Section~\ref{subsec:strategy-transformer}} \label{line:se-frt}
        \State \Return{$\StrategyRaw_{e}, \StrategyFull_{e},\BucketMatrix_{e}$} 
        \EndFunction
    \end{algorithmic}
\end{algorithm}
\end{techreport}

\subsection{Proactive Querying}
\label{subsec:proactive}
The proactive querying (PQ) module is an optional module for MMM.
%\eat{the modified matrix mechanism (MMM)}.  
The MMM obtains fresh noisy responses only for the paid strategy matrix $\StrategyMatrixPaid$, and inserts them into the cache.  
The goal of the PQ module is to proactively populate the cache with noisy responses to a subset $\StrategyRawProactive$ out of the remaining, non-cached strategy queries of the global strategy $(\StrategyMatrixGlobal-\Cache-\StrategyRawPaid)$, where $\StrategyRawPaid$ corresponds to the raw, non-full rank form of $\StrategyMatrixPaid$. 
Thus, we run the PQ module in the function MMM.\textsc{answerWorkload}($\cdot$) after obtaining the paid strategy matrix $\StrategyMatrixPaid$. 
Our cache-aware modules, including MMM, RP and SE, can use the cached noisy responses to $\StrategyRawProactive$ to answer future instant strategy queries. 
We wish to satisfy this goal without consuming any additional privacy budget over the MMM.

We first motivate key constraints for the PQ algorithm.
First, we do not assume any knowledge of future workload query sequences.
However, all future workload queries will be transformed into instant strategy matrices, and our cache-aware mechanisms will lookup the cache for cached strategy rows. 
Second, we also do not know the accuracy requirements for future workload queries.
Future workloads may be asked at different accuracy requirements than the current workload.
Thus, we choose to obtain responses to $\StrategyRawProactive$ at the highest possible accuracy requirements without spending any additional privacy budget over that required for $\StrategyMatrixPaid$ by MMM, which is 
$\epsilon=\frac{\|\StrategyMatrixPaid\|_1}{\PaidNoiseParameter}$.  
Our key insight is to generate $\StrategyRawProactive \subseteq (\StrategyMatrixGlobal-\Cache-\StrategyRawPaid)$ such that $\|\StrategyRawPaid \cup \StrategyRawProactive\|_{1} = \|\StrategyRawPaid\|_{1}$. % = \|\StrategyMatrixPaid\|_{1}$. 
Therefore, answering both instant strategies ($\StrategyRawPaid$ and 
$\StrategyRawProactive$) with the Laplace mechanism using $\PaidNoiseParameter$ costs no more privacy budget than simply answering $\StrategyRawPaid$ at $\PaidNoiseParameter$. 
\begin{vldbpaper}
\begin{thm}\label{thm:proactive}
Given a paid strategy matrix $\StrategyRawPaid$ %, where its corresponding matrix over the full domain $\StrategyRawPaid \subseteq \StrategyRaw$, 
our proactive strategy generation algorithm outputs $\StrategyRawProactive$ such that $\|\StrategyRawPaid\cup\StrategyRawProactive\|_1=\|\StrategyRawPaid\|_1$. 
\end{thm}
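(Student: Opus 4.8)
The plan is to reduce the claim to a bound on column sums of $0/1$ tree-incidence matrices and then run an induction down the strategy tree. First I would record the basic dictionary: since the global strategy $\StrategyMatrixGlobal$ is a $k$-ary tree over $dom(\Schema)$, a row of any sub-strategy written over the full domain is the indicator of the leaves lying under some tree node, so the column of a set $\mathbb{S}$ of tree nodes at leaf $\ell$ sums to the number of nodes of $\mathbb{S}$ on the root-to-$\ell$ path --- write it $\mathrm{cov}_{\mathbb{S}}(\ell)$ --- and $\|\mathbb{S}\|_1=\max_\ell \mathrm{cov}_{\mathbb{S}}(\ell)$. With $L=\|\StrategyRawPaid\|_1$, the bound $\|\StrategyRawPaid\cup\StrategyRawProactive\|_1\ge L$ is immediate because adding nonnegative rows never shrinks a column sum; so the entire content is the reverse inequality, namely that no root-to-leaf path meets more than $L$ nodes of $\StrategyRawPaid\cup\StrategyRawProactive$.

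Next I would extract the invariant behind the PQ traversal. The algorithm walks the tree top-down carrying a budget $r$, starting from $r=L$ at the root, and hands $r-1$ to the children of a node that lies in $\StrategyRawPaid\cup\StrategyRawProactive$ and $r$ to the children of any other node; hence when $v$ is processed its budget is $r(v)=L-a(v)$, where $a(v)$ counts the nodes of $\StrategyRawPaid\cup\StrategyRawProactive$ strictly above $v$ (this count is final by the time $v$ is reached, since every later proactive addition lies outside the subtree above $v$). Let $s(v)$ be the largest number of $\StrategyRawPaid$-nodes on any path from $v$ down to a leaf, counting $v$ itself; then $s(\mathrm{root})=L$, and for every child $c$ of $v$ one has $s(v)\ge[v\in\StrategyRawPaid]+s(c)$. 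The algorithm adds a candidate $v$ (one with $v\notin\Cache\cup\StrategyRawPaid$) to $\StrategyRawProactive$ only when $s(v)\le r(v)-1$, i.e.\ $a(v)+s(v)\le L-1$. I would then prove by induction along the tree the invariant $a(v)+s(v)\le L$ for every processed $v$: the root gives $0+L=L$, and the step splits into the three cases $v\in\StrategyRawPaid$, $v\in\StrategyRawProactive$, and $v$ unselected, each handled by combining $a(c)=a(v)+[v\in\StrategyRawPaid\cup\StrategyRawProactive]$ with $s(v)\ge[v\in\StrategyRawPaid]+s(c)$, and, in the proactive case, with the extra unit of slack $a(v)+s(v)\le L-1$ supplied by the addition test.

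Finally I would close the argument one leaf at a time. For a leaf $\ell$, let $v$ be the deepest node on the root-to-$\ell$ path that the algorithm actually processes; every proactive node on that path is then an ancestor of (or equal to) $v$, so $\mathrm{cov}_{\StrategyRawPaid\cup\StrategyRawProactive}(\ell)=a(v)+[v\in\StrategyRawPaid\cup\StrategyRawProactive]+m$, where $m$ is the number of $\StrategyRawPaid$-nodes strictly between $v$ and $\ell$, and $m\le s(v)-[v\in\StrategyRawPaid]$ by the definition of $s(v)$ applied to the descendant leaf $\ell$. A one-line split on whether $v$ is paid, proactive, or unselected then gives $\mathrm{cov}_{\StrategyRawPaid\cup\StrategyRawProactive}(\ell)\le a(v)+s(v)\le L$ in the paid and unselected cases and $\mathrm{cov}_{\StrategyRawPaid\cup\StrategyRawProactive}(\ell)\le a(v)+1+s(v)\le L$ in the proactive case via the addition test. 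Hence $\|\StrategyRawPaid\cup\StrategyRawProactive\|_1\le L$, and with the trivial reverse inequality this yields $\|\StrategyRawPaid\cup\StrategyRawProactive\|_1=\|\StrategyRawPaid\|_1$.

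The main obstacle I anticipate is the bookkeeping in the middle step: tying the algorithm's threaded budget to the global quantity $L-a(v)$, checking that the proactive-addition test enforces exactly the one unit of slack the induction consumes at a proactive node, and covering leaves the traversal never reaches by passing to their deepest processed ancestor. If the implementation is more conservative --- for instance, also debiting the budget at cached nodes along the path --- the same argument goes through verbatim, and in fact only gets easier. Everything else is routine manipulation of column sums of $0/1$ tree-incidence matrices.
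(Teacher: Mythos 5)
Your proof is correct and follows essentially the same route as the paper: it identifies $\|\cdot\|_1$ with the maximum number of selected nodes on any root-to-leaf path (the paper's Lemma on the subtree norm of the root), and shows that the traversal's budget test $s(v)<r$ preserves the invariant that no path accumulates more than $\|\StrategyRawPaid\|_1$ marked-or-proactive nodes (the paper's path invariant lemma). Your write-up is in fact more careful than the paper's terse argument — making the relation $r(v)=\|\StrategyRawPaid\|_1-a(v)$ and the invariant $a(v)+s(v)\le\|\StrategyRawPaid\|_1$ explicit, and handling leaves the truncated traversal never reaches — but the underlying decomposition and key lemmas are the same.
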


% \eat{
%     \begin{proof}
%     We provide a full proof in Appendix~\ref{app:proacitveproof}. 
%     \end{proof}
% }
Our proactive generation algorithm consists of two top-down traversals of the tree.
We illustrate our proactive strategy generation algorithm through the following example.
Our detailed algorithm and theorem proofs are in the full paper~\cite{dpcacheextended}. 
\begin{example}
    \label{ex:proactive}
    In Figure~\ref{fig:binary-tree-strategy-gen-proactive},
    we apply our proactive strategy generation function to 
    to $\StrategyRawPaid_2=\{\StrategyMatrixGlobal_{[2,4)}, \StrategyMatrixGlobal_{[3,4)}\}$ for $\WorkloadRaw_2$ %the second workload 
    in our example sequence.
    %since nodes at the higher layers of the tree span a larger partition of the domain, they have a higher likelihood of being reused by other modules, and thus our algorithm prefers these nodes. 
    Our algorithm outputs $\StrategyRawProactive_2=$ $\{\StrategyMatrixGlobal_{[4,8)}$, $\StrategyMatrixGlobal_{[0,2)}$, $\StrategyMatrixGlobal_{[0,1)}$, $\StrategyMatrixGlobal_{[1,2)}$, $\StrategyMatrixGlobal_{[2,3)}\}$. 
    ($\StrategyMatrixGlobal_{[7,8)}$ is excluded from $\StrategyRawProactive_2$ since it is cached from $\StrategyRaw_1$ for $\WorkloadRaw_1$.)
    Here, $\|\StrategyRawPaid_2\|_1 = \|{\StrategyRawPaid_2 \cup \StrategyRawProactive_2}\|_1 = 2$. 
    %Intuitively, 
    Adding any other nodes from the tree to $\StrategyRawProactive_2$ will increase the number of nodes in $\StrategyRaw \cup \StrategyRawProactive$ that are on the same \emph{path} of the tree from 2 to 3 or 4, or in other words, $\|\StrategyRawPaid_2 \cup \StrategyRawProactive_2\|_1$ might increase. 
    Instead of any node in $\StrategyRawProactive_2$ we could obtain its children nodes, however, our algorithm prefers nodes at the higher layers of the tree, since they are more likely to be reused by other modules.
    Note that $\StrategyRawProactive_2$ does not only consist of disjoint query predicates. For example, $\StrategyMatrixGlobal_{[0,2)}$ and $\StrategyMatrixGlobal_{[0,1)}$ overlap. 
\end{example}
\end{vldbpaper}

\begin{techreport}
% \eat{
%     In this algorithm, we make use of the k-ary tree representation of the global strategy $\StrategyMatrixGlobal$, denoted by $\mathcal{T}$, to choose queries for $\StrategyRawProactive$.  
%     Instead of computing the 1st norm of an instant strategy matrix directly, we define the following property for the tree $\mathcal{T}$ to ensure that $\|\StrategyRawPaid \cup \StrategyRawProactive\|_{1} = \|\StrategyRawPaid\|_{1}$.
%     }
We present the function \textsc{searchProactiveNodes} for generating $\StrategyRawProactive$ in Algorithm~\ref{algo:proactive}. 
We formulate this algorithm %our algorithm for generating $\StrategyRawProactive$ 
in terms of the the $k$-ary tree representation of the global strategy $\StrategyMatrixGlobal$, denoted by $\mathcal{T}$. 
(We assume this is a directed tree with directed edges from the root to leaves and all paths refer to paths from a node to its leaves.)
For a node $v\in\mathcal{T}$, we define a binary function $\mathcal{M}_{\StrategyRawPaid}(n)$ to indicate if its corresponding query $v$.query is in $\StrategyRawPaid$.  %(line~\ref{line:proactive-init-marking}). 

\begin{definition}\label{def:subtree-norm}
    We define the subtree norm of a node $v$ as the maximum number of marked nodes across all paths $p$ from node $v$-to-leaf in the subtree of $v$, i.e.,
    \begin{equation}\label{eqn:subtree-norm}
        \mathcal{S}_{\StrategyRawPaid}(v) = 
        \max_{p \in \text{subtree}(v)} \sum_{v\in p} \mathcal{M}_{\StrategyRawPaid}(v)
    \end{equation}
\end{definition}

%\eat{We call this function the subtree norm of node $v$. }
The subtree norm of a node can be computed recursively as the sum of the mark function for that node and the maximum subtree norm of all of its children nodes, if any. 
The proactive module first recursively computes the subtree norm of each node before generating $\StrategyRawProactive$. 
This step requires a single top-down traversal of the strategy decomposition tree. 
Given that the children of each node have non-overlapping ranges, the subtree norm enables us to define the $L_1$ norm of $\StrategyRawPaid$:
\begin{lemma}
    \label{lemma:subtree-norm-root}
    The $L_1$ norm of the $\StrategyRawPaid$~matrix is equal to the subtree norm of the root of the tree with marked nodes corresponding to $\StrategyRawPaid$:
    \begin{equation}
        \mathcal{S}_{\StrategyRawPaid}(\mathcal{T}\text{.root}) = \|\StrategyRawPaid\|_1
    \end{equation}
\end{lemma}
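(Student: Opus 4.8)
The plan is to unpack both sides of the claimed identity in terms of the root-to-leaf paths of the $k$-ary strategy tree $\mathcal{T}$ and observe that they describe the same quantity. Recall that $\|\StrategyRawPaid\|_1$ here is the $L_1$ (maximum absolute column sum) matrix norm, consistent with its role as the $L_1$ sensitivity of the strategy in Proposition~\ref{prop:mmdp}. The columns of the raw matrix $\StrategyRawPaid$ are indexed by the elements of $dom(\Schema)$, i.e.\ by the leaves of $\mathcal{T}$, and its rows are indexed by the \emph{marked} nodes, namely the nodes $v$ with $\mathcal{M}_{\StrategyRawPaid}(v)=1$ (those whose RCQ lies in $\StrategyRawPaid$). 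Hence $\|\StrategyRawPaid\|_1 = \max_{\ell}\sum_{v:\,\mathcal{M}_{\StrategyRawPaid}(v)=1}\StrategyRawPaid[v,\ell]$, the maximum being over leaves $\ell$.

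The first step is to rewrite a single column sum as a count of marked ancestors. For a node $v$ of $\mathcal{T}$, the range of its RCQ is precisely the set of leaves in the subtree rooted at $v$; therefore $\StrategyRawPaid[v,\ell]=1$ iff $\ell \in \text{subtree}(v)$, which holds iff $v$ lies on the unique root-to-$\ell$ path $p_\ell$ (i.e.\ $v$ is an ancestor of $\ell$, counting $\ell$ itself when it is a leaf node). Consequently, for a fixed leaf $\ell$,
$\sum_{v:\,\mathcal{M}_{\StrategyRawPaid}(v)=1}\StrategyRawPaid[v,\ell]
= \bigl|\{v \in p_\ell : \mathcal{M}_{\StrategyRawPaid}(v)=1\}\bigr|
= \sum_{v\in p_\ell}\mathcal{M}_{\StrategyRawPaid}(v)$.

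The second step is to take the maximum over leaves and match it to Definition~\ref{def:subtree-norm}. Since $\text{subtree}(\mathcal{T}.\text{root}) = \mathcal{T}$, the collection of paths $\{p_\ell\}_{\ell}$ is exactly the collection of paths ranged over in the definition of $\mathcal{S}_{\StrategyRawPaid}(\mathcal{T}.\text{root})$, so
$\|\StrategyRawPaid\|_1 = \max_{\ell}\sum_{v\in p_\ell}\mathcal{M}_{\StrategyRawPaid}(v) = \max_{p \in \text{subtree}(\mathcal{T}.\text{root})}\sum_{v\in p}\mathcal{M}_{\StrategyRawPaid}(v) = \mathcal{S}_{\StrategyRawPaid}(\mathcal{T}.\text{root})$, which is the claim. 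Degenerate cases (a leaf with no marked ancestor, or $\StrategyRawPaid$ empty) yield $0$ on both sides and need no separate treatment; alternatively, one could prove the identity by structural induction on $\mathcal{T}$ using the recursive characterization of $\mathcal{S}_{\StrategyRawPaid}$ stated just above the lemma. The only point requiring care — and hence the main ``obstacle,'' such as it is — is fixing the conventions: that $\|\cdot\|_1$ is the max-column-sum norm and that columns of the raw strategy matrix correspond to the leaves of $\mathcal{T}$; once those are pinned down the argument is pure bookkeeping, translating ``column $\leftrightarrow$ leaf $\leftrightarrow$ root-to-leaf path'' and ``number of $1$'s in a column $\leftrightarrow$ number of marked nodes on that path.''
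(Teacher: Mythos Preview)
Your argument is correct and follows essentially the same route as the paper's own proof: both identify $\|\StrategyRawPaid\|_1$ as the maximum, over domain elements (leaves), of the number of strategy rows whose range covers that element, and then observe that the rows covering a given leaf are exactly its marked ancestors on the unique root-to-leaf path. The paper phrases this more tersely as ``the maximum number of overlapping RCQs'' together with the remark that overlapping tree nodes must lie in an ancestor--descendant relation, while you spell out the column-sum computation explicitly; the underlying idea is identical.
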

%\miti{I thought that it should just be a definition, since all we do is assume that the children of each node have non-overlapping ranges.} 

Our proactive strategy generation function, \textsc{generateProactiveStrategy}, is presented in Algorithm~\ref{algo:proactive}.
This function conducts a recursive top-down traversal of the strategy decomposition tree (line~\ref{line:proac-recursion}), and outputs a list of nodes to be fetched proactively 
into $\StrategyRawProactive$, 
such that the sum of the number of marked nodes and proactively fetched nodes for each path in the tree is at most $\|\StrategyRawPaid\|_{1}$.
\begin{lemma}
    \label{lemma:proac-invariant} 
    The proactive strategy $\StrategyRawProactive$ generated by \textsc{generateProactiveStrategy} for an input $\StrategyRawPaid$ satisfies the condition: 
    \begin{equation}
        \label{eqn:proac-invariant}
        \forall \text{ paths } p \in \mathcal{T}, 
        \sum_{v\in p} \mathcal{M}_{\StrategyRawPaid\cup \StrategyRawProactive}(v) 
        \leq     \mathcal{S}_{\StrategyRawPaid}(\mathcal{T}\text{.root}) 
        = \|\StrategyRawPaid\|_{1}
    \end{equation}
\end{lemma}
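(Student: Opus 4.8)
The plan is to establish the invariant \eqref{eqn:proac-invariant} by strengthening it into a statement about every recursive call of \textsc{generateProactiveStrategy} and proving that statement by structural induction on the strategy tree $\mathcal{T}$. The recursion carries a ``remaining budget'' $r$ down the tree, so I would prove: if \textsc{generateProactiveStrategy} is invoked on a node $v$ with budget $r \geq \mathcal{S}_{\StrategyRawPaid}(v)$, then after the call every path $p$ from $v$ to a leaf of its subtree satisfies $\sum_{u \in p} \mathcal{M}_{\StrategyRawPaid \cup \StrategyRawProactive}(u) \leq r$. The top-level invocation is on $\mathcal{T}\text{.root}$ with $r = \|\StrategyRawPaid\|_1$, which by Lemma~\ref{lemma:subtree-norm-root} equals $\mathcal{S}_{\StrategyRawPaid}(\mathcal{T}\text{.root})$, so the precondition is met and the conclusion is precisely \eqref{eqn:proac-invariant} (note the budget only decreases down the tree, so the bound $\|\StrategyRawPaid\|_1$ in fact holds for every node-to-leaf path).

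For the induction I would use the recursive characterization of the subtree norm, $\mathcal{S}_{\StrategyRawPaid}(v) = \mathcal{M}_{\StrategyRawPaid}(v) + \max_{c} \mathcal{S}_{\StrategyRawPaid}(c)$ over the children $c$ of $v$, with $\mathcal{S}_{\StrategyRawPaid}(v) = \mathcal{M}_{\StrategyRawPaid}(v)$ when $v$ is a leaf. In the base case ($v$ a leaf) the only path is $\{v\}$; if $v$ is marked then $r \geq 1$ and the bound is immediate, and if $v$ is unmarked then the algorithm enrolls $v$ into $\StrategyRawProactive$ only when the budget leaves room, so $\mathcal{M}_{\StrategyRawPaid \cup \StrategyRawProactive}(v) \leq r$ in either case. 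In the inductive step, write $m' = \mathcal{M}_{\StrategyRawPaid \cup \StrategyRawProactive}(v) \in \{0,1\}$ for the contribution of $v$ after the algorithm's decision and recall that the recursion passes budget $r - m'$ to each child (it decrements by one exactly when $v$ will lie on paths of $\StrategyRawPaid \cup \StrategyRawProactive$). The crux is to check $r - m' \geq \mathcal{S}_{\StrategyRawPaid}(c)$ for every child $c$, so the hypothesis applies to $c$: if $m' = \mathcal{M}_{\StrategyRawPaid}(v)$ (either $v$ was already in $\StrategyRawPaid$, or the algorithm did not enroll it) then $r - m' \geq \mathcal{S}_{\StrategyRawPaid}(v) - \mathcal{M}_{\StrategyRawPaid}(v) = \max_{c} \mathcal{S}_{\StrategyRawPaid}(c) \geq \mathcal{S}_{\StrategyRawPaid}(c)$; and if the algorithm enrolled an unmarked $v$ (so $m' = 1$, $\mathcal{M}_{\StrategyRawPaid}(v) = 0$) it does so only when $r > \mathcal{S}_{\StrategyRawPaid}(v)$, hence $r - 1 \geq \mathcal{S}_{\StrategyRawPaid}(v) = \max_{c} \mathcal{S}_{\StrategyRawPaid}(c) \geq \mathcal{S}_{\StrategyRawPaid}(c)$. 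Writing a $v$-to-leaf path $p$ as $\{v\} \cup p'$ for a child-to-leaf path $p'$ and applying the hypothesis to each child gives $\sum_{u \in p} \mathcal{M}_{\StrategyRawPaid \cup \StrategyRawProactive}(u) = m' + \sum_{u \in p'} \mathcal{M}_{\StrategyRawPaid \cup \StrategyRawProactive}(u) \leq m' + (r - m') = r$, completing the step.

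The main obstacle is not the induction itself but extracting from the pseudocode of \textsc{generateProactiveStrategy} the two structural facts the argument hinges on: (i) the budget delivered to a child is exactly the parent's budget minus the parent's post-decision path contribution $m'$, and (ii) an unmarked node is enrolled into $\StrategyRawProactive$ only when the remaining budget strictly exceeds its subtree norm, so the induction-hypothesis precondition $r \geq \mathcal{S}_{\StrategyRawPaid}(\cdot)$ is never broken. A related point to spell out is the handling of nodes already in the instant strategy $\StrategyRaw$ or in the cache $\Cache$ but absent from $\StrategyRawPaid$: the algorithm does not enroll such a node, and the proof must confirm that whatever it does with the budget in this branch still preserves $r \geq \mathcal{S}_{\StrategyRawPaid}(\text{child})$ (which again reduces to the identity $\mathcal{S}_{\StrategyRawPaid}(v) = \max_c \mathcal{S}_{\StrategyRawPaid}(c)$ for an unmarked $v$, together with any slack the algorithm insists on before decrementing); since such nodes are conservatively charged against capacity but do not count in $\StrategyRawPaid \cup \StrategyRawProactive$, this can only help the bound. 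Once these reading-off steps are done, Lemma~\ref{lemma:subtree-norm-root} converts $r = \|\StrategyRawPaid\|_1$ into the stated right-hand side and the proof closes.
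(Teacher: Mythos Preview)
Your proposal is correct and follows the same underlying idea as the paper---tracking the budget $r$ down the tree and using the guard $v.s < r$ on line~\ref{line:proac-condition}---but formalizes it as a careful structural induction, whereas the paper offers only a two-sentence sketch (the invariant holds for $\StrategyRawProactive=\emptyset$, and the guard prevents any violating addition). Your strengthened invariant $r \geq \mathcal{S}_{\StrategyRawPaid}(v)$ at each recursive call and the case analysis on marked/enrolled/skipped nodes are exactly what the paper's sketch leaves implicit; one small correction is that in Algorithm~\ref{algo:proactive} a cached-but-unmarked node is not charged against $r$ at all, but as you note $m'=0$ there and the child budget $r \geq \mathcal{S}_{\StrategyRawPaid}(v)=\max_c \mathcal{S}_{\StrategyRawPaid}(c)$, so the induction hypothesis still applies.
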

% \eat{
%     \begin{proof}
%     We present a proof by invariance in Appendix~\ref{app:proacitveproof}.
%     \end{proof}
% }

The second argument $r$ in the function \textsc{searchProactiveNodes}
represents the number of \textit{remaining} nodes that can be fetched proactively for the subtree originating at node $v$. 
Thus in the first call, we pass the root node for the first argument, %(Line~\ref{line:proac-root-call}), 
and the second argument is initially set to $\|\StrategyRawPaid\|_{1}$.
We decrement $r$ whenever we encounter a marked node (line~\ref{line:proac-marked-condition-decrement}) or  when we add a node to the proactive output list (line~\ref{line:proac-add-node-to-list}). 
In the latter case, we require that the node is not cached and that $r$ is greater than the subtree norm of the node, $\mathcal{S}_{\StrategyRawPaid}(v)$ or $s$, as seen in line~\ref{line:proac-condition}. 
This condition ensures that we can safely add node $v$ to the proactive list, while achieving Equation~\eqref{eqn:proac-invariant}. (We prove all PQ module lemmas and theorems in Appendix~\ref{app:proacitveproof}.) 

%\miti{Xi: Thomas said that the $\|\StrategyRawPaid\|_1=\|\StrategyPaid\|_1$ equality is not included in the FRT so I removed it.}
\begin{thm}\label{thm:proactive}
Given a paid strategy matrix $\StrategyRawPaid$ %, where its corresponding matrix over the full domain $\StrategyRawPaid \subseteq \StrategyRaw$, 
Algorithm~\ref{algo:proactive} outputs $\StrategyRawProactive$ such that $\|\StrategyRawPaid\cup\StrategyRawProactive\|_1=\|\StrategyRawPaid\|_1$. 
\end{thm}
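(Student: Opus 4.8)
The plan is to prove the claimed equality as two inequalities. The direction $\|\StrategyRawPaid\cup\StrategyRawProactive\|_1 \ge \|\StrategyRawPaid\|_1$ is immediate: since $\StrategyRawProactive\subseteq\StrategyMatrixGlobal-\Cache-\StrategyRawPaid$ is disjoint from $\StrategyRawPaid$, the matrix for $\StrategyRawPaid\cup\StrategyRawProactive$ is obtained from that of $\StrategyRawPaid$ by appending $0/1$ rows, which can only increase column sums and hence the $L_1$ (maximum absolute column sum) norm. So all the content lies in the direction $\|\StrategyRawPaid\cup\StrategyRawProactive\|_1 \le \|\StrategyRawPaid\|_1$, and I would obtain this by reusing the two lemmas already in hand.

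First I would note that the argument underlying Lemma~\ref{lemma:subtree-norm-root} applies to an arbitrary set of tree nodes, not only to $\StrategyRawPaid$: for a $k$-ary-tree strategy the column of the strategy matrix indexed by a leaf $\ell$ has a $1$ exactly in the rows whose range contains $\ell$, i.e. in the selected nodes on the root-to-$\ell$ path, so the $L_1$ norm of any tree-node strategy equals the maximum, over all root-to-leaf paths $p$, of the number of selected nodes on $p$. Applying this to $\StrategyRawPaid\cup\StrategyRawProactive$ and then invoking Lemma~\ref{lemma:proac-invariant} gives $\|\StrategyRawPaid\cup\StrategyRawProactive\|_1 = \max_{p}\sum_{v\in p}\mathcal{M}_{\StrategyRawPaid\cup\StrategyRawProactive}(v) \le \mathcal{S}_{\StrategyRawPaid}(\mathcal{T}\text{.root}) = \|\StrategyRawPaid\|_1$, where the last equality is Lemma~\ref{lemma:subtree-norm-root}. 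Combining the two inequalities finishes the theorem. Thus the theorem is a short corollary; the substance lives in Lemma~\ref{lemma:proac-invariant}, which I expect to be the real obstacle.

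To prove Lemma~\ref{lemma:proac-invariant} I would show that the top-down recursion \textsc{searchProactiveNodes}$(v,r)$ of Algorithm~\ref{algo:proactive} maintains two invariants on the argument $r$ at the moment $v$ is processed: (i) $r = \|\StrategyRawPaid\|_1 - A(v)$, where $A(v)$ counts the proper ancestors of $v$ that end up in $\StrategyRawPaid\cup\StrategyRawProactive$; and (ii) $r \ge \mathcal{S}_{\StrategyRawPaid}(v)$. Invariant (i) holds at the root ($A=0$, $r=\|\StrategyRawPaid\|_1$) and is preserved because each call descends to a child after decrementing $r$ by exactly $\mathbf 1[v\in\StrategyRawPaid\cup\StrategyRawProactive]$ --- one decrement if $v$ is marked, one if $v$ is added, and these cases are mutually exclusive --- which exactly matches the growth of $A$; here we use that the traversal is top-down, so every ancestor's membership is already settled when $v$ is handled. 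Invariant (ii) holds at the root by Lemma~\ref{lemma:subtree-norm-root}; it is preserved using $\mathcal{S}_{\StrategyRawPaid}(v) = \mathcal{M}_{\StrategyRawPaid}(v) + \max_{c\text{ child of }v}\mathcal{S}_{\StrategyRawPaid}(c)$ together with the guard in line~\ref{line:proac-condition}: if $v$ is marked we spend exactly the unit the subtree norm had reserved for it, and if $v$ is added the guard $r>\mathcal{S}_{\StrategyRawPaid}(v)$ guarantees $r-1\ge\max_c\mathcal{S}_{\StrategyRawPaid}(c)$ for the recursive calls. Finally, evaluating at an arbitrary leaf $\ell$ on a path $p$: by (i), $\sum_{v\in p}\mathcal{M}_{\StrategyRawPaid\cup\StrategyRawProactive}(v) = A(\ell) + \mathcal{M}_{\StrategyRawPaid\cup\StrategyRawProactive}(\ell) = \|\StrategyRawPaid\|_1 - r_\ell + \mathcal{M}_{\StrategyRawPaid\cup\StrategyRawProactive}(\ell)$, and since $\mathcal{S}_{\StrategyRawPaid}(\ell)=\mathcal{M}_{\StrategyRawPaid}(\ell)$ at a leaf, a two-line case check (marked leaf: $r_\ell\ge1$ by (ii); added leaf: $r_\ell>0$ by the guard; otherwise $\mathcal{M}_{\StrategyRawPaid\cup\StrategyRawProactive}(\ell)=0$) shows $\mathcal{M}_{\StrategyRawPaid\cup\StrategyRawProactive}(\ell)\le r_\ell$, so the sum is $\le\|\StrategyRawPaid\|_1$, which is exactly Lemma~\ref{lemma:proac-invariant}.

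I would also flag one subtlety to preempt a reader's worry: the ``not cached'' part of the guard only ever \emph{removes} candidates from $\StrategyRawProactive$, so it cannot violate either invariant, and because $\StrategyRawProactive\subseteq\StrategyMatrixGlobal-\Cache-\StrategyRawPaid$ the cached nodes are neither marked nor added and never enter the quantity $\sum_{v\in p}\mathcal{M}_{\StrategyRawPaid\cup\StrategyRawProactive}(v)$; they are therefore irrelevant here, since the theorem bounds only $\|\StrategyRawPaid\cup\StrategyRawProactive\|_1$ and not a norm that would also count cached rows. The only genuinely delicate step is running the bookkeeping of invariant (i) in parallel with the soundness of the guard in (ii); everything else is routine.
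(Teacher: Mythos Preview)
Your proposal is correct and takes essentially the same route as the paper: derive the theorem as a corollary of Lemma~\ref{lemma:subtree-norm-root} (applied to $\StrategyRawPaid\cup\StrategyRawProactive$) together with Lemma~\ref{lemma:proac-invariant}. In fact you are more careful than the paper---you make the $\ge$ direction explicit and you supply a rigorous invariant-based proof of Lemma~\ref{lemma:proac-invariant}, whereas the paper only sketches that lemma.
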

% \eat{
%     \begin{proof}
%     We provide a full proof in Appendix~\ref{app:proacitveproof}. 
%     \end{proof}
% }

\begin{algorithm}[t]
    \caption{Proactive Querying (PQ) (Section~\ref{subsec:proactive})}
    \label{algo:proactive}
    {
    \small \begin{algorithmic}[1]
        \Function{searchProactiveNodes}{node \textit{v}, $r$, $\Cache$, $\StrategyRawProactive$}
            \If{$v$.query $\in \StrategyRawPaid$} \label{line:proac-marked-condition-decrement}
                \State $r \gets r - 1$ 

            \ElsIf{$v$.query $\notin \Cache$ and 
            $v$.s < $r$} \label{line:proac-condition}
                \State $\StrategyRawProactive \gets$ $\StrategyRawProactive \cup \{v.\text{query}\}$
                \label{line:proac-add-node-to-list}
                \State 
                $r \gets r - 1$ 
            \EndIf
            
            \If{$r > 0 $ and $v$ has children}
                \For{child $c$ of node $v$}
                    \State \Call{searchProactiveNodes}{node \textit{c}, $r$,  $\Cache$, $\StrategyRawProactive$} \label{line:proac-recursion}
                \EndFor
            \EndIf
            \State \Return
        \EndFunction
    \end{algorithmic}
    }
\end{algorithm}

%It thus prioritizes fetching nodes higher in the tree; for example, in Figure~\ref{fig:strategy-gen-proactive}, we fetch node $[4,8)$ over node $[6,8)$.

%which is presented in Algorithm~\ref{algo:proactive}, and outputs $\StrategyMatrixPaid'$. 

\begin{example}
    \label{ex:proactive}
    In Figure~\ref{fig:binary-tree-strategy-gen-proactive},
    we apply our proactive strategy generation function to 
    to $\StrategyRawPaid_2=\{\StrategyMatrixGlobal_{[2,4)}, \StrategyMatrixGlobal_{[3,4)}\}$ for $\WorkloadRaw_2$ %the second workload 
    in our example sequence.
    We annotate each node with the values of $r$ and $v.s$ from line~\ref{line:proac-condition} in function \textsc{searchProactiveNodes}.
    The tree nodes $\StrategyMatrixGlobal_{[4,8)}$, $\StrategyMatrixGlobal_{[0,2)}$, $\StrategyMatrixGlobal_{[0,1)}$, $\StrategyMatrixGlobal_{[1,2)}$, $\StrategyMatrixGlobal_{[2,3)}$ and $\StrategyMatrixGlobal_{[7,8)}$ satisfy the condition $r > v.s$. 
    All nodes other than $\StrategyMatrixGlobal_{[7,8)}$ are output into $\StrategyRawProactive_2$; the latter node is excluded since it is cached from $\StrategyRaw_1$ for $\WorkloadRaw_1$. 
    Note that $\StrategyRawProactive_2$ does not only consist of disjoint query predicates. For example, $\StrategyMatrixGlobal_{[0,2)}$ and $\StrategyMatrixGlobal_{[0,1)}$ overlap. 
    However, $\mathcal{S}_{\StrategyRawPaid_2}(\mathcal{T}) = \mathcal{S}_{\StrategyRawPaid_2 \cup \StrategyRawProactive_2}(\mathcal{T}) = 2$.
\end{example}

\subsubsection{Integration.} 
% \eat{
% Once the PQ module outputs $\StrategyRawProactive$, they are transformed to the corresponding full-rank form $\StrategyFullProactive$ through the FRT module. 
% }
The MMM \textsc{answerWorkload} function perturbs $\StrategyRawProactive$ with the same noise parameter as for $\StrategyRawPaid$, namely $\PaidNoiseParameter$, to obtain the noisy responses $\NoisyProactiveResponse$. 
It then updates the cache $\Cache$ with $\{(\mathbb{p'}, \PaidNoiseParameter, \tilde{y}, t)  | $ $\mathbb{p'} \in \StrategyRawProactive, \tilde{y} \in \NoisyProactiveResponse\}$.
We observe that we do not answer the analyst's workload query $\WorkloadFull$ using $\NoisyProactiveResponse$. 
Importantly, this is the reason why we do not incorporate $\StrategyRawProactive$ in estimating $\PaidNoiseParameter$ in our MMM cache-aware cost estimation function.
The PQ module can also be used while the SE module is turned on.
Algorithm~\ref{algo:proactive} can also be applied to multi-attribute strategies, as we discuss in Section~\ref{sec:multi-attribute-case}. 

\end{techreport}

%\subsection{DP guarantee.}
%\input{sections/dp-proof}

\section{Relax Privacy Mechanism} \label{sec:rp-module}

\begin{algorithm}[t]
	\caption{Relax Privacy (RP) (Section~\ref{sec:rp-module})}
	\label{algo:relax-privacy}
	\small{\begin{algorithmic}[1]
        % \ModuleState
        %     \State ($\StrategyRaw_{o}$, $\NoisyStrategyResponse_{o}$, $b_{o}$, $b$) \Comment{Strategy to relax ($\StrategyRaw_{o} \supseteq \StrategyRaw$), its' cached responses, cached noise parameter and target noise parameter.} \label{line:rp-module-state}
        % \EndModuleState
		\Function{answerWorkload}{$\Cache, \StrategyFull, \WorkloadFull, \DataVectorRaw$}
			\State $\noisev_{o} \gets \NoisyResponseList_{o} - \StrategyRaw_o\DataVectorRaw$\label{line:rp_get_old_noise} \Comment{Old noise vector for  $\StrategyRaw_o$.}
		    \State $\noisev \gets$ \Call{lapNoiseDown}{$\noisev_{o}, b_{o}, b$} 
		    \Comment{Koufogiannis et al.~\cite{relaxed_privacy}} \label{line:rp_get_new_noise}
		    \State $\NoisyResponseList \gets \StrategyRaw_o\DataVectorRaw + \noisev$\label{line:rp_get_new_response} \Comment{New noisy responses to $\StrategyRaw_o$} 
		    \State Update cache $\Cache_{\StrategyMatrixGlobal}$ with ($\StrategyRaw_{o}, b, \NoisyResponseList, t$=current time) \label{line:rp-update-cache}
		    \State $\NoisyResponseList' \gets \NoisyResponseList$ for $\StrategyRaw \subseteq \StrategyRaw_{o}$ \Comment{New noisy responses to $\StrategyRaw$} \label{line:rp-filter}
        
            \State \Return  
		        $\WorkloadFull\StrategyFull^+\NoisyResponse'$  \label{line:rp-workload-response}
		\EndFunction
		\Statex

	    \Function{estimatePrivacyBudget}{$\Cache, \StrategyFull, \WorkloadFull, \alpha, \beta$} 
	        \State $b \gets$ 
	        \textsc{MMM.estimatePrivacyBudget}(
	        $\Cache=\emptyset$, $\StrategyFull$, $\WorkloadFull$, $\alpha,\beta$) \label{line:rp-target-cost}
	        % \eat{
    	    %     \State Group queries in $\Cache$ by timestamp $t$ into $S=\{\cdots,(\StrategyRaw_t, b_t),\cdots\}$ \Comment{Queries with the same $t$ have the same $b$}
	        % }
	        \State $\Cache \gets \{\cdots(\StrategyRaw_{t}$, $\NoisyStrategyResponse_{t}$, $b_{t})\}$ \label{line:rp-group-by-ts} \Comment{Group queries in $\Cache$ by timestamp.}
	        \State $S_{RP} \gets \StrategyRaw_{j} \in \Cache_{t=j} | \StrategyRaw_{j} \supseteq \StrategyRaw$ \Comment{Keep only those $\StrategyRaw_t$ that contain $\StrategyRaw$} \label{line:rp_possible_strats} 
	        \If{$S_{RP} = \emptyset$} \label{line:rp_check_if_run}
			    \State \Return ``RP cannot run for this input $\StrategyRaw$.''
		\EndIf
	        %\State $\epsilon_{RP,j} = \frac{\|\StrategyRaw_{j}\|_1}{b} - \frac{\|\StrategyRaw_{j}\|_1}{b_j}$
	        \State $o=\underset{j}{\arg \min}  \quad \epsilon_{RP,j} = \frac{\|\StrategyRaw_{j}\|_1}{b} - \frac{\|\StrategyRaw_{j}\|_1}{b_j}$ \Comment{$\StrategyRaw_o$ has the lowest RP cost} \label{line:rp_choose_min}
  %           \eat{
  %   	        \State $S_{\StrategyFull} \gets \{(\StrategyRaw_t,b_t, \epsilon_t) |
  %   	        (\StrategyRaw_t,b_t)\in S \text{ and } 
  %   	        \StrategyFull \subseteq \StrategyRaw_t\}$ \Comment{
  %   	        Keep only $\StrategyRaw_t$ that contain $\StrategyFull$ and set
  %   	        $\epsilon_t=\frac{\|\StrategyRaw_t\|_1}{b}-\frac{\|\StrategyRaw_t\|_1}{b_t}$  
  %   	        }  \label{line:rp_possible_strats}
  %   		    \State \Return $b$ and $(\StrategyRaw_o, b_o,\epsilon) \in S_{\StrategyFull}$ with the smallest $\epsilon_t$ 
		% }
		\State \Return $b_o, \quad \epsilon_{RP,o}$ \Comment{Cached noise parameter, RP cost for $\StrategyRaw_{o}$}
		\EndFunction
	\end{algorithmic}}
\end{algorithm}

When exploring a database, a data analyst may first ask a series of workloads at a low accuracy (spending $\epsilon_1$), and then re-query the most interesting workloads at a higher accuracy (spending $\epsilon_2 > \epsilon_1$).
\begin{techreport}
(The repeated workload can also be asked by a different analyst.) 
\end{techreport}
The cumulative privacy budget spent by the MMM will be $\epsilon_1+\epsilon_2$ due to sequential composition. 
The goal of the Relax Privacy module is to spend less privacy budget than MMM on such repeated workloads with higher accuracy requirements.

Koufogiannis et al.~\cite{relaxed_privacy} refine a noisy response at a smaller $\epsilon_1$, to a more accurate response at a larger $\epsilon_2$, using only a privacy cost of $\epsilon_2-\epsilon_1$~\cite{relaxed_privacy,pioneer}. 
However, their framework only operates with the simple Laplace mechanism.
Thus, we achieve the aforementioned goal by closely integrating their framework~\cite{relaxed_privacy} with the matrix mechanism and our DP cache.
\begin{techreport}
    This design of the RP module meets a secondary goal, namely, our RP module handles not only repeated analyst-supplied \textit{workloads}, but also different workloads that result in identical strategy matrices. 
    We exploit the fact that other modules, such as the PQ module, also operate over the matrix mechanism and DP cache.
    Therefore, our RP module can be seen as generalizing these frameworks to operate over more workload sequences. 
\end{techreport}
%Since the PQ module also operates over the matrix mechanism, we show in our example that we can take advantage of proactively fetched strategy nodes, to also handle the second type of workload sequences in the RP module. 

\subsection{Estimate Privacy Budget Interface}

%\goal{Define what this interface calculates, in terms of RP. What is the RP target cost.} 
We first describe the \textsc{estimatePrivacyBudget} interface for RP, and as with the MMM, it estimates the privacy budget required by the RP mechanism. 
The privacy budget required for the RP mechanism is defined as the difference between the new or  \emph{target} privacy budget for the output strategy noisy responses \NoisyStrategyResponse~to meet the accuracy guarantees, and the old or \emph{cached} privacy budget (\HistoricalPrivacyBudgetRP) that cached responses to \StrategyRaw~were obtained at. 
The target noise parameter is the noise parameter required \emph{by the cacheless MM} to achieve an $(\alpha,\beta)$-accuracy guarantee for $\WorkloadFull, \StrategyFull$. 
It can be obtained by running the \textsc{estimatePrivacyBudget} of MMM with an empty cache (line~\ref{line:rp-target-cost}). 
Then the main challenge of this interface is to choose which past strategy entries should be relaxed by the RP mechanism, based on the smallest RP cost as defined above.

%\begin{techreport}
Each strategy query $\mathbb{a} \in \StrategyRaw$ may be cached at a different timestamp. 
Relaxing each such set of cache entries across different timestamps, through sequential composition, requires summing over the RP cost for each set, and can thus be very costly. 
%\end{techreport}
For simplicity, we design the RP mechanism to relax the entirety of a past strategy matrix, rather than picking and choosing strategy entries across different timestamps. 
Our RP cache lookup condition groups cache entries by their timestamps %($\StrategyRaw_{j},\NoisyStrategyResponse_j,b_j,\epsilon_j$)
to form cached strategy matrices (Line~\ref{line:rp-group-by-ts}), and identifies all candidate matrices that include the entire input strategy
%, that is, $\StrategyRaw_{j} \in \Cache_{t=j} | \StrategyRaw_{j} \supseteq \StrategyRaw$ 
(Line~\ref{line:rp_possible_strats}). 
The inclusion condition (instead of an equality) allows proactively fetched strategy entries to be relaxed, at no additional cost to relaxing $\StrategyRaw_{j}$. 
If answering $\StrategyRaw$ using the cache requires: (1) composing cache entries spanning multiple timestamps, or (2) composing cache entries at one timestamp and paid (freshly noised) strategy queries at the current timestamp, then the RP cost estimation interface simply returns nothing (Line~\ref{line:rp_check_if_run}) and \sysname will instead use another module.
\begin{example}
    Suppose that the workloads shown in Figure~\ref{fig:binary-tree-strategy-gen-proactive} have been asked in the past at $\alpha_1$, and have been answered through MMM, as discussed in Example~\ref{ex:proactive}. %[4,6) and [6,7) are re-used. 
    Now $\WorkloadRaw_3=\{[3,8)\}$ is asked at $\alpha_3<\alpha_1$. 
    We have $\StrategyRaw_3 = \{[3,4), [4,8)\}$. 
    Thus $\StrategyRaw_3 \subset \StrategyRawPaid_2 \cup \StrategyRawProactive_2$, and the RP module relaxes all of $\StrategyRaw_{3,RP} = \StrategyRawPaid_2 \cup \StrategyRawProactive_2$ from $\alpha_1$ to $\alpha_3$. 
\end{example}
%\miti{First couple sentences are discussed at length above.}
For each candidate cached strategy matrix  $\StrategyRaw_{j}$, we compute the RP cost to relax its cached noisy response vector $\NoisyStrategyResponse_j$ from ${b_j}$ to the new target $b$ as $\epsilon_{RP,j}$. %$= \frac{\|\StrategyRaw_{j}\|_1}{b} - \frac{\|\StrategyRaw_{j}\|_1}{b_j}$
Lastly, the RP module chooses to relax the candidate past strategy $\StrategyRaw_j$ with the minimum RP cost (Line~\ref{line:rp_choose_min}). 
\begin{techreport}
Importantly, we only seek to obtain accuracy guarantees over $\WorkloadFull\StrategyFull^+$, and not over $\WorkloadFull\StrategyFull_{o}^+$, and thus we compute the target noise parameter $b$ based on the input strategy matrix $\StrategyRaw$ (Line~\ref{line:rp-target-cost}), rather than the optimal cached candidate $\StrategyRaw_{o}$.
\end{techreport}
For the chosen cached strategy matrix $\StrategyRaw_{o}$, we return the cached noise parameter $b_o$ and the RP cost $\epsilon_{RP,o}$.

\subsection{Answer Workload Interface}
The RP \textsc{answerWorkload} interface is a straightforward application of Koufogiannis et al.'s noise down module. 
%We implement Koufogiannis et al.'s noise down module towards this interface, as follows. 
\begin{techreport}
    The \textsc{estimatePrivacyBudget} interface records the following parameters: the optimal cached or old strategy to relax ($\StrategyRaw_{o}$), its cached noisy response ($\NoisyResponse_{o}$), the cached noise parameter ($b_{o}$) and the target noise parameter ($b$). % (Line~\ref{line:rp-module-state}). 
    \end{techreport}
We first compute the Laplace noise vector used in the past $\noisev_{o}$, by subtracting the ground truth for the cached old strategy $\StrategyRaw_{o}\DataVectorRaw$ from the cached noisy response $\NoisyResponse_{o}$ (line~\ref{line:rp_get_old_noise}).
We can now supply Koufogiannis et al.'s noise down algorithm with the old noise vector $\noisev_{o}$, the cached noise parameter $b_{o}$, and the target noise parameter $b$. 
This algorithm draws noise from a correlated noise distribution, and outputs a new, more accurate noise vector at noise parameter $b$ (line~\ref{line:rp_get_new_noise})~\cite[Algorithm 1]{relaxed_privacy}. 
We can simply compute the new noisy response vector to $\NoisyResponse_{o}$ using the ground truth and the new noise vector (line~\ref{line:rp_get_new_response}). 
We then update the cache with the new, more accurate noisy responses, which can be used to answer future strategy queries (line~\ref{line:rp-update-cache}). 
Finally, we do not need the noisy strategy responses to $\StrategyRaw_{o} - \StrategyRaw$ to answer the data analyst's workload, and so we filter them out to simply obtain new noisy responses $\NoisyResponse'$ to $\StrategyRaw$ (line~\ref{line:rp-filter}).
We use $\NoisyResponse'$ to compute the workload response and return it to the analyst (line~\ref{line:rp-workload-response}).
%We compute the workload response using the strategy response $\NoisyResponse'$, and return the former to the analyst (line~\ref{line:rp-workload-response}).

%%%%%%%%%%%%%%%%%%%%%%%%%%%%%%%%%%%%%%%%%%%%%%%%%%%%%%%%%%%%%%%%%%%%%%%%%%%%%

\begin{vldbpaper}
    \vspace{-2.5mm}
\end{vldbpaper}

\section{Multiple attribute workloads}
\label{sec:multi-attribute-case}
We extend \sysname to work over queries with multiple attributes. 
We define a single data vector \DataVectorRaw~over $\text{dom}(\Schema)$ %, of size $|\text{dom}(\Schema)|=n$,  %$=\prod_i(\text{dom}(\Attribute_i))$
as the cross product of $d$ single-attribute domain vectors.
%:($\DataVectorRaw=\bigotimes_{i=1}^{d}\DataVectorRaw_{\Attribute_i}$), since $\text{dom}(\Schema)=\prod_i \text{dom}(\Attribute_i)$. 
It represents the frequency of records for each value of a marginal over all attributes. 
%Due to our modular design of \sysname, this single domain vector design enables an extension of \sysname with \emph{minimal} modifications to the multi-attribute case, which we present in Section~\ref{subsec:multi-attribute-modification}. 
However, $|\DataVectorRaw|$ and thus $|\Cache|$ could be very large due to the cross product. %, and we address this challenge in Section~\ref{subsec:scaling_lrg_domain}.
%Similarly, the cache grows proportionally with the size of the domain vector. 
%Similarly, the cache size is proportional to $|\text{dom}(\Schema)|=n$.

We observe that not all attributes may be referenced by analysts in their workloads. 
Suppose that each workload includes marginals over a set of attributes $S_\Attribute \in R$.
That is, each marginal $\mathbb{w} \in \WorkloadRaw$ includes $|S_\Attribute|=k \le d$ RCQs, with one RCQ over each attribute ($\mathbb{w}=\prod_{j}^{k}\mathbb{w}_j$ ). 
%\eat{Many workloads in a workload sequence may contain marginals that span the same set of attributes $S_\Attribute$. }
These workloads would share a common, smaller domain and hence a data vector 
\begin{vldbpaper}
$\DataVectorRaw_{S_\Attribute}$.
\end{vldbpaper}
\begin{techreport}
$\DataVectorRaw_{S_\Attribute}=\bigotimes_{i=1}^{k \le d}\DataVectorRaw_{\Attribute_i}$. 
\end{techreport}
Similarly, instead of creating a large cache, we create a set of smaller caches, with one cache $\Cache_{S_\Attribute}$  for each \emph{unique} combination of attributes $S_\Attribute$ encountered in a workload \emph{sequence}. 
Cache entries can thus be reused across workloads that span the same set of attributes. 
The entries of each smaller cache $\Cache_{S_\Attribute}$ are indexed by its associated domain vector $\DataVectorRaw_{_{S_\Attribute}}$.
%=\bigotimes_{i=1}^{k \le d}\DataVectorRaw_{\Attribute_i}$ (in contrast with \DataVectorRaw).

\begin{techreport}
\begin{example}\label{multi-attr-workload-seq-ex}
    Consider three attributes with $dom(\Attribute_1)=[0,4)$, $dom(\Attribute_2)=[0,8)$, and $dom(\Attribute_3)=[0,2)$. 
    The analyst supplied a workload sequence of RCQ over different sets of attributes: $\WorkloadRaw_1$ over $S_1=\{\Attribute_1,\Attribute_2\}$, $\WorkloadRaw_2$ over $S_2=\{\Attribute_2,\Attribute_3\}$ and then $\WorkloadRaw_3$ over $S_3=S_1=\{\Attribute_1,\Attribute_2\}$. 
    Then $\WorkloadRaw_1$ and $\WorkloadRaw_3$ are answered using the same domain vector $\DataVectorRaw_{S_1}=\DataVectorRaw_{\Attribute_1} \otimes \DataVectorRaw_{\Attribute_2}$ and a cache indexed over it, $\Cache_{S_1}$. 
    However, $\WorkloadRaw_2$ is answered using a different domain vector $\DataVectorRaw_{S_2}=\DataVectorRaw_{\Attribute_2} \otimes \DataVectorRaw_{\Attribute_3}$ and the corresponding cache $\Cache_{S_2}$. 
\end{example}
\end{techreport}
%First, just as in the single attribute case, through this domain vector, all workload and strategy queries over marginals can be represented as matrix rows, with column values selecting the marginal query predicates in \DataVectorRaw. 
%(We describe later how the ST module generates a multi-attribute strategy matrix from the workload matrix.) 
Our cache-aware MMM, SE and RP modules can be extended trivially to the multi-attribute case, since these modules would simply operate on the larger domain vector. % and the associated matrices. 
However, in order to generate \StrategyRaw, the ST module relies on a $k$-ary strategy tree, corresponding to %the global strategy 
$\StrategyMatrixGlobal$ for the single-attribute case. 
Thus, we extend the ST and PQ modules by defining this global strategy tree using marginals over multiple attributes. 
%Thus this global strategy tree, used in the ST and PQ modules, need to be extended explicitly to consider marginals over multiple attributes. 
Our extended ST and PQ modules serve as a proof-of-concept that other modules can be extended for other problem domains. 
%This extension of ST and PQ modules serve as a proof-of-concept that our system modules can be extended to perform efficiently, for other problem domains. 
\begin{vldbpaper}
\miti{
We detail the multi-attribute strategy tree generation in our full paper~\cite{dpcacheextended}. We also discuss %how \sysname can be extended to handle
handling complex SQL queries such as joins, in its future work section. }
\end{vldbpaper}
\begin{techreport}

For a given workload \WorkloadRaw~ that spans a set of attributes $S_\Attribute$, we can use the  single-attribute strategy tree $\mathcal{T}_{i}$ for each attribute $\Attribute_i \in S_\Attribute$ to construct a multi-attribute strategy tree $\mathcal{T}$ for \WorkloadRaw, as follows. 
Intuitively, \DataVectorRaw~ consists of marginals that are constructed by taking each unit value in $dom(\Attribute_i)$, and forming a cross product with each unit value in the $dom(\Attribute_{i+1})$, and so on. 
Unit values in $dom(\Attribute_i)$ are represented by leaf nodes on its strategy tree $\mathcal{T}_{i}$, as illustrated in Figure~\ref{fig:binary-tree-strategy-gen-proactive}. 
Thus, we form a two-attribute strategy tree $\mathcal{T}$ over $\Attribute_{1}$ and $\Attribute_{2}$, by attaching the tree $\mathcal{T}_{2}$ to each \emph{leaf} node for the tree $\mathcal{T}_{1}$. %, as illustrated in Figure~\ref{fig:multi-attr-tree}. 
(We can then simply extend this definition to $k=\|S_\Attribute\|$ attributes in \WorkloadRaw.) 
Nodes on $\mathcal{T}$ either represent marginals over these attributes or sums of marginals. 
In particular, the leaf nodes on $\mathcal{T}$ represent the unit value strategy marginals over  $\Attribute_{1}$ and $\Attribute_{2}$. 

% \eat{
% Nodes on the attached $\mathcal{T}_{2}$ trees represent non-unit value strategy marginals over $\Attribute_{1}$ and $\Attribute_{2}$, and form the global instant strategy $\StrategyMatrixGlobal$ for the multi-attribute case. 
% While a node on $\mathcal{T}_{1}$ represents a sum of marginals, with the given node's RCQ, and all unit values for $\Attribute_2$
% }

We can thus decompose each workload marginal $\mathbb{w}$ in terms of nodes on this tree $\mathcal{T}$. 
In the first step, we decompose each single-attribute marginal $\mathbb{w}_i$ following our single-attribute \textsc{decomposeWorkload} function (Algorithm~\ref{algo:strategy-transformer-functions}) to get a set of single-attribute strategy nodes $\mathbb{a}_i$. 
The second step is run for all but the last ($k$th) attribute: in this step, we \emph{further} decompose $\mathbb{a}_i$ in terms of the leaf nodes on tree $\mathcal{T}_{i}$, to obtain $\mathbb{a}_{\text{leaf,}j}$. 
In the final step, we compute the cross-product of the single-attribute strategy nodes $\mathbb{a}_{\text{leaf,}j}$, over all $j$, and a final cross product with the decomposition of the $k$th attribute, $\mathbb{a}_{k}$, to form the strategy marginal $\mathbb{a}$ for the workload marginal $\mathbb{w}$, as follows: 
$\mathbb{a}=\bigotimes_{i=1}^{k-1 \le d} \mathbb{a}_{\text{leaf,}j} \otimes \mathbb{a}_{k}$.

To determine the order in which to decompose the attributes, we first compute the minimum granularity needed for each attribute to be able to answer the workload.
We then sort the attributes in acceding order of granularity. 
The intuition is that this should result in the least amount of noise compounding for the lowest attribute on the tree.
Each decomposition is greedy in that we only go down to the minimum granularity for the given workload and not to the leaf nodes.
This further reduces the noise compounding of our approach.

We design the above strategy transformer such that our PQ module can easily be applied to the structure. The only difference is that the \textsc{generateProactiveStrategy} algorithm is run over the strategy tree for the combination of attributes in the workload ($S_\Attribute$) rather than the global tree structure.
The PQ module can thus add queries for any subset of attributes in $S_\Attribute$, prioritizing nodes higher up in the tree, which correspond to fewer number of attributes. 
In practice, this is preferable as it focuses on expanding the areas an analyst has already shown interest in rather than completely unrelated attributes.

\end{techreport}
 
%\miti{This lemma is now removed in the smaller version of PQ module.}
%Our key observation is to form a tree that follows Lemma~\ref{lemma:subtree-norm-root} for the PQ module to be extended trivially, and to generate a multi-attribute strategy that minimizes noise compounding. 
%The cache entries now correspond to strategy queries over the marginals, and the cache follows the structure of the extended global strategy tree. 

%\miti{First present example of strategy decomposition then only mention what is needed as far as the tree is concerned, for the optimizations and the PQ module. FRT extension?} 
\begin{techreport}
\begin{example}
\label{example:multi-attr-strategy}
%\begin{techreport}
Consider the first workload in the sequence in Example~\ref{multi-attr-workload-seq-ex}. 
It consists of one marginal tuple over attributes $\Attribute_1$ and $\Attribute_2$:  $(\WorkloadRaw=[0,3)_{\Attribute1}, [1,6)_{\Attribute2})$.
%\end{techreport}
    % \eat{
    % \begin{vldbpaper}
    % Consider two attributes with $dom(\Attribute_1)=[0,4)$, $dom(\Attribute_2)=[0,8)$
    % and a $\WorkloadRaw_1$ of one marginal tuple over these attributes: $(\WorkloadRaw=[0,3)_{\Attribute1}, [1,6)_{\Attribute2})$.
    % \end{vldbpaper}
    % }
We denote the global strategy matrix for attribute $i$ as $\StrategyMatrixGlobalPerAttribute{i}$. 
We decompose the RCQ for each attribute in the marginal, using our single attribute strategy generator. 
Thus: $\StrategyRaw_{\Attribute1}= \{\StrategyMatrixGlobalPerAttribute{1}_{[0,2)}, \StrategyMatrixGlobalPerAttribute{1}_{[2,3)}\}$ and $\StrategyRaw_{\Attribute2}= \{\StrategyMatrixGlobalPerAttribute{2}_{[1,2)}, \StrategyMatrixGlobalPerAttribute{2}_{[2,4)}, \StrategyMatrixGlobalPerAttribute{2}_{[4,6)}\}$. 
We then construct the two-attribute strategy marginals as the cross product of these RCQs: 
$\StrategyRaw=\StrategyRaw_{\Attribute1} \bigotimes \StrategyRaw_{\Attribute2}$
$=\{(\StrategyMatrixGlobalPerAttribute{1}_{[0,2)}, \StrategyMatrixGlobalPerAttribute{2}_{[1,2)}),$ 
$(\StrategyMatrixGlobalPerAttribute{1}_{[0,2)}, \StrategyMatrixGlobalPerAttribute{2}_{[2,4)}),$ 
$(\StrategyMatrixGlobalPerAttribute{1}_{[0,2)}, \StrategyMatrixGlobalPerAttribute{2}_{[4,6)}),$ 
$(\StrategyMatrixGlobalPerAttribute{1}_{[2,3)}, \StrategyMatrixGlobalPerAttribute{2}_{[1,2)}),$ 
$(\StrategyMatrixGlobalPerAttribute{1}_{[2,3)}, \StrategyMatrixGlobalPerAttribute{2}_{[2,4)}),$ 
$(\StrategyMatrixGlobalPerAttribute{1}_{[0,2)}, \StrategyMatrixGlobalPerAttribute{2}_{[4,6)})\}$
\end{example}
\end{techreport}

\begin{vldbpaper}
    \vspace{-2.5mm}
\end{vldbpaper}

\section{Evaluation}
\label{sec:evaluation}
\begin{figure*}[ht]
\centering

\subfloat[BFS - Adult (Age attribute)]{%
  \includegraphics[width=0.30\linewidth]{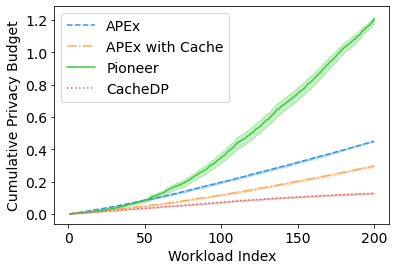}%
  \label{fig:single_bfs}%
}\quad
\subfloat[DFS - Adult (Country attribute)]{%
  \includegraphics[width=0.30\linewidth]{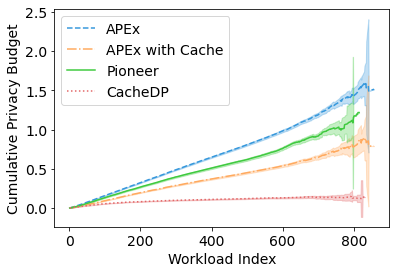}%
  \label{fig:single_dfs}%
}\quad
\subfloat[RRQ - Synthetic (1D)]{%
  \includegraphics[width=0.30\linewidth]{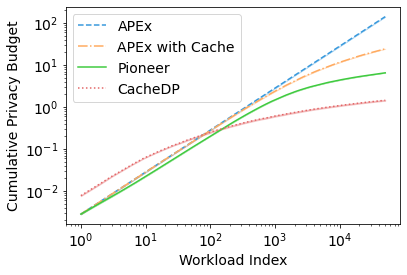}%
  \label{fig:pioneer}%
} 

\subfloat[BFS - Taxi (Lat, Long attributes)]{%
  \includegraphics[width=0.30\linewidth]{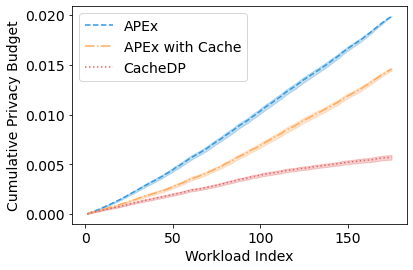}%
  \label{fig:location_bfs}%
}\quad
\subfloat[DFS - Taxi (Lat, Long attributes)]{%
  \includegraphics[width=0.30\linewidth]{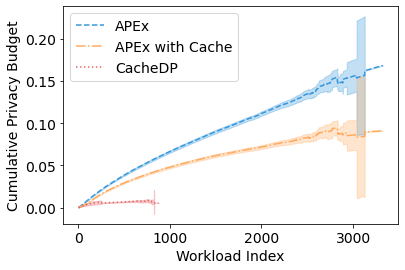}%
  \label{fig:location_dfs}%
}\quad
\subfloat[IDEBench - Planes (multiple attributes)]{%
  \includegraphics[width=0.30\linewidth]{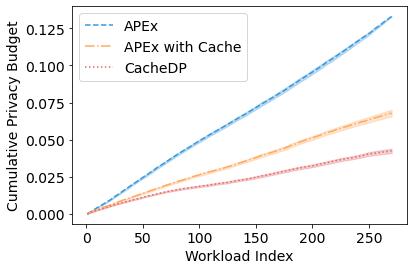}%
  \label{fig:HD_case}%
}
\caption{
Average cumulative privacy budget comparison between CacheDP and baselines (APEx, {APEx with  cache}, Pioneer).
}\label{fig:privacycomparison}
\end{figure*}

We conduct a thorough experimental evaluation of \sysname. 
We focus on our primary goal, namely, reducing the cumulative privacy budget of interactive workload sequences over baseline solutions (Section~\ref{sec:privacy_budget_comp}), while still meeting the accuracy requirements (Section~\ref{sec:accuracy-eval}) and incurring low overheads (Section~\ref{sec:overheads}).
We assess how often each module is used, and quantify its impact on the privacy budget, through our ablation study in Section~\ref{sec:ablation_study}.

\subsection{Experimental Setup}
\label{subsec:experimental-setup}
\subsubsection{Baseline Solutions}
We consider a number of baseline, accuracy-aware solutions from the literature to compare with \sysname.

\textbf{APEx~\cite{ge_apex:_2017}}: \textit{APEx} is a state-of-the-art accuracy-aware interactive DP query engine. 
\begin{techreport}
\textit{APEx} consumes accuracy requirements in the form of an $(\alpha, \beta)$ bound (see Definition~\ref{def:mm}). 
\end{techreport}
\textit{APEx} treats all workload queries separately and has no cache of previous responses.

\textbf{APEx with cache}: We simulate \textit{APEx} with a naive cache of all past workloads and their responses. 
If a client repeats a workload asked in the past by any client, with the same or a lower accuracy requirement, we do not count its privacy budget towards the cumulative budget spent by \textit{APEx with cache}.

\textbf{Pioneer~\cite{pioneer}}: \textit{Pioneer} is a DP query engine that incorporates a cache of previous noisy responses to save the privacy budget on future queries. 
\begin{techreport}
\textit{Pioneer} expects the accuracy over its responses in the form of a target variance.
\end{techreport}
Since \textit{Pioneer} can only answer single range queries, we  decompose all workloads into single queries for our evaluation, and let \textit{Pioneer} answer them sequentially.

\begin{table}[t]
    \centering
    \begin{tabular}{|l|c|c|c|c|} \hline
        {Dataset} & {\textsc{Adult}~\cite{adult_dataset}} & \textsc{Taxi}~\cite{nyc_dataset} &  \textsc{Planes}~\cite{planes_dataset, eichmann2020idebench}\\ \hline
        Size & $48842  \times 14$ & $1028527 \times 19$ &  $500,000 \times 12$ \\ \hline
        {Tasks }& BFS (Age) & BFS (Lat, Long) & IDEBench \\ 
        (Attributes) & DFS (Country) & DFS (Lat, Long) & (8 out of 12) \\ \hline
    \end{tabular}
    \caption{\miti{Datasets, their sizes, and associated tasks, with the attributes or number of attributes used in each task.}} 
    \label{tab:dataset-tasks}
\end{table}
\vspace{-5pt}

\begin{vldbpaper}
\subsubsection{Datasets and Tasks}
\label{subsubsec:experiment-setup}
\miti{
    In Table~\ref{tab:dataset-tasks}, we outline the datasets used and the tasks that each dataset is used in. 
    A common data exploration task involves traversing a decomposition tree over the domain~\cite{Zhang2016_privtree}. We construct our workloads through either a breadth-first search (BFS) or a depth-first search (DFS); both of these tasks are executed over a single attribute or a pair of correlated attributes (Latitude and Longitude from the \textsc{Taxi} dataset). 
    We also replicate the evaluation of \textit{Pioneer}~\cite{pioneer} through randomized single range queries (RRQ) over a single attribute of a synthetic dataset. 
    We use Eichman et al.'s benchmarking tool, namely IDEBench~\cite{eichmann2020idebench}, to construct a sequence of interactive multi-attribute workloads. 
    }

\miti{
    We model multiple data analysts querying each system, as clients. 
    We run the BFS, DFS and IDEBench tasks with multiple clients. 
    We schedule the clients' interactions with each system by randomly sampling clients, with replacement, from the set of clients until no queries remain.
    A client chooses the accuracy requirements and task parameters for each run of the experiment independently and at random; we detail these choices in the full paper~\cite{dpcacheextended}. 
    }
\end{vldbpaper}

\begin{techreport}
\subsubsection{Datasets and Interactive Exploration Tasks}
We use three datasets: the 1994 US Census data in the {\textsc{Adult}} dataset~\cite{adult_dataset} ($48842$ rows $\times 14$ attributes), a log of NYC yellow taxi trip records from 2015 in the \textsc{Taxi} dataset~\cite{nyc_dataset} ($1028527 \times 19$), and US domestic flight records in the \textsc{Planes} dataset~\cite{planes_dataset, eichmann2020idebench} ($500,000 \times 12$).
The \textsc{Taxi} dataset is used to model a single strategy tree over a pair of correlated (\textit{Lat, Long}) attributes. Each node on the tree (or a range query) represents a rectangle of area on a map, and each level on the tree splits each node's rectangle into its quarters.

\textbf{BFS and DFS tasks:}
A common data exploration task involves traversing a decomposition tree over the domain~\cite{Zhang2016_privtree}, by progressively asking more fine-grained queries over a subset of the domain.
In each iteration, the analyst decomposes each query in the past workload whose noisy response satisfies a certain criteria, into $k$ new children queries on the attribute decomposition tree. 
In the BFS task, the analyst explores only past queries with a sufficiently \textit{high} noisy count.
The BFS task thus returns the smallest subsets of a domain that are sufficiently populated.

Whereas in a DFS task, the analyst focuses on past queries with a sufficiently \textit{low non-zero} count (i.e. underrepresented subgroups).
A DFS task terminates if a query's noisy count is non-zero and falls within the low DFS threshold range. 
In the DFS task, when the analyst reaches a leaf node without finding a node with a sufficiently low count, they backtrack a random number of steps up the tree, and resume the search with the second smallest node at that level. 
%\eat{We model each data analyst as randomly drawing a minimum threshold for BFS and a maximum threshold for DFS from pre-defined ranges of the domain size.}
We group the range queries that satisfy the BFS or DFS criteria, into a single workload per level of the tree.

\textbf{Randomized range queries (RRQ) over synthetic data:}
We replicate the evaluation of \textit{Pioneer}~\cite{pioneer} through $50,000$ randomized range queries over a synthetic dataset. 
We fix a domain size of $m=1,000$ such that each range query is contained in $(0, m)$. 
Each query is in the form of $(s, s+\ell)$ where $s$ and $\ell$ are both selected from a normal distribution with the following mean ($\mu$) and standard deviation ($\sigma$): 
$\mu_s=500$, $\sigma_s=10$, $\mu_\ell=320$, and $\sigma_\ell=10$.
The accuracy requirement is supplied as an expected square error, which is also selected from a normal distribution with $\mu=250000$,$\sigma=25000$. 

\textbf{IDEBench for Multi-Attribute queries:} Eichman et al. develop a benchmarking tool to evaluate interactive data exploration systems~\cite{eichmann2020idebench}. We use this tool to construct a sequence of exploration workloads for a multi-attribute case study over the \textsc{Planes} dataset. 
Specifically, we extract the SQL workloads for their $1:N$ case where each query triggers an additional $N$ dependent queries~\cite{eichmann2020idebench}. 
We simplify these queries for easy integration with our prototype.

\subsubsection{Client Modelling}
We model multiple data analysts querying the aforementioned systems, as clients. 
We schedule the clients' interactions with each system by randomly sampling clients, with replacement, from the set of clients until no queries remain.
A client chooses the accuracy requirements and task parameters for each run of the experiment independently and at random.

\subsubsection{Experiment setup.} 
\label{subsubsec:experiment-setup}
    We run the BFS and DFS tasks with $c=25$ clients and the IDEBench task with $c=10$ clients. 
    (We only run the RRQ task with a single client, to precisely replicate the Pioneer paper.
    Since \textit{Pioneer} can only work with a single attribute, we do not run it for tasks 
    over the \textsc{Taxi} or \textsc{Planes} datasets.) 
    We detail the accuracy requirements and task parameters in the full paper. 
    The BFS and DFS tasks are conducted over the \textit{Age} and \textit{Country} attributes of the \textsc{Adult} dataset respectively.
    Both tasks are also run over the \textit{Latitude (Lat)} and \textit{Longitude (Long)} attributes of the \textsc{Taxi} dataset. 
    Each client randomly draws the minimum threshold for their BFS and the maximum threshold for their DFS. 
    The BFS, DFS and IDEBench tasks consume  $(\alpha, \beta)$ accuracy requirements, with fixed $\beta=0.05$ across all clients. 
    Each client randomly selects $\alpha = \alpha_s|D|$ for $\alpha_s \in [0.01, 0.16]$, with step size 0.05.
\end{techreport}

\subsection{End-to-end Comparison}

\subsubsection{Privacy Budget Comparison}\label{sec:privacy_budget_comp}
We repeat each interactive exploration task $N=100$ times, and we compute the average cumulative privacy budget for our solution \sysname and baselines (\textit{APEx}, \textit{APEx with cache}, \textit{Pioneer}) over all $N$ experiment runs. 
We plot the mean and 95\% confidence intervals in Figure~\ref{fig:privacycomparison}. 
We have two hypotheses: 
\begin{enumerate}[label=H\arabic*, leftmargin=*]
    \item \label{hyp:h1} The baselines arranged in order of increasing cumulative privacy budget should be: \textit{APEx}, \textit{APEx with cache}, Pioneer. 
    \begin{enumerate}[label=H1.\arabic*, leftmargin=*]
        \item \label{hyp:h1-1} \textit{Pioneer} should outperform \textit{APEx with cache},
        since \textit{Pioneer} saves privacy budget over any related workloads, whereas \textit{APEx with cache} only saves privacy budget over repeated workloads. 
        \item \label{hyp:h1-2} Baselines with a cache (\textit{APEx with cache}, Pioneer) should outperform the baseline without a cache (\textit{APEx}).
    \end{enumerate}
    \item \label{hyp:h2} \sysname should outperform all baselines. 
\end{enumerate}

First, we observe that hypothesis~\ref{hyp:h1} holds for all tasks, other than the single-attribute BFS and DFS tasks (Figures~\ref{fig:single_bfs},~\ref{fig:single_dfs}).
Since we decompose each workload into multiple single range queries for Pioneer, this sequential composition causes it to perform worse than \textit{APEx} without a cache in the BFS task, and thus hypothesis~\ref{hyp:h1-1} is violated. 
For the same reason, \textit{APEx with cache} outperforms Pioneer for the single-attribute DFS task, and so, hypothesis~\ref{hyp:h1-2} is violated. 
Though, we note that hypothesis~\ref{hyp:h1-2} holds for the RRQ task (Figure~\ref{fig:pioneer}).
Our Pioneer implementation replicates a similar privacy budget trendline to the original paper~\cite[Figure 15]{pioneer}. 

Hypothesis~\ref{hyp:h2} holds for all tasks, and the cumulative privacy budget spent by \sysname scales slower per query, by \textit{at least} a constant factor, over all graphs.
%\eat{Any privacy budget spent upfront by \sysname is amortized across a large number of runs, as seen in the RRQ task, and in this case, results in an order of magnitude difference with the best baseline (\textit{Pioneer}).}
\miti{We note that in the RRQ task (Figure~\ref{fig:pioneer}), \sysname~spends more privacy budget \textit{upfront} than the other systems, since these systems use the simpler Laplace Mechanism, which is optimal for single range queries over our underlying Matrix Mechanism. 
However, any upfront privacy budget spent by \sysname is used to fill the cache, which yields budget savings over a large number of workloads, as \sysname requires an order of magnitude less cumulative privacy budget than the best baseline (\textit{Pioneer}).} 
We observe that even in the computationally intensive tasks due to larger data vectors for two attributes (Figures~\ref{fig:location_bfs},~\ref{fig:location_dfs}) and multiple attributes (Figure~\ref{fig:HD_case}), \sysname outperforms the best baseline (\textit{APEx with cache}), by at least a factor of 1.5 for Figure~\ref{fig:HD_case}. 

For both DFS tasks  (Figures~\ref{fig:single_dfs},~\ref{fig:location_dfs}), since each experiment can terminate after a different number of workloads have been run, we observe large confidence intervals for higher workload indices for each system. 
\sysname simply returns cached responses to a workload if they meet the  accuracy requirements, whereas our simulation for \textit{APEx with cache} resamples noisy workload responses 
and may traverse the tree again in a possibly different path. 
Relaxed accuracy requirements from different clients can lead to frequent re-use of our cache (Section~\ref{subsubsec:experiment-setup}),  
and thus, we find that in Figure~\ref{fig:location_dfs},  \sysname ends the DFS exploration faster than \textit{APEx with cache}.

\subsubsection{Accuracy Evaluation}
\label{sec:accuracy-eval}
We measured the empirical error of the noisy responses returned by all systems and found that they  meet the the clients' $(\alpha, \beta)$ accuracy requirements. 
%When \sysname uses cached responses, they commonly exceed the accuracy requirements. 
Cached responses used by \sysname commonly exceed the accuracy requirements. 
Specifically, when \textit{all} strategy responses are free, \sysname will always return the most accurate cached response for each strategy query, even if the current workload has a poorer $\alpha$.

\subsubsection{Overhead Evaluation}\label{sec:overheads}
We compute the following storage and computation overheads for all systems, averaged over all $N$ experiment runs: (1) \textit{cache size} in terms of total number of cache entries at the end of a run, and (2) \textit{workload runtime}, averaged over all workloads in a run.
In Table~\ref{tab:overhead-eval}, we present these overheads for representative tasks. 
\miti{(Since our simulation for \textit{APEx with cache} only differs from \textit{APEx} by a recalculation of the privacy budget (Section~\ref{subsec:experimental-setup}), the latter has the same runtime as the former.) %We thus only depict the overheads for \textit{APEx with cache}, \textit{Pioneer} and \sysname.)
} 
Our cache size is limited by the number of nodes on our strategy tree, and so for the RRQ task, which has $50k$ workloads, \sysname has a smaller cache size than the baselines. 
Whereas, in tasks with fewer workloads, such as 
the IDEBench task, our PQ module inserts more strategy query nodes into the cache, and thus, significantly increases our cache size over the baselines. Nevertheless, since our cache entries only consist of 4 floating points (32B), even a cache with  $\approx 25k$ entries would be reasonably small in size ($\approx 800kB)$. 

\miti{
    In terms of runtime, \sysname~only takes a few seconds per workload for single-attribute tasks, such as the DFS task, thereby matching other cached baselines. 
    As the number of attributes increases in the IDEBench task, 
    %With an increase in the number of attributes, as observed in the IDEBench task, 
    %we find that 
    the cumulative cache size and runtime of \sysname scales linearly. 
    Specifically, IDEBench workloads require computations over a larger data vector that spans many attributes.} 
    \begin{vldbpaper}
        \miti{(We include a graph for these variables and discuss performance optimizations in the extended paper.)}
    \end{vldbpaper}
    \miti{Yet, non-optimized \sysname only takes around 6 minutes per workload for the IDEBench task, and performs slightly better than APEx with cache.
    }

\begin{techreport}
\begin{figure}
    \centering
    \includegraphics[scale=0.5]{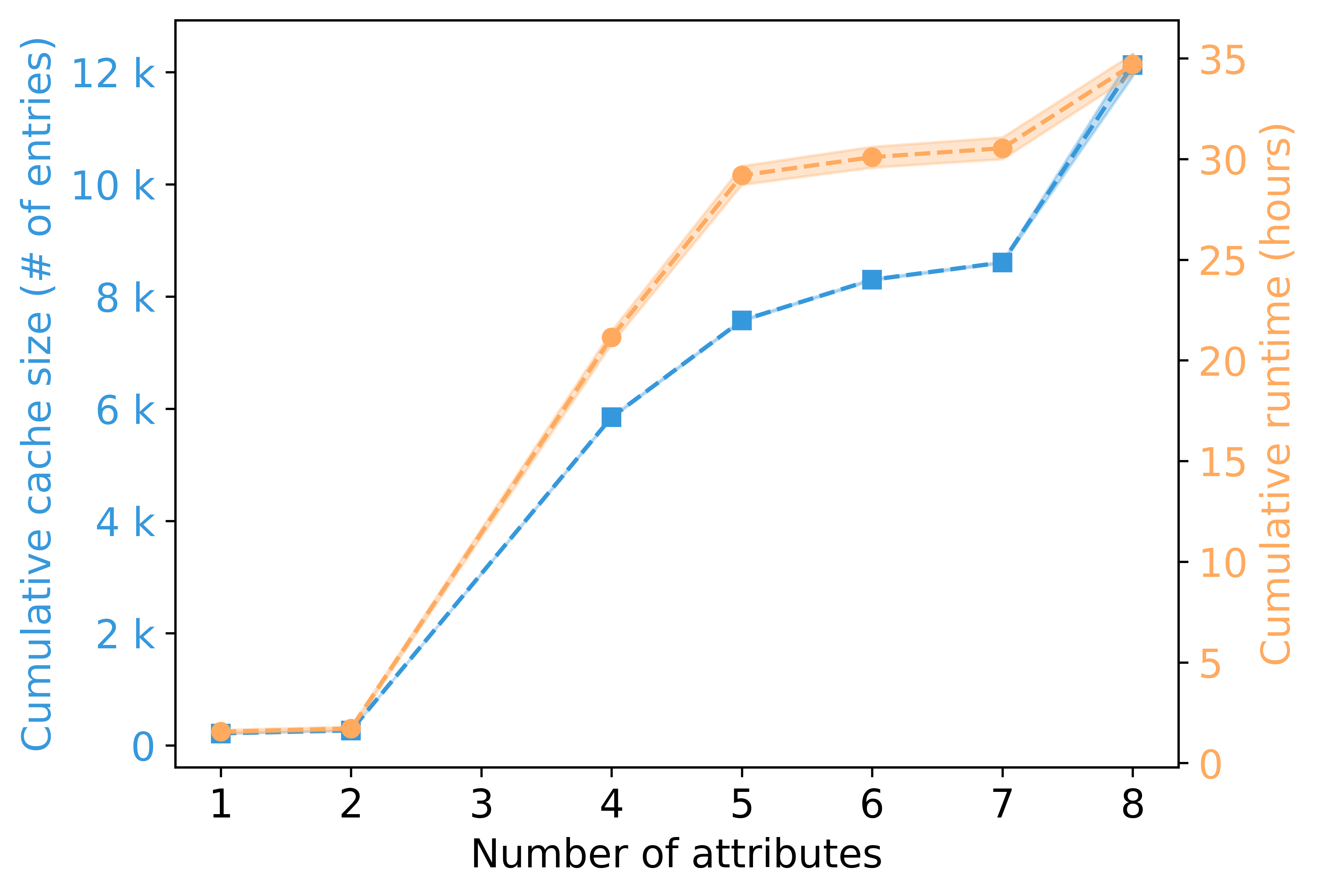}
    \caption{Average cumulative cache size and runtime of \sysname, versus the number of attributes, for the IDEBench task.}
    \label{fig:idebench-performance}
\end{figure}
We illustrate how the cumulative cache size and runtime of \sysname scales with the number of attributes for the IDEBench task in Figure~\ref{fig:idebench-performance}.
We note that the x-axis corresponds to a total of 27 IDEBench workloads. We consider multiple clients in Table~\ref{tab:overhead-eval}, whereas we only consider one client in Figure~\ref{fig:idebench-performance}.\footnote{Since this single client will not experience fully cached, accurate workload responses, we find an order of magnitude worse runtime overheads than simply by multiplying the average in Table~\ref{tab:overhead-eval} by the number of IDEBench workloads. Thus this plot shows worst-case runtimes for the IDEBench task.} We also note that increasing the number of records in the database only impacts the size of the data vector in the pre-processing stage, and does not impact the size of the cache.

We can see that both observed variables scale approximately linearly with the number of attributes.
The cache size can be limited by identifying inaccurate cache entries that can be replaced, while inserting new noisy responses, in each module.
The MC simulation forms the bottleneck for the runtime, when the number of attributes remains small, whereas the matrix multiplication becomes the main bottleneck as the number of attributes increases.
The MC simulation can be avoided entirely, by expressing the desired accuracy through expected variances instead of the $(\alpha, \beta)$ requirements.
Our implementation can be optimized with fast matrix multiplication algorithms to  achieve smaller runtimes. 
\end{techreport}

\begin{vldbpaper}
    \begin{table*}[t]
        \parbox{.48\linewidth}{
        \centering
        \resizebox{\dimexpr\columnwidth}{!}{%
        \begin{tabular}{|l|ll|ll|}
        \hline
            System     & \multicolumn{2}{c|}{Cache size (entries)} & \multicolumn{2}{c|}{Runtime (s)} \\ \cline{2-5}
            & \multicolumn{1}{l|}{RRQ}    & IDEBench  & \multicolumn{1}{l|}{DFS - Adult}  & IDEBench \\ \hline 
            APEx (cache) & \multicolumn{1}{l|}{3118$\pm$4} & 6540$\pm$0  & \multicolumn{1}{l|}{2.71$\pm$0.02} & 456$\pm$70 \\ \hline
            Pioneer & \multicolumn{1}{l|}{3118$\pm$4} & - & \multicolumn{1}{l|}{1.6$\pm$0.2ms}    & - \\ \hline
            CacheDP & \multicolumn{1}{l|}{1998$\pm$0} & 23666$\pm$1000 & \multicolumn{1}{l|}{2.82$\pm$0.2}    &  338$\pm$20 \\ \hline
        \end{tabular}}
        
        \caption{Cache size and workload runtime comparison. }
        \label{tab:overhead-eval}}
        \hfill
\parbox{.48\linewidth}{
\resizebox{\dimexpr\columnwidth}{!}{%
\begin{tabular}{|l|l|l|l|l|}
\hline
         & BFS & DFS & RRQ  & IDEBench \\ \hline
%Free & $164.78\pm 0.93$ & $591.22\pm 6.37$ & $49801.20\pm 3.45$ & $216.40\pm 1.23 $\\
%MMM & $2.05\pm 0.12  $ & $2.25\pm 0.09  $ & $\mathbf{137.00\pm 3.21}  $ & $15.10\pm 0.21 $ \\
%RP & $14.67\pm 0.49 $ & $21.12\pm 0.53 $ & $1.42\pm 0.13    $ & $\mathbf{30.72\pm 1.02} $ \\
%SE & $\mathbf{18.47\pm 0.63 }$ & $\mathbf{41.89\pm 1.26 }$ & $60.38\pm 3.83   $ & $7.78\pm 0.24 $\\ \hline
Free & $164.8\pm 0.9$ & $591\pm 6$ & $49801\pm 3$ & $216\pm 1$\\
MMM & $2.1\pm 0.1  $ & $2.25\pm 0.09  $ & $\mathbf{137\pm 3}  $ & $15.1\pm 0.2 $ \\
RP & $14.7\pm 0.5 $ & $21.1\pm 0.5 $ & $1.4\pm 0.1    $ & $\mathbf{31\pm 1} $ \\
SE & $\mathbf{18.5\pm 0.6 }$ & $\mathbf{42\pm 1 }$ & $60\pm 4   $ & $7.8\pm 0.2 $\\ \hline
\end{tabular}}
\caption{\miti{Average number of times each module was chosen for each task; most frequently chosen modules are in bold.}
}\label{tab:module_usage_counts}}
\end{table*} \vspace{-10pt}
\end{vldbpaper}

\begin{techreport}
    \begin{table*}[h]
        \centering
        \resizebox{\dimexpr\linewidth}{!}{%
        \begin{tabular}{|l|ll|ll|ll|ll|ll|ll|}
    \hline
    System     & \multicolumn{2}{l|}{BFS - Adult (Age)}            & \multicolumn{2}{l|}{DFS - Adult (Country)}      & \multicolumn{2}{l|}{RRQ - Synthetic (1D)}            & \multicolumn{2}{l|}{BFS - Taxi (Lat,Long)}                       & \multicolumn{2}{l|}{DFS - Taxi (Lat,Long)}                & \multicolumn{2}{l|}{IDEBench} \\ \cline{2-13} 
               & \multicolumn{1}{l|}{Cache Entries}      & Runtime (s)       & \multicolumn{1}{l|}{Cache Entries} & Runtime (s)       & \multicolumn{1}{l|}{Cache Entries}      & Runtime (s)    & \multicolumn{1}{l|}{Cache Entries}       & Runtime (s)         & \multicolumn{1}{l|}{Cache Entries} & Runtime (s)        & \multicolumn{1}{l|}{Cache Entries}  & Runtime (s) \\ \hline
    APEx       & \multicolumn{1}{l|}{-}               & 3.7$\pm$0.2 & \multicolumn{1}{l|}{-}          & 2.71$\pm$0.02 & \multicolumn{1}{l|}{-}               & 0.05$\pm$0 & \multicolumn{1}{l|}{-}                & 111$\pm$8 & \multicolumn{1}{l|}{-}          & 25$\pm$1 & \multicolumn{1}{l|}{-}           & 456$\pm$70       \\
    APEx Cache & \multicolumn{1}{l|}{123$\pm$1} & 3.7$\pm$0.2 & \multicolumn{1}{l|}{81$\pm$0}   & 2.71$\pm$0.02 & \multicolumn{1}{l|}{3118$\pm$4} & 0.05$\pm$0 & \multicolumn{1}{l|}{668$\pm$14} & 111$\pm$8 & \multicolumn{1}{l|}{2515$\pm$0} & 25$\pm$1 & \multicolumn{1}{l|}{6540$\pm$0}           & 456$\pm$70      \\
    Pioneer    & \multicolumn{1}{l|}{117$\pm$1} & 4.8$\pm$0.2ms       & \multicolumn{1}{l|}{80$\pm$0}   & 1.6$\pm$0.2ms       & \multicolumn{1}{l|}{3118$\pm$4} & 0.01$\pm$0 & \multicolumn{1}{l|}{-}                & -               & \multicolumn{1}{l|}{-}          & -              & \multicolumn{1}{l|}{-}           & -       \\
    CacheDP    & \multicolumn{1}{l|}{145$\pm$3} & 7.5$\pm$0.4 & \multicolumn{1}{l|}{81$\pm$0}   & 2.8$\pm$0.2  & \multicolumn{1}{l|}{1998$\pm$0}      & 0.05$\pm$0 & \multicolumn{1}{l|}{3669$\pm$0}       & 23$\pm$1  & \multicolumn{1}{l|}{3669$\pm$0} & 15.3$\pm$0.5 & \multicolumn{1}{l|}{23666$\pm$1000}           & 338$\pm$20       \\ \hline
    \end{tabular}
        }
        \caption{Cache size and workload runtime comparison. }
        \label{tab:overhead-eval}
    \end{table*}
\end{techreport}

%--------OLD VERSION---------------
%We conduct an ablation study over our single-attribute tasks (BFS, DFS, RRQ). 
%First, we observe that the standard configuration, with all modules turned on, performs the best in all three tasks, while considering CI overlaps.
%Second, the PQ module significantly lowers the cost for the BFS task, since it proactively fetches all ($\approx 12$) queries in a BFS workload at the cost of one strategy node.
%It also fetches other strategy nodes on the same tree level, thus answering future queries for both BFS and RRQ tasks.
%Third, when the same workload is repeated by other clients, the RP module also lowers the cost for BFS. We find minor differences are made by turning the SE module off.
%----------------------------------------------
\subsection{\miti{Ablation study}}\label{sec:ablation_study}
\begin{techreport}

\begin{table}[t]
\resizebox{\dimexpr\columnwidth}{!}{%
\begin{tabular}{|l|l|l|l|l|}
\hline
         & BFS & DFS & RRQ  & IDEBench \\ \hline
%Free & $164.78\pm 0.93$ & $591.22\pm 6.37$ & $49801.20\pm 3.45$ & $216.40\pm 1.23 $\\
%MMM & $2.05\pm 0.12  $ & $2.25\pm 0.09  $ & $\mathbf{137.00\pm 3.21}  $ & $15.10\pm 0.21 $ \\
%RP & $14.67\pm 0.49 $ & $21.12\pm 0.53 $ & $1.42\pm 0.13    $ & $\mathbf{30.72\pm 1.02} $ \\
%SE & $\mathbf{18.47\pm 0.63 }$ & $\mathbf{41.89\pm 1.26 }$ & $60.38\pm 3.83   $ & $7.78\pm 0.24 $\\ \hline
Free & $164.8\pm 0.9$ & $591\pm 6$ & $49801\pm 3$ & $216\pm 1$\\
MMM & $2.1\pm 0.1  $ & $2.25\pm 0.09  $ & $\mathbf{137\pm 3}  $ & $15.1\pm 0.2 $ \\
RP & $14.7\pm 0.5 $ & $21.1\pm 0.5 $ & $1.4\pm 0.1    $ & $\mathbf{31\pm 1} $ \\
SE & $\mathbf{18.5\pm 0.6 }$ & $\mathbf{42\pm 1 }$ & $60\pm 4   $ & $7.8\pm 0.2 $\\ \hline
\end{tabular}}
\caption{\miti{Average number of times each module was chosen for each task; most frequently chosen modules are in bold.}
}\label{tab:module_usage_counts}
\end{table}
\end{techreport}
\miti{
Each of our modules contribute differently to the success of our system across different workloads. We conduct an ablation study in two parts analyzing our modules. 
First, we analyze the frequencies at which each module is selected to answer a workload, 
and second, we run a study to quantify the impact of each module on the cumulative privacy budget.
We begin with our frequency analysis, noting that a module is chosen to answer a given workload if it is estimated to cost the lowest privacy budget. 
We only include the MMM, RP, and SE modules in this analysis, since the PQ module is not involved in the cost estimation stage. 
We present the number of times each module is chosen to answer a workload in each of the BFS, DFS, RRQ and IDEBench tasks, averaged over $N=100$ runs, in Table~\ref{tab:module_usage_counts}.
If MMM reports $\epsilon=0$ for a workload, \sysname simply uses MMM to answer the workload using cached responses, and it does not run RP or SE modules. 
We thus separately record the number of free workloads per task in the first row of Table~\ref{tab:module_usage_counts}. 
First, we observe that most workloads for each task are free, indicating that using solely the cached strategy responses, \sysname can successfully answer most workloads for these tasks. 
}

\miti{
Second, considering all non-free workloads, each of the modules are used the most frequently for at least one task. 
SE is chosen most frequently for the BFS and DFS tasks, answering $52\%$ and $64\%$ of non-free workloads respectively. 
Furthermore, for many workloads in these tasks, we observed that MMM had $\epsilon >0$ cost, but under SE, these workloads became free ($\epsilon=0$).
%Note we have data for this in the artifact, but I omitted it from the table for simplicity. We can discuss how to add it but the above is all i want to say really.
RP is chosen most frequently for the IDEBench task ($57\%$), whereas MMM is used most frequently for the RRQ task ($69\%$). 
Thus, we can see that each module plays a role in \sysname's performance in one or more tasks.
}
% \eat{
%     We observe that SE is the most used module for BFS and DFS tasks, used approximately $52\%$ and $64\%$ of the time respectively (excluding free queries). RP is utilized the most for the IDEBench task ($57\%$) and MMM for the RRQ task ($69\%$). Overall, we find that each module is most used for at least one of the tasks, showing that all modules play a role in \sysname's performance. 
% }

\miti{
We also run a study to quantify savings in the cumulative privacy budget due to each module. 
We rerun our single-attribute tasks (BFS, DFS, RRQ) while disabling each of our modules (MMM, SE, RP, PQ) one at a time, and present the cumulative privacy budget consumed by each such configuration in Figure~\ref{fig:ablation_study}.
The standard configuration consists of all modules turned on. 
(Turning an effective module off should lead to an increase in the cumulative privacy budget, in comparison to the standard configuration.)
First, we observe that the standard configuration performs the best in all three tasks, while considering CI overlaps. 
Therefore, data analysts need not pick which modules should be turned on in order to answer a workload sequence with the lowest privacy budget.
Second, the PQ module significantly lowers the cost for the BFS and RRQ task, proactively fetching all ($\approx 12$) queries in a BFS workload at the cost of one strategy node.
Third, the RP module also lowers the cost for BFS, when the same workload is repeated by other clients. 
%We find minor differences are made by turning the SE module off. 
}

%Wanted to explicitly spell out the cumulative privacy budget part to segue to the next sentence.
\miti{Fourth, turning the SE module off only contributes to minor differences in the cumulative privacy budget ($\ConsumedPrivacyBudget$). 
However, the reader may expect that turning the SE module off would lead to a higher $\ConsumedPrivacyBudget$, since based on the frequency analysis, the SE module is most frequently chosen to answer non-zero workloads for the BFS and DFS tasks.
}
\begin{vldbpaper}
    \miti{We reconcile this discrepancy with the observation that the SE module provides savings on earlier workloads, through constrained inference, at the cost of a less accurate cache to answer future workloads. }
\end{vldbpaper}
\begin{techreport}
    \miti{
    We reconcile this discrepancy as follows.
    When the SE module is chosen, constrained inference reduces the privacy cost for paid strategy queries. 
    As a result, the cache entries for these queries remain at a lower accuracy than if MMM or RP had been used to answer them, and they may not be reusable for later workloads. 
    Thus \sysname may need to obtain noisy responses for these queries, later on, at a higher accuracy. The SE module thus provides savings on earlier workloads at the cost of a less accurate cache to answer future workloads. }
\end{techreport}
\miti{In summary, different tasks exploit different modules and the standard configuration incurs the least privacy budget, and thus data analysts need not turn off any modules. }

\begin{figure}[t]
\centering
  \includegraphics[scale=0.40]{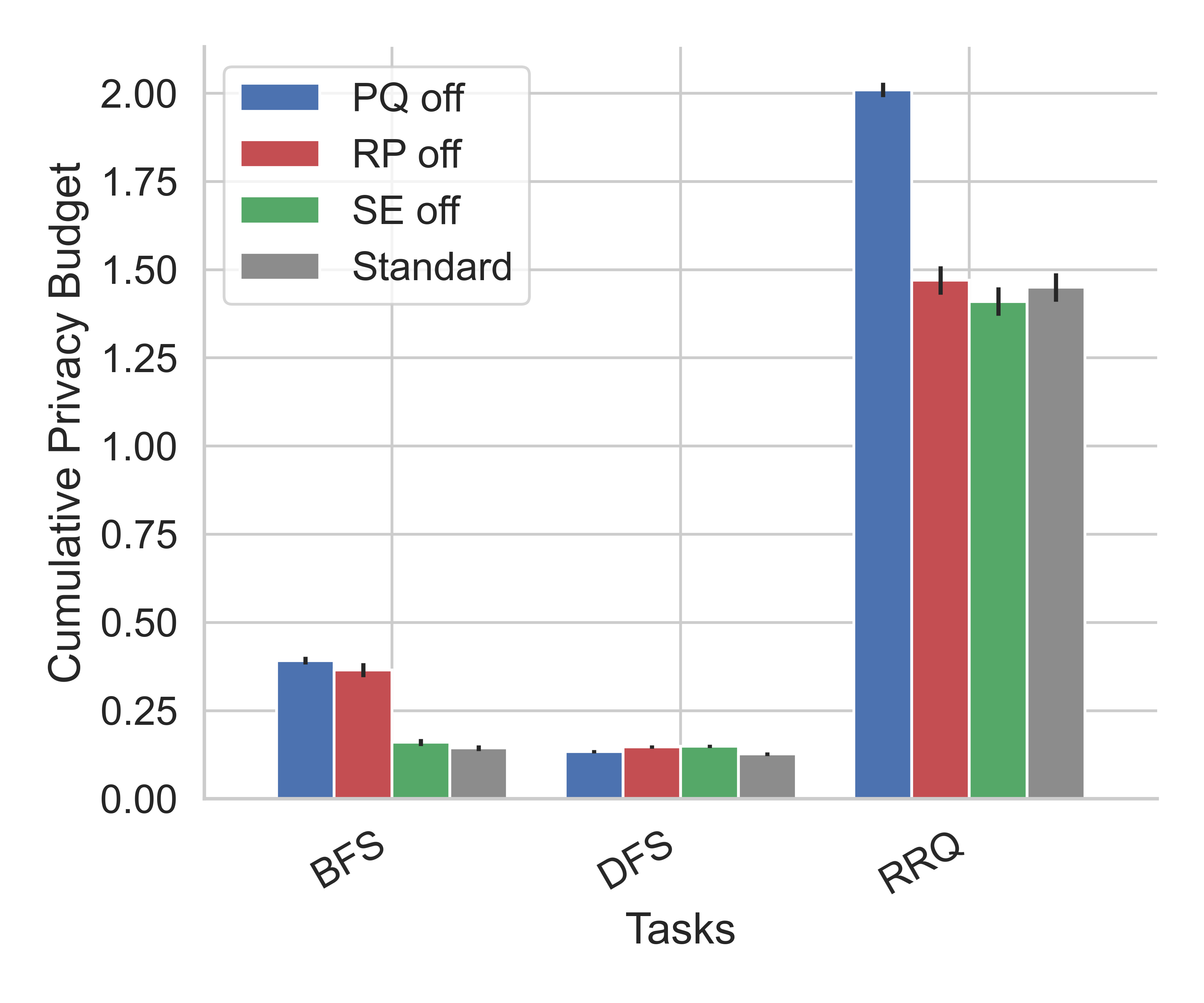}%width=0.9\linewidth, height=5cm 
  \vspace{-5mm}
\caption{Ablation study over PQ, RP, SE modules of \sysname }\label{fig:ablation_study}
\end{figure}

\section{Related work} 
\label{sec:related_work}
\miti{
    Constrained inference techniques have been applied in the non-interactive DP setting to improve the accuracy of noisy query responses~\cite{Zhang2016_privtree,qardaji2013understanding} and in synthetic data generators~\cite{dpautogan21,ge2020kamino,mckenna_graphical-model_2019} to infer consistent answers from a data model built through noisy measurement queries. 
    %Inference techniques are popular in the DP literature to improve query responses~\cite{hay2010boosting,li2015matrix,Zhang2016_privtree,qardaji2013understanding} by taking advantage of the constraints on the responses. 
    %Similarly, DP inferencers are also found in synthetic data generators~\cite{zhang2017privbayes, aslay2019_querythemodel_bayesian, orr2017probabilistic_db_summarization, ge2020kamino, pategan19a, dpcgan19,dpwgans19,dpautogan21,privatesql, mwem12} that build models of the underlying database distribution and infer answers from the noisy data model.
    %Such approaches incur a privacy budget only once upfront and pay no additional privacy budget to answer future queries. 
    %However, these systems only minimize the error in the responses under a given privacy budget, with no accuracy guarantee on the inferred responses. 
    However, these systems do not provide any accuracy guarantee on the inferred responses. 
    If the analyst desires a more accurate response than the synthetic data can offer, no privacy budget remains to improve the query answer~\cite{benchsyn21}.
    Our work applies DP constrained inference in an interactive setting so that we can spend the privacy budget on queries that the analyst is interested in and meet their accuracy requirements.   
    %%%Accuracy-aware papers [APEx, DPella, ViP, PSI, Gupt, Pioneer, Accuracy-first for EMR]
    On the other hand, existing accuracy-aware DP systems for data exploration~\cite{ge_apex:_2017,gupt12}, releasing data~\cite{gaboardi2016psi,vip22}, or programming frameworks~\cite{lobovesga2019_dpella, fitnessforuse21} do not exploit historical query answers to save privacy budget on a given query. 
    %However, they do not build a structured cache to store historical query answers to support future queries. 
    We design a cache structure and inference engine extending one of these accuracy-aware systems, APEx~\cite{ge_apex:_2017}. 
}

Peng et al.'s Pioneer~\cite{pioneer} is the most relevant work that uses historical query answers to obtain accurate responses to upcoming single range queries. % over the same or related predicates. 
\eat{However, our work differs in several ways. }
However, \sysname can handle workloads with multiple queries.
%However, \sysname optimizes the privacy budget over workloads with multiple queries.
%\eat{First, we can optimize the privacy budget over multiple queries.}
%To achieve this, we adapt the matrix mechanism~\cite{li2015matrix} from a single workload setting to capture both queries in the cache and the unanswered queries. 
Second, it supports multiple, complex DP mechanisms and chooses the mechanism that uses the least privacy budget for each new workload. 
Third, our PQ module (Section~\ref{subsec:proactive}) proactively fetches certain query responses that can be used later at no additional cost.
Finally, \sysname can answer multi-attribute queries through our extended ST module (Section~\ref{sec:multi-attribute-case}). 

\miti{
Our key modules are built on top of prior work (e.g., Li et al.'s Matrix Mechanism~\cite{li2015matrix}, Koufogiannis et al.'s Relax Privacy Mechanism~\cite{relaxed_privacy}), such that existing interactive DP systems that make use of these mechanisms (e.g. PrivateSQL~\cite{privatesql}, APEx~\cite{ge_apex:_2017}) do not have to make significant changes; these systems can include a relatively light-weight cache structure and cache-aware version of the DP mechanisms. 
Integrating a structured, reusable cache with these mechanisms has its own technical challenges, such as the Cost Estimation problem (Section~\ref{subsubsec:simplified-priv-optimizer}), Full Rank Transformation problem (Section~\ref{sec:fullranktransform}), 
as well as optimally reusing the cache (Section~\ref{subsec:strategy-expander}) and filling it (Section~\ref{subsec:proactive}). 
%our extension of our one-dimensional cache to high-dimensional cache (Section~\ref{sec:multi-attribute-case}).
}

\begin{techreport}
\label{sec:future-work}
\section{Future work:}
\textit{CacheDP} can be extended to answer top-$k$ or iceberg counting queries studied in APEx, as well as simple aggregates such as means, by integrating the query processing engine of APEx to transform these queries to raw counting queries. 
Beyond counting queries, providing differential privacy over complex SQL queries, such as joins and group by operators, is a challenging problem, as studied in prior work~\cite{privatesql,wei_and_yi,googledp}. 
The global sensitivity of SQL queries involving joins is unbounded. 
To tackle this challenge, the existing well-performed DP mechanisms~\cite{privatesql,wei_and_yi,googledp} for these queries require a data-dependent transformation (e.g., truncation of the data) in order to bound the sensitivity of the query. 
Thus,  the accuracy of these  mechanisms depend on the data, and searching the minimum privacy budget to achieve the desired accuracy bound is non-trivial, which is an important research direction.
\end{techreport}

\begin{vldbpaper}
    \vspace{-5mm}
\end{vldbpaper}
\section{Conclusion}
\label{sec:conclusion}
We build a usable interactive DP query engine, \sysname, that uses a structured DP cache to achieve privacy budget savings commonly seen in the non-interactive model. 
\sysname supports data analysts in answering data exploration workloads accurately, without requiring them to have any DP knowledge.
Our work provides researchers with a methodology to address common challenges while integrating DP mechanisms with a DP cache, such as, cache-aware privacy budget estimation (MMM), filling the cache at a low privacy budget (PQ), and maximizing cache reuse (SE).

\begin{vldbpaper}
    \vspace{-1em}
\end{vldbpaper}
\section{Acknowledgements}
%We thank the Royal Bank of Canada and NSERC grant CRDPJ-534381 for funding this work. 
%We also acknowledge the NSERC Postgraduate Scholarship-Doctoral program and the NSERC Discovery Grant for supporting this work.
We thank NSERC for funding our work through the Postgraduate Scholarship-Doctoral program, grant CRDPJ-534381, and a Discovery Grant. 
We also thank the Royal Bank of Canada for supporting our work. 
This work benefited from the use of the CrySP RIPPLE Facility at the University of Waterloo.

\begin{vldbpaper}
\balance
\end{vldbpaper}
\bibliographystyle{ACM-Reference-Format}
\bibliography{refs}
\begin{techreport}
\appendix

\section{Proofs}
\subsection{End-to-end Privacy Proof}
\subsubsection{Proof of Theorem~\ref{thm:end-to-end_privacy}} Recall the theorem states that \sysname, as defined in Algorithm~\ref{algo:end-to-end}, satisfies \TotalPrivacyBudget-DP.
\begin{proof} (sketch)
We begin by addressing the cost estimation phase (lines 4-10). The cost estimation phases are independent of the data. In addition, each DP mechanism (MMM or RP) with its corresponding chosen strategy ensures $\epsilon_i$-DP (Proposition~\ref{prop:privacy_mmm}, ~\cite[Theorem 1A]{relaxed_privacy}).  At line 11, we check if running the chosen DP mechanism (MMM or RP with the corresponding chosen strategy) will exceed the total privacy budget $\TotalPrivacyBudget$ by sequential composition~\cite{privacy_book}. We only run the DP mechanism if the total budget is sufficient. This ensures the overall Algorithm~\ref{algo:end-to-end} satisfies  \TotalPrivacyBudget-DP.

% \eat{
% We recall that each module uses a data independent cost estimation. Thus the choosing of modules costs no privacy budget. Once a module has been chosen, we show that each module ensures $\epsilon_i$-DP. MMM (with SE) is proven in Proposition~\ref{prop:privacy_mmm}. The RP module's privacy follows from the results by Koufogiannis et al.~\cite[Theorem 1A]{koufogiannis2016gradual} as we apply an unmodified version of their protocol to improve the accuracy of the cache. 
% The proactive module follows by Theorem~\ref{thm:proactive} and the parallel composition theorem~\cite[Theorem 4]{pinq}. 

% Since we have shown that each module satisfies $\epsilon_i$-DP all that remains to show is the total privacy cost is $\TotalPrivacyBudget$-DP. This follows from sequential composition~\cite{privacy_book}.
% }
\end{proof}

\subsection{MMM Module Proofs}
\subsubsection{Proof of Proposition~\ref{prop:privacy_mmm}} Recall the proposition states that the \textsc{AnswerWorkload} interface of MMM (Algorithm~\ref{algo:mmm}) satisfies $\epsilon$-DP, where $\epsilon$ is the output of this interface.
\begin{proof} (sketch)
When the cache is empty, the privacy of MMM (with optional SE) follows from the matrix mechanisms privacy guarantee in Proposition~\ref{prop:mmdp}. When there are entries in the cache, we split the strategy into the free and paid matrix. The paid matrix is private by the same reasoning as above. The privacy of the free matrix responses and the final concatenation of free and paid responses follow by the post processing lemma of DP~\cite[Proposition 2.1]{privacy_book}. 
\end{proof}
\subsubsection{Proof of Proposition~\ref{prop:mmm-error}}
Given an instant strategy $\StrategyFull=(\StrategyMatrixFree||\StrategyMatrixPaid)$ with a vector of 
$k$ noise parameters $\BList=\BList_{\StrategyMatrixFree} || \BList_{\StrategyMatrixPaid}$, the error to a workload $\WorkloadFull$ using the \textsc{AnswerWorkload} interface of MMM (Algorithm~\ref{algo:mmm}) is 
$\|\WorkloadFull\StrategyFull^+ Lap(\BList)\|   $, where $Lap(\BList)$ draws independent noise from $Lap(\BList[1])$, $\ldots, Lap(\BList[k])$ respectively. We can simplify its expected total square error as
$ \|\WorkloadFull\StrategyFull^+ diag(\BList)\|_F^2$ where $diag(\BList)$ is a diagonal matrix with $diag(\BList)[i,i] = \BList[i]$. 
\begin{proof} The error to the MMM is 
$\|\WorkloadFull \StrategyFull^+(\StrategyFull\DataVector+ Lap(\BList)) - \WorkloadFull\DataVector\| 
= \|\WorkloadFull\StrategyFull^+ Lap(\BList)\| 
$. The expected total square error 
$\mathbb{E}[\|\WorkloadFull\StrategyFull^+ Lap(\BList)\|^2_2]$ can be expanded to $\sum_{i=1}^l  \mathbb{E} [\sum_{j=1}^k (\WorkloadFull\StrategyFull^+[i,j] Lap(\BList[j]))^2]$. As the $k$ noise variables are independent and has a zero mean, then we have the error expression equals to 
\[
\sum_{i=1}^l \sum_{j=1}^k (\WorkloadFull\StrategyFull^+[i,j])^2 \mathbb{E} [Lap(\BList[j]))^2] 
%&=& \sum_{i=1}^l\sum_{j=1}^k (\WorkloadFull\StrategyFull^+[i,j])^2 \mathbb{Var} [Lap(\BList[j]))]\\
= \sum_{i=1}^l\sum_{j=1}^k (\WorkloadFull\StrategyFull^+[i,j])^2 \BList[j]^2
\]
which is equivalent to $ \|\WorkloadFull\StrategyFull^+ diag(\BList)\|_F^2$.
\end{proof}

\subsubsection{Proof of Theorem~\ref{thm:simplifiedce}} \label{app:mmm-simplified-ce-proof} The optimal solution to simplified CE problem incurs a smaller privacy cost $\epsilon$ than the privacy cost $\epsilon_{\StrategyMatrixFree = \emptyset}$ of the matrix mechanism without cache, i.e., MMM with $\StrategyMatrixFree=\emptyset$. 
\begin{proof} (sketch) 
Let $b^*$ be the noise parameter for $\StrategyFull$ in the matrix mechanism without cache to achieve the desired accuracy requirement. 
We can show that $b^*$ is a valid solution to the simplified CE problem:
setting 
%$\BList_{\StrategyMatrixFree}= [(\mathbf{a},b,\NoisyResponse,t) \in \Cache  ~|~ \mathbf{w} \in \Cache \cap \StrategyFull, b \leq b^*]$ 
$\BList_{\StrategyMatrixFree}= [c.b \in \Cache  ~|~ c.\mathbf{a} \in \Cache \cap \StrategyFull, c.b \leq b^*]$ 
and $\BList_{\StrategyMatrixPaid}=[b^*~|~ \mathbf{a} \in \StrategyMatrixPaid]$ satisfies the accuracy requirement.  As $\|\StrategyMatrixPaid\|_1 \leq \|\StrategyFull\|_1$, the privacy cost $\epsilon= \frac{\|P\|_1}{b^*}$ for $\PaidNoiseParameter=b^*$
is smaller than $\epsilon_{\StrategyMatrixFree = \emptyset}=\frac{\|\StrategyFull\|_1}{b^*}$. The optimal solution to the simplified CE problem has a smaller or the same privacy cost than a valid solution $\PaidNoiseParameter=b^*$.
\end{proof}

\subsection{Full-rank Transformer (FRT) Proof}
Recall Theorem~\ref{thm:fullrank} states that
Given a global strategy $\StrategyMatrixGlobal$ in a $k$-ary tree structure, and an instant strategy $\StrategyRaw \subseteq \StrategyMatrixGlobal$,  \textsc{transformStrategy} outputs a strategy $\StrategyFull$ that is  full rank  and supports $\StrategyRaw$. We begin by proving the following lemma.

\begin{lemma} \label{lem:add_one_bucket}
Given a global strategy $\StrategyMatrixGlobal$ in a $k$-ary tree structure, and an instant strategy $\StrategyRaw \subseteq \StrategyMatrixGlobal$, running \textsc{getTransformationMatrix}($\StrategyRaw$) in Algorithm~\ref{algo:strategy-transformer-functions}  increases the number of non-empty buckets in $\BucketMatrix$ by at most 1, for each row $\StrategyRaw[i]\in \StrategyRaw$. 
\end{lemma}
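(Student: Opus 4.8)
The plan is to run an induction over the loop of \textsc{getTransformationMatrix} and keep track of exactly which collection of cells the set $\BucketMatrix$ represents. The only structural fact I would use is that, since $\StrategyRaw \subseteq \StrategyMatrixGlobal$ and $\StrategyMatrixGlobal$ is a $k$-ary tree, the predicates $\{\StrategyRaw[0],\dots,\StrategyRaw[|\StrategyRaw|-1]\}$ form a \emph{laminar family}: any two of them are either disjoint or nested (this also covers the multi-attribute strategy tree of Section~\ref{sec:multi-attribute-case}, whose nodes are likewise laminar). The invariant I would maintain is: after the iteration that has processed $\StrategyRaw[0],\dots,\StrategyRaw[j]$, the set $\BucketMatrix$ equals the set of nonempty \emph{atoms} of this subfamily, where the atom ``owned'' by a processed node $v$ is $v \setminus \bigcup\{u\in\{\StrategyRaw[0],\dots,\StrategyRaw[j]\}: u\subsetneq v\}$; these atoms are pairwise disjoint and their union is $\bigcup_{i\le j}\StrategyRaw[i]$. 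The base case is immediate, since after the first iteration $\BucketMatrix=\{\StrategyRaw[0]\}$.

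Second, I would isolate a purely combinatorial fact: inserting one more set $X$ into a laminar family $\mathcal{F}$ (keeping it laminar) increases its number of nonempty atoms by at most one. The case split is short. If $X$ is disjoint from $\bigcup\mathcal{F}$, then $X$ becomes a brand-new atom and no existing atom changes, so the count goes up by exactly one. Otherwise let $v^\star$ be the smallest member of $\mathcal{F}$ with $X\subseteq v^\star$, if one exists (it is well defined, as the members of $\mathcal{F}$ containing $X$ form a chain by laminarity). If $v^\star$ does not exist, laminarity forces every member of $\mathcal{F}$ meeting $X$ to be contained in $X$, so all old atoms are unchanged and the only new atom is $X$ minus the union of those members (possibly empty): at most $+1$. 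If $v^\star$ exists, one checks, using laminarity to identify the maximal members of $\mathcal{F}$ lying strictly inside $X$, that the atom owned by $v^\star$ is split into exactly two sets, one owned by $v^\star$ and one owned by $X$, while all other atoms are untouched; so the count changes by $-1$, $0$, or $+1$.

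Third, I would verify that the update performed by \textsc{getTransformationMatrix} on row $\StrategyRaw[i]$ computes precisely the atoms of the enlarged family, so the invariant is preserved and, by the combinatorial fact, $|\BucketMatrix|$ grows by at most one per row. Concretely: ``$\StrategyRaw[i]$ is disjoint from every bucket'' is equivalent to ``$\StrategyRaw[i]$ is disjoint from $\bigcup_{j<i}\StrategyRaw[j]$'' because the current buckets partition that union, and this matches the first case, where the algorithm adds the single bucket $\StrategyRaw[i]$. In the other branch the algorithm keeps the buckets disjoint from $\StrategyRaw[i]$, replaces each bucket $\mathbb{t}$ meeting $\StrategyRaw[i]$ by $\mathbb{t}\cap\StrategyRaw[i]$ and $\mathbb{t}\setminus\StrategyRaw[i]$, and appends $\StrategyRaw[i]\setminus\bigcup\{\mathbb{t}:\mathbb{t}\cap\StrategyRaw[i]\neq\emptyset\}$; I would match these three pieces to the atom recomputation in the two sub-cases ($v^\star$ absent / present), observing that when $v^\star$ is absent every meeting bucket satisfies $\mathbb{t}\subseteq\StrategyRaw[i]$ (those buckets are unchanged, only the leftover is new), and when $v^\star$ is present $\StrategyRaw[i]\subseteq\bigcup_{j<i}\StrategyRaw[j]$ so the leftover is empty and only the single bucket owned by $v^\star$ is genuinely split. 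The loop guard $\StrategyRaw[i]\notin\BucketMatrix$ only skips iterations that would add nothing, hence is harmless for the count.

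The step I expect to be the main obstacle is the ``$v^\star$ exists'' case: pinning down that exactly one current bucket is \emph{partially} overlapped by $\StrategyRaw[i]$ — precisely the atom owned by $v^\star$ — and that this is the unique source of a new bucket there, which is what rules out the naive worst case where several intersecting buckets are each cut into two. The crux is the mutual exclusivity ``either an existing bucket is genuinely split, or a fresh leftover bucket is created, but never both,'' and it follows from the tree structure because a nonempty leftover $\StrategyRaw[i]\setminus\bigcup_{j<i}\StrategyRaw[j]$ forces $v^\star$ not to exist.
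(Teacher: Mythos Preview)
Your proof is correct and follows essentially the same approach as the paper: both argue, via the tree/laminar structure, that a new row $\StrategyRaw[i]$ is either disjoint from all buckets, contained in exactly one bucket (which it splits in two), or a superset of every bucket it meets (contributing at most one leftover bucket). Your explicit invariant that $\BucketMatrix$ equals the nonempty atoms of the processed laminar subfamily is a cleaner and more general formalization of what the paper's proof leaves implicit when it speaks of $\StrategyRaw[i]$ being a ``descendant'' or ``ancestor'' of a bucket.
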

\begin{proof}
If the condition in line~\ref{line:tm_if_disjoint} is met the result follows trivially.
We consider the remaining case where $\StrategyRaw[i]$ intersects with at least one bucket.
We recall that all entries in $\StrategyRaw$ represent nodes in a $k$-ary tree.
Since adding and subtracting nodes on a $k$-ary tree always results in a combination of one or more nodes on a $k$-ary tree, we can conclude that at the end of each loop, all of the buckets are disjoint.
For each new row $\StrategyRaw[i]$, if this row intersects with the buckets in $\BucketMatrix$, then $\StrategyRaw[i]$ is either (i) a descendant node of one bucket in $\BucketMatrix$, or (ii) an ancestor node of one or more buckets in $\BucketMatrix$.
For the first case, assume $\StrategyRaw[i]$ is a descendant node of $\mathbb{t} \in \BucketMatrix$. It cannot be the descendant of other buckets, as the other buckets are disjoint with $\mathbb{t}$. 
Then $\mathbb{t}$ will be replaced by $\mathbb{t}\cap\StrategyRaw[i]$ and $\mathbb{t}-\StrategyRaw[i]$ (line~\ref{line:tm_add_intersect}), and hence the size of $\BucketMatrix$ increases by 1. 
For the second case, assume $\StrategyRaw[i]$ is the ancestor node of multiple buckets in $\BucketMatrix$. We denote these buckets as $\{\BucketMatrix[j_1],\ldots,\BucketMatrix[j_k]\}$, then all these buckets will remain the same (line~\ref{line:tm_add_intersect} just adds $\mathbb{t}$), and at most one additional bucket $\StrategyRaw[i] - \sum_{\mathbb{t}\in \BucketMatrix' \wedge \mathbb{t}\cdot \StrategyRaw[i]\neq 0} \mathbb{t}$ is added in line~\ref{line:tm_add_remain}.
\end{proof}

We now prove the main result (Theorem~\ref{thm:fullrank}) that \textsc{transformStrategy} ensures full rank matrices.
\begin{proof}
We begin by discussing how $\StrategyFull$ represents the queries in $\StrategyRaw$.
First, we note that $\StrategyFull$ and $\StrategyRaw$ have the same number of rows.
The only difference is that $\StrategyFull$ represents the queries on the data vector $\DataVector$ where as $\StrategyRaw$ uses $\DataVectorRaw$.
By construction, we have $\StrategyRaw \DataVectorRaw = \StrategyFull \BucketMatrix \DataVectorRaw = \StrategyFull \DataVector$.

Since $\StrategyFull$ is a representation of $\StrategyRaw$ using the only the buckets generated in \textsc{getTransformationMatrix}, the number of columns in $\StrategyFull$ is the same as the number of buckets, $|\BucketMatrix|$.
To show that $\StrategyFull$ is always full rank, we will first show that $|row(\StrategyFull)| \ge |col(\StrategyFull)|$, 
where $|row(\StrategyFull)|$ and $|col(\StrategyFull)|$ represent the number of rows and the number of columns in $\StrategyFull$.  Or equivalently, $|row(\StrategyFull)|\ge |\BucketMatrix|$.
This follows by inductively applying Lemma~\ref{lem:add_one_bucket}. 

Next we show that $rank(\StrategyFull) = |col(\StrategyFull)|$.
We once again show this by induction on the number of rows in $\StrategyRaw$.
In the base case ($|\StrategyRaw|=1$) applying \textsc{transformStrategy}, we get $\StrategyFull = [1]$ and thus $rank(\StrategyFull)=1$.
Now we assume that $rank(\StrategyFull) = |col(\StrategyFull)|$ for some $\StrategyFull$ obtained from $\StrategyRaw$.
We consider adding a new row $r_{new}$ to $\StrategyRaw$ and define $\bar{\StrategyRaw} = \StrategyRaw \cup r_{new}$.
Let $\bar{\StrategyFull}$ represent the result of applying \textsc{transformStrategy} to $\bar{\StrategyRaw}$.

When adding this row there are two possible cases:
First, consider the case where the newly added row, $r_{new}$, can be represented as a linear combination of $\BucketMatrix$. That is the buckets created in \textsc{getTransformationMatrix} are the same for $\StrategyRaw$ and $\bar{\StrategyRaw}$ 
In this case, $\bar{\StrategyFull} = \StrategyFull \cup r'_{new}$.
Thus $|col(\bar{\StrategyFull})|=rank(\bar{\StrategyFull})$ since the number of linearly independent columns (or buckets) did not increase or decrease by adding a row. 
On the other hand, if the newly added row, $r_{new}$, cannot be represented as a linear combination of $\BucketMatrix$, then $r_{new}$ must be linearly independent of $\StrategyRaw$.
Thus, $\bar{\StrategyFull}$ has one additional linearly independent row and thus 
$rank(\bar{\StrategyFull}) = rank(\StrategyFull)+1$.
Furthermore by 
Lemma~\ref{lem:add_one_bucket}, we know that adding a row can add at most one new bucket.
Since we assume  $r_{new}$ cannot be represented as a linear combination of $\BucketMatrix$, this means $|col(\bar{A})|= |col(A)|+1$.
Thus we have that $|col(\bar{A})| = rank(\bar{\StrategyFull})$.
Combining the fact that $|row(\StrategyFull)| \ge |col(\StrategyFull)|$ and  $rank(\StrategyFull) = |col(\StrategyFull)|$, it follows that $\StrategyFull$ is full rank.
\end{proof}

\subsection{PQ Module Proofs} 
\label{app:proacitveproof}
\subsubsection{Proof of Lemma~\ref{lemma:subtree-norm-root}}    Recall the lemma states that the $L_1$ norm of the $\StrategyRawPaid$~matrix is equal to the subtree norm of the root of the tree with marked nodes corresponding to $\StrategyRawPaid$:
    \begin{equation}
        \mathcal{S}_{\StrategyRawPaid}(\mathcal{T}\text{.root}) = \|\StrategyRawPaid\|_1
    \end{equation}
\begin{proof}
    Given that we form $\StrategyRawPaid$ as shown in Example~\ref{ex:frt}, $\|\StrategyRawPaid\|_1$ simply represents the maximum number of overlapping RCQs in $\StrategyRawPaid$.
    Overlapping RCQs on the tree $\mathcal{T}$ must occur on the same path, that is, they form an ancestor-descendant relationship, since the children of each node $n$ have non-overlapping ranges.
    Thus, the maximum number of overlapping RCQs across all tree paths, is equal to the $\|\StrategyRawPaid\|_1$.
\end{proof}

\subsubsection{Proof of Lemma~\ref{lemma:proac-invariant}} The lemma states that the proactive strategy $\StrategyRawProactive$ generated by \textsc{generateProactiveStrategy} for an input $\StrategyRawPaid$ satisfies the condition: 
    \begin{equation}
        \label{eqn:proac-invariant}
        \forall \text{ paths } p \in \mathcal{T}, 
        \sum_{v\in p} \mathcal{M}_{\StrategyRawPaid\cup \StrategyRawProactive}(v) 
        \leq     \mathcal{S}_{\StrategyRawPaid}(\mathcal{T}\text{.root}) 
        = \|\StrategyRawPaid\|_{1}
    \end{equation}
\begin{proof}
    %\todo{We present a proof by invariance here.} \xh{I /Miti can add the proof here.}
    This inequality holds for $\StrategyRawProactive=\emptyset$ by applying Lemma~\ref{lemma:subtree-norm-root} and Definition~\ref{def:subtree-norm}.
    \textsc{generateProactiveStrategy} only adds a node to $\StrategyRawProactive$ if it satisfies the following condition on line~\ref{line:proac-condition}, the remaining path has a length less than $r$. At the root, this condition is set to be less than $\|\StrategyRawPaid\|_{1}$.
\end{proof}

\subsubsection{Proof of Theorem~\ref{thm:proactive}}
The theorem states that, given a paid strategy matrix $\StrategyRawPaid$, 
Algorithm~\ref{algo:proactive} outputs $\StrategyRawProactive$ such that $\|\StrategyRawPaid\cup\StrategyRawProactive\|_1=\|\StrategyRawPaid\|_1$. 

\begin{proof}
Applying Lemma~\ref{lemma:subtree-norm-root} and Definition~\ref{def:subtree-norm} for the matrix $\StrategyRawPaid\cup\StrategyRawProactive$: 
\begin{equation}
    \label{eqn:lhs-proactive-lemma-proof}
        \|\StrategyRawPaid\cup\StrategyRawProactive\|_1 =
        \mathcal{S}_{\StrategyRawPaid\cup\StrategyRawProactive}(\mathcal{T}\text{.root}) = 
        \max_{p \in \text{subtree}(v)} \sum_{v\in p} \mathcal{M}_{\StrategyRawPaid\cup\StrategyRawProactive}(v)
\end{equation}
Rephrasing Lemma~\ref{lemma:proac-invariant}:
\begin{equation}
    \label{eqn:rhs-proactive-lemma-proof}
        \max_{p \in \text{subtree}(v)} \sum_{v\in p} \mathcal{M}_{\StrategyRawPaid\cup\StrategyRawProactive}(v) 
        = \|\StrategyRawPaid\|_1
\end{equation}
Using equations~\ref{eqn:lhs-proactive-lemma-proof} and~\ref{eqn:rhs-proactive-lemma-proof}, we get: 
\begin{equation}
    \label{eqn:proactive-lemma-proof-raw-form}
        \|\StrategyRawPaid\cup\StrategyRawProactive\|_1 =
        \|\StrategyRawPaid\|_1
\end{equation}
\end{proof}

% \eat{
%     Applying definition~\ref{thm:transformnorm} for $\StrategyRawPaid$: 
%     \begin{equation}
%         \label{eqn:proactive-lemma-proof-raw-form}
%             \|\StrategyRawPaid\cup\StrategyRawProactive\|_1 =
%             \|\StrategyMatrixPaid\|_1
%     \end{equation}
% }

%\todo{update notation accordingly}

%\subsubsection{Proactive algorithm proof.}
%\label{subsub:proac-proof}

\section{MMM Cache-aware tight bound} \label{app:tight-bound-with-free-paid}
Given the cached noise parameters, we propose a theoretical upper bound \TightNoiseParameter~for the candidate \CandidatePaidNoiseParameter~ required to satisfy the $(\alpha, \beta)$-accuracy guarantee.
\begin{equation}
    \label{eq:tight-b-bound-cache}
    \TightNoiseParameter \leq \frac{\sqrt{\alpha^2 \beta/2 - \|\WorkloadFull \StrategyFull^+Diag(\vec{b_{\StrategyMatrixFree}})\|_F^2}}{\|WA^+Diag(I_{|\StrategyMatrixPaid|})\|_F}
\end{equation}
Here,  $\vec{b_{\StrategyMatrixFree}}=[b_1,\ldots,b_{|\StrategyMatrixFree|}]$. 
In doing so, we generalize Ge et al.'s tight bound $\TightNoiseParameter$ for $\Cache=\phi$ (equation~\ref{eq:mm-loose-noise-param-bound}) to consider cached noise parameters. 

\section{Implementation Details}
We implement an initial prototype of \sysname in Python. We use the source code provided by Ge et al. to evaluate APEx\footnote{\url{https://github.com/cgebest/APEx}}. Since the authors of Pioneer did not publish their code, we implement it from scratch in python following the paper. We implement a simple composite plan using all hyperparamters as described in the paper. We show in Section~\ref{sec:privacy_budget_comp} that our implementation reproduces similar performance to the results given in the original paper~\cite{pioneer}[Section 7.2.1]. We make our full evaluation including our implementation of related work publicly available\footnote{\url{https://git.uwaterloo.ca/m2mazmud/cachedp-public.git}}.

We note that pioneer uses variance based accuracy where as APEx uses $(\alpha, \beta)$ accuracy requirements. This is not a problem for \sysname as we accept both types of accuracy requirement. To run Pioneer on workloads with an $(\alpha, \beta)$ requirement, we use an MC simulation to search for the variance that satisfies the accuracy requirement. To run APEx on workloads with a variance based accuracy requirement we utilize a tail bound on the Laplace distribution (since we find empirically that apex uses the Laplace Mechanism for all single range queries) to get the alpha beta.

\section{Evaluation of the SE heuristics}\label{sec:se_heuristic_analysis}
In this section, we reason about the effectiveness of the strategy expander heuristics, both experimentally and theoretically.
 \paragraph{Experiments:} We first quantify the probability of SE being selected over MMM using the results of our frequency analysis, from Table~\ref{tab:module_usage_counts}. We refer the reader to Section~\ref{sec:evaluation} for all experimental details.
    Out of all paid workloads for which MMM and SE were selected (second and fourth rows of the table), we consider the probability for SE to be selected.  
    SE is overwhelmingly more likely to be chosen over MMM on both BFS and DFS tasks, with the probability of selection being $90\%$ for BFS and $95\%$ for DFS. 
    On the other hand, MMM is around twice as likely to be chosen over SE on both RRQ and IDEBench tasks, as the SE selection probability is $31\%$ for RRQ and $34\%$ for IDEBench.
    We conclude that the likelihood of the SE module being chosen, and thus the success of our heuristics, depends on the workload sequence since it influences the contents of our cache.
    Averaging across all four tasks, the likelihood of the SE module being chosen over MMM, is around $62\%$.
    
\paragraph{Theoretical Analysis:} We analyze the conditions under which our heuristics result in SE module being selected. The SE module is only selected if it has a lower cost than the original strategy in MMM. The privacy cost for our mechanisms is inversely proportional to the error term, for a given $(\alpha, \beta)$ or $\alpha^2$-expected total square error. Thus, our heuristics are only successful if they lead to the error term for the SE module to be smaller than the error term for MMM. 
   
We recall Figure~\ref{fig:strat_expander_coutner_ex} included an example to show that expanding the strategy can lead to an increased error term. 
We analyze why such situations can arise and describe conditions for when our noise parameter-based heuristic can reduce the error. 
Our heuristic to choose $b_{\ell+1}<\PaidNoiseParameter$ strictly improves the error under the following condition: 
\begin{theorem}
Given a workload $\WorkloadFull$, a strategy $\StrategyFull$ of $\ell$ rows, and a noise vector $\BList$, adding a new row to $\StrategyFull$ to form $\StrategyFull_e$
reduces the error, i.e., 
    \begin{equation}\label{eqn:to-prove}
        \|\WorkloadFull\StrategyFull^+Diag(\BList)\|_F^2 \geq \|\WorkloadFull\StrategyFull_e^+Diag(\BList||b_{\ell+1})\|_F^2
    \end{equation} 
if all entries in $\BList$ equal $b^*$ and $b_{\ell+1} < \PaidNoiseParameter \leq b^*$, for any $b^*$.
\end{theorem}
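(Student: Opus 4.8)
The plan is to rewrite both sides of \eqref{eqn:to-prove} in trace form and then reduce the inequality to the single‑noise statement underlying Li et al.'s Theorem~6, combined with the trivial observation that $b_{\ell+1}^2 \le (b^*)^2$. First I would fix notation: let $\StrategyFull$ be $\ell\times n$, let $\mathbf a$ denote the added $(\ell+1)$-st strategy row, and let $\StrategyFull_e=\left[\begin{smallmatrix}\StrategyFull\\ \mathbf a\end{smallmatrix}\right]$. By the full‑rank transformer (Theorem~\ref{thm:fullrank}) we may assume $\StrategyFull$ has full column rank, so $\StrategyFull^\top\StrategyFull$ is invertible and $\StrategyFull^+(\StrategyFull^+)^\top=(\StrategyFull^\top\StrategyFull)^{-1}$; moreover $\StrategyFull_e^\top\StrategyFull_e=\StrategyFull^\top\StrategyFull+\mathbf a^\top\mathbf a$ is still invertible (it Loewner‑dominates $\StrategyFull^\top\StrategyFull\succ 0$) and $\StrategyFull_e$ still supports $\WorkloadFull$, so the error formula of Proposition~\ref{prop:mmm-error} applies to $\StrategyFull_e$ as well. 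Since $\BList=b^*\mathbf 1_\ell$, we have $Diag(\BList)=b^*I_\ell$ and hence $\|\WorkloadFull\StrategyFull^+Diag(\BList)\|_F^2=(b^*)^2\,\mathrm{tr}\big((\StrategyFull^\top\StrategyFull)^{-1}\WorkloadFull^\top\WorkloadFull\big)$.

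Next I would block‑decompose the pseudoinverse columnwise as $\StrategyFull_e^+=[\,P\mid\mathbf q\,]$, where $P$ is $n\times\ell$ and $\mathbf q$ is $n\times 1$; concretely $P=(\StrategyFull_e^\top\StrategyFull_e)^{-1}\StrategyFull^\top$ and $\mathbf q=(\StrategyFull_e^\top\StrategyFull_e)^{-1}\mathbf a^\top$. Then $\WorkloadFull\StrategyFull_e^+Diag(\BList\|b_{\ell+1})=[\,b^*\WorkloadFull P\mid b_{\ell+1}\WorkloadFull\mathbf q\,]$, so the right‑hand side of \eqref{eqn:to-prove} equals $(b^*)^2\|\WorkloadFull P\|_F^2+b_{\ell+1}^2\|\WorkloadFull\mathbf q\|_F^2$. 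At the same time $\|\WorkloadFull\StrategyFull_e^+\|_F^2=\|\WorkloadFull P\|_F^2+\|\WorkloadFull\mathbf q\|_F^2=\mathrm{tr}\big((\StrategyFull_e^\top\StrategyFull_e)^{-1}\WorkloadFull^\top\WorkloadFull\big)$, using $\StrategyFull_e^+(\StrategyFull_e^+)^\top=(\StrategyFull_e^\top\StrategyFull_e)^{-1}$ and cyclicity of the trace. So it suffices to prove
\[
(b^*)^2\,\mathrm{tr}\big((\StrategyFull^\top\StrategyFull)^{-1}\WorkloadFull^\top\WorkloadFull\big)\;\ge\;(b^*)^2\|\WorkloadFull P\|_F^2+b_{\ell+1}^2\|\WorkloadFull\mathbf q\|_F^2 .
\]

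The key step is the Loewner‑order monotonicity of the matrix inverse: from $\StrategyFull_e^\top\StrategyFull_e\succeq\StrategyFull^\top\StrategyFull\succ 0$ we get $(\StrategyFull_e^\top\StrategyFull_e)^{-1}\preceq(\StrategyFull^\top\StrategyFull)^{-1}$, and pairing this with $\WorkloadFull^\top\WorkloadFull\succeq 0$ (using $\mathrm{tr}(XY)\ge 0$ when $X\succeq 0$, $Y\succeq 0$) yields $\mathrm{tr}\big((\StrategyFull^\top\StrategyFull)^{-1}\WorkloadFull^\top\WorkloadFull\big)\ge\mathrm{tr}\big((\StrategyFull_e^\top\StrategyFull_e)^{-1}\WorkloadFull^\top\WorkloadFull\big)$, i.e. $\|\WorkloadFull\StrategyFull^+\|_F^2\ge\|\WorkloadFull P\|_F^2+\|\WorkloadFull\mathbf q\|_F^2$; this is exactly the ``adding a row never hurts'' fact of Li et al.\ (specialized to a common noise scale). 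Multiplying by $(b^*)^2$ and then using $b_{\ell+1}<\PaidNoiseParameter\le b^*$, hence $b_{\ell+1}^2\le(b^*)^2$, to shrink the last coefficient gives $(b^*)^2\|\WorkloadFull\StrategyFull^+\|_F^2\ge(b^*)^2\|\WorkloadFull P\|_F^2+(b^*)^2\|\WorkloadFull\mathbf q\|_F^2\ge(b^*)^2\|\WorkloadFull P\|_F^2+b_{\ell+1}^2\|\WorkloadFull\mathbf q\|_F^2$, which is \eqref{eqn:to-prove}. The only real content is the Loewner monotonicity / trace‑inner‑product argument (or equivalently a correct invocation of Li et al.'s Theorem~6); everything else is bookkeeping, and the hypothesis on $\PaidNoiseParameter$ enters only through the harmless bound $b_{\ell+1}\le b^*$. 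I would expect the main obstacle to be stating the Li et al.\ fact in a form that is provably equivalent to the trace inequality above (their theorem is phrased for the $\StrategyFull$-to-$\StrategyFull_e$ comparison under i.i.d.\ noise), so I would likely just prove the trace inequality directly from the Loewner order as above rather than citing it, to keep the argument self‑contained.
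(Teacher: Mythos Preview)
Your proposal is correct and follows essentially the same two-step skeleton as the paper's proof: factor out $(b^*)^2$ on the left, invoke the ``adding a row never hurts'' inequality $\|\WorkloadFull\StrategyFull^+\|_F^2\ge\|\WorkloadFull\StrategyFull_e^+\|_F^2$, and then use $b_{\ell+1}\le b^*$ to shrink the coefficient on the last column. The only difference is that the paper simply cites Li et al.'s Theorem~6 for the middle inequality, whereas you re-derive it from Loewner monotonicity of the inverse and the trace pairing; your block decomposition of $\StrategyFull_e^+=[P\mid\mathbf q]$ also makes the final step (replacing $(b^*)^2\|\WorkloadFull\mathbf q\|_F^2$ by $b_{\ell+1}^2\|\WorkloadFull\mathbf q\|_F^2$) explicit, where the paper just asserts $\|\WorkloadFull\StrategyFull_e^+Diag(\BList\|b^*)\|_F^2\ge\|\WorkloadFull\StrategyFull_e^+Diag(\BList\|b_{\ell+1})\|_F^2$ without unpacking it.
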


\begin{proof}
    \noindent We recall the following theorem proved by Li et al.~\cite{li2015matrix} in their MM paper.
    \begin{equation}\label{eqn:li-etal-new}
        \|\WorkloadFull\StrategyFull^+\|_F^2 \geq \|\WorkloadFull\StrategyFull_e^+\|_F^2 
    \end{equation}
    %Then, after some rearranging, we obtain: 
    For $\BList=[b^* \cdots b^*]$, we transform  Equation~\ref{eqn:to-prove} to:
    \begin{eqnarray}
    \|\WorkloadFull\StrategyFull^+Diag(\BList)\|_F^2 &=& \|\WorkloadFull\StrategyFull^+(b^* I)\|_F^2 = (b^*)^2\|\WorkloadFull\StrategyFull^+\|_F^2\\
    &\geq& (b^*)^2\|\WorkloadFull\StrategyFull_e^+\|_F^2 \\
    &=&\|\WorkloadFull\StrategyFull_e^+Diag(\BList||b^*)\|_F^2\\
    &\geq&\|\WorkloadFull\StrategyFull_e^+Diag(\BList||b_{\ell+1})\|_F^2 
    \end{eqnarray}

where the first inequality comes from applying Li et al.'s result (Equation~\ref{eqn:li-etal-new}) and the final inequality by applying the condition: $b_{\ell+1} \leq b^*$.
  \end{proof} 

When the noise parameters are not all equal, the sufficiency conditions become more complicated. For example, if we modify the above counterexample to have a slightly more accurate expanded row ($b_{\ell+1}=3b<4b$), we get a lower error than MMM:
\[
\|\WorkloadFull_{1}\StrategyFull_{1}^+ diag(\BList_{1})\| = 28b^2 ,\
    \|\WorkloadFull_{1}\StrategyFull_{1e}^+ diag(\BList_{1e})\| = 26.5b^2
    \]
Thus our greedy heuristic, which does not simply add entries less than $b_p$, but prioritizes the most accurate rows first, would be effective in this scenario. We also observe that distribution of the existing noise parameters in $\BList$ is important. If the noise parameters are all close to each other, strategy expansion is likely to be more beneficial. For example changing the second entry of $\BList_{1}$ from $b$ to $2b$ (i.e. $\BList_{1e} = [b, 2b, 5b, 4b]$), results in a smaller error than MMM:
\[
    \|\WorkloadFull_{1}\StrategyFull_{1}^+ diag(\BList_{1})\| = 31.0b^2 ,\
    \|\WorkloadFull_{1}\StrategyFull_{1e}^+ diag(\BList_{1e})\| = 30.2b^2
\]

%It is evident that the noise parameters are important to determining the success of the SE module. 
%However, the structure of the new queries is also important. 
It is evident that both the distribution of the noise parameters in $\BList$ and the new noise parameter $b_{\ell+1}$, influence the success of our accuracy heuristic. %the SE module. 
Additionally, we observe that a newly added row could satisfy our final heuristic by being a parent or a child of \emph{any} of the existing rows in $\StrategyFull_{1}$. We find that the structure of the expanded strategy $\StrategyFull_{1e}$ also significantly influences the error term. 
For example, changing the final row in $\StrategyFull_{1e}$ from $(1,1,1)$ to $(1,0,1)$ or $(0,1,1)$ also reduces the error for $\StrategyFull_{1e}$ %causes the strategy expansion to reduce the error
to $26.5b^2$ and $20b^2$ respectively, under the exact same noise parameters as our counterexample. However, changing the final row to $(1,1,0)$ increases the error to $34.7b^2$. %causes an increase in

\paragraph{Conclusion:} We observe that whether an additional row selected by our noise parameter heuristics will succeed in decreasing the error for SE over MMM, depends on how the new row changes elements in $\WorkloadFull\StrategyFull_e^+$. However, we can guarantee that when the noise parameters in $\BList$ are sufficiently similar, strategy expander will reduce the error regardless of the structure. Our structure-based heuristic only allows strategy queries related through a parent-child relationship to an existing strategy query in $\StrategyFull$. Future research may model the success of this heuristic, by analyzing the relation between $\WorkloadFull\StrategyFull^+$ and $\WorkloadFull\StrategyFull_e^+$ for $\StrategyFull$ and $\StrategyFull_e$ that differ by a parent or child row.

%%%%%%%%%%%%%%%%%%%%%%%%

\eat{
We define $\mathcal{T}(n,\ell)$ as the total number of marked nodes in the unique path $\phi$ from node $r$ to node $n$:   
\begin{equation*}
    \label{eqn:proac-marked-nodes-from-n-to-ell}
    \mathcal{T}(r, n) = \sum_{k=r \in \phi}^{n \in \phi}\mathcal{M}(k)
\end{equation*}
\noindent The subtree sensitivity can thus be defined as: 
\begin{equation*}
    \label{eqn:subtree-sensitivity-proac}
    \mathcal{S}(n) = \max\limits_{\forall\ell}\mathcal{T}(n, \ell)
\end{equation*}
\noindent We can thus define the maximum number of marked nodes in any path that includes node $r$ and node $n$, as the sum of the two terms above: 
\begin{equation*}
    \label{eqn:proac-total-marked-nodes-in-path-between-two-nodes}
    \mathcal{V}(r,n) = \mathcal{T}(r,n) + \mathcal{S}(n) 
\end{equation*}
\noindent We may note that: 
\begin{equation*}
    \label{eqn:proac-sensitivity-in-terms-of-other-terms}
    \|\StrategyMatrixPaid\|_{1} = \max\limits_{\forall r, n}\mathcal{V}(r,n)
\end{equation*}
\noindent In analogy with $\mathcal{M}(k)$, we define $\mathcal{L}(k)$ as a binary function, depending on whether the node will be fetched proactively, that is, included in the output proactive list $\mathcal{L}$ in Algorithm~\ref{algo:proactive}. 
We can thus define the total number of nodes to be fetched proactively in the unique path $\phi$ from node $n$ to node $\ell$:   
\begin{equation*}
    \label{eqn:proac-proactive-nodes-from-n-to-ell}
    \mathcal{P}(n,\ell) = \sum_{k=n \in \phi}^{\ell \in \phi}\mathcal{L}(k)
\end{equation*}

\noindent The remaining number of nodes that can be fetched proactively, following a path from the root $r$ to node $n$ is:
\begin{equation*}
    \label{eqn:proac-mathcalR}
    \mathcal{R}(r,n) = \|\StrategyMatrixPaid\|_{1} - \mathcal{T}(r,n) - \mathcal{P}(r,n) - \mathcal{M}(n)
\end{equation*}

\noindent To add a node to the proactive list $\mathcal{L}$, the conditions in line~\ref{line:proac-condition} require that:
\begin{align*}
    \mathcal{R}(r,n) &> \mathcal{S}(n) \\
    \|\StrategyMatrixPaid\|_{1} - \mathcal{T}(r, n) - \mathcal{P}(r, n) &> \max\limits_{\forall\ell} \mathcal{T}(n, \ell) 
\end{align*}

\noindent Since line~\ref{line:proac-condition} also requires that the node $n$ be unmarked for it to be added to the proactive list, we set $\mathcal{M}(n)=0$, and obtain: 
\begin{equation*}
    \|\StrategyMatrixPaid\|_{1} > \mathcal{P}(r, n) + \mathcal{V}(r,n)
\end{equation*}
\noindent Thus, the sum of the number of proactive nodes from the root $r$ to node $n$ and the maximum number of marked nodes in any path that includes these nodes, is less than the sensitivity of \StrategyMatrixPaid. We can thus safely add the unmarked node $n$ to the output proactive list. 
}
\end{techreport}
\end{document}